\documentclass[12pt]{article}
\usepackage{bm,amsmath,amsthm,amssymb,paralist,graphicx,paralist,setspace}
\usepackage[round,sort&compress]{natbib}
\usepackage{authblk}
\usepackage{enumitem}
\usepackage{graphicx}
\usepackage{color}

\usepackage[letterpaper,left=0.8in,right=0.8in,top=0.75in,bottom=0.75in]{geometry}

\allowdisplaybreaks[1]

\usepackage{subfigure}
\usepackage{threeparttable}
\usepackage{booktabs}
\usepackage[colorlinks]{hyperref}

\onehalfspacing

\DeclareMathOperator{\lasso}{\,LASSO}
\DeclareMathOperator{\en}{\,EN}
\DeclareMathOperator{\gl}{\,GL}

\DeclareMathOperator{\pc}{PC}
\DeclareMathOperator{\kggc}{KGGC}
\DeclareMathOperator{\lars}{\,lars}

\DeclareMathOperator{\SLOG}{\,SLOG}
\DeclareMathOperator{\Lai}{\,Lai}
\DeclareMathOperator{\rSLOG}{\,rSLOG}

\DeclareMathOperator{\CD}{\,CD}
\DeclareMathOperator{\Diag}{Diag}

\DeclareMathOperator{\sth}{sth}
\DeclareMathOperator*{\argmin}{arg\,min}
\DeclareMathOperator*{\argmax}{arg\,max}
\DeclareMathOperator{\sign}{sign}
\DeclareMathOperator{\rank}{rank}

\newcommand{\reals}{\mathbb{R}}
\newcommand{\oneto}[1]{\{1,\ldots,#1\}}

\newcommand{\bbeta}{\bm{\beta}}
\newcommand{\bh}{\hat{\beta}}

\newcommand{\bhl}[1]{\hat{\beta}^{\lasso}_{\,#1}}
\newcommand{\bbhl}{\hat{\bm{\beta}}^{\lasso}}

\newcommand{\simiid}{\stackrel{\text{iid}}{\sim}}
\newcommand{\simind}{\stackrel{\text{ind}}{\sim}}

\newcommand{\normof}[1]{\left|\left|#1\right|\right|}

\newcommand{\bbhen}{\hat{\bm{\beta}}^{\en}}
\newcommand{\bbhgl}{\hat{\bm{\beta}}^{\gl}}

\usepackage{setspace}
\onehalfspacing

\theoremstyle{plain}
\newtheorem{thm}{Theorem}
\newtheorem{lem}[thm]{Lemma}

\theoremstyle{definition}

\theoremstyle{remark}
\newtheorem*{rem}{Remark}

\title{%
\vspace*{-3em}%
Lasso Regression: Estimation and Shrinkage via Limit of Gibbs Sampling}


\author[1*]{Bala Rajaratnam}
\author[2]{Steven Roberts}
\author[1]{Doug Sparks}
\author[1]{Onkar Dalal}
\affil[1]{Stanford University}
\affil[2]{Australian National University}
\affil[*]{Department of Statistics, Stanford University Stanford, CA 94305

brajarat@stanford.edu}

\begin{document}

\maketitle
\thispagestyle{empty}

\begin{abstract}
The application of the lasso is espoused in high-dimensional settings where only a small number of the regression coefficients are believed to be nonzero (i.e., the solution is sparse). Moreover, statistical properties of high-dimensional lasso estimators are \textcolor{black}{often} proved under the assumption that the correlation between the predictors is bounded.
In this vein,
coordinatewise methods, the most common means of computing the lasso solution, naturally work well
in the presence of low to moderate multicollinearity.
The computational \textcolor{black}{speed of} coordinatewise algorithms, while
excellent
for sparse
and low to moderate multicollinearity settings,
degrades as sparsity decreases and multicollinearity increases. Though lack of sparsity and high multicollinearity can be quite common in contemporary applications, model selection is still a necessity in such settings. Motivated by the limitations of coordinatewise algorithms in \textcolor{black}{such} ``non-sparse'' and ``high-multicollinearity'' settings, we propose the novel \textcolor{black}{``Deterministic Bayesian Lasso"} algorithm for computing the lasso solution. This algorithm is developed by considering a limiting version of the Bayesian lasso. In contrast to coordinatewise algorithms, the performance of the
Deterministic Bayesian Lasso
improves as sparsity decreases and
multicollinearity
increases. Importantly, in non-sparse and \textcolor{black}{high-multicollinearity} settings the proposed algorithm can offer substantial increases in computational speed over coordinatewise algorithms. A rigorous theoretical analysis demonstrates that
\color{black}
(1)~%
the Deterministic Bayesian Lasso algorithm converges to the lasso solution,
\color{black}
and (2)~it leads to a representation of the lasso estimator
which shows how it
achieves both $\ell_1$ and $\ell_2$ types of shrinkage simultaneously. Connections between the
Deterministic Bayesian Lasso
and \textcolor{black}{other algorithms are also provided.
The benefits of the Deterministic Bayesian Lasso algorithm are then illustrated on simulated and real data.}

\end{abstract}


\setcounter{page}{1}
\section{Introduction}
The process of estimating regression parameters subject to a penalty on the $\ell_1$-norm of the parameter estimates, known as the lasso \citep{tibshirani1996}, has become ubiquitous in modern statistical applications. In particular, in settings of low to moderate multicollinearity where the solution is believed to be sparse, the application of the lasso is almost \emph{de rigueur}. Outside of the sparse and low to moderate multicollinearity setting the performance of the lasso is suboptimal \citep{HZTH}. In this vein, many of the theoretical and algorithmic developments for the lasso assume and/or cater to a sparse estimator in the presence of low to moderate multicollinearity. A prime example of this phenomenon is coordinatewise algorithms, which have become the most common means of computing the lasso solution. The performance of coordinatewise algorithms, while ideal for sparse and low to moderate correlation settings, degrades as sparsity decreases and multicollinearity increases.
However, the model selection capabilities of the lasso can still be essential even in the presence of high multicollinearity or in the absence of sparsity.  The limitations of coordinatewise algorithms in such settings motivate us to propose
in this paper
the novel Deterministic Bayesian Lasso algorithm for computing the lasso solution.
The performance of this proposed algorithm improves as sparsity \textcolor{black}{decreases} and multicollinearity increases, and hence our approach offers substantial advantages over coordinatewise techniques in such settings.

The popularity of the lasso comes despite the inability to express the lasso estimator in any convenient closed form.
Hence, there is keen interest
in algorithms capable of efficiently computing the lasso solution \citep{efron2004, friedman2007,osborne2000lasso}. Arguably, the two most well known algorithms for computing the lasso solution are \emph{least angle regression} \citep{efron2004} and
the even faster
\emph{pathwise coordinate optimization} \citep{friedman2007}. Least angle regression (LARS) can be viewed as a form of stagewise regression. By exploiting the geometry of the lasso problem, LARS is able to efficiently compute the entire sequence of lasso solutions. Pathwise coordinate optimization is based on the idea of cycling through the coefficients and minimizing the objective function ``one coefficient at a time", while holding the other coefficients fixed. \textcolor{black}{Since} it has been shown to be considerably faster than competing methods, including LARS \citep{JFTHRT}, pathwise coordinate optimization is today the most commonly utilized algorithm for computing lasso solutions. While pathwise coordinate optimization is generally a fast and efficient algorithm for computing the lasso solution, the algorithm is not without limitations. In particular, the computational speed of pathwise coordinate optimization degrades as sparsity decreases and multicollinearity increases \citep{JFTHRT}. 

In addition to the efficient computation of the lasso solution, the development of methods for quantifying the uncertainty associated with lasso coefficient estimates has proved difficult \citep{park2008}. \textcolor{black}{The difficulty primarily relates to assigning measures of uncertainty to (exact) zero lasso coefficient estimates.} The recently developed Bayesian lasso \citep{park2008} addresses this issue
by
natural and economical uncertainty quantification, in the form of posterior credible intervals. \textcolor{black}{The Bayesian lasso is based on the observation of \citet{tibshirani1996} that the lasso can be interpreted as a maximum a~posteriori Bayesian procedure under a double-exponential prior.} In their development of the Bayesian lasso, \citet{park2008} expressed the double exponential prior as a mixture of normals and 
derived a Gibbs sampler for generating from the posterior.

In this paper, we exploit the structure of the Bayesian lasso and its corresponding Gibbs sampler, not for uncertainty quantification, but rather for the computation of the lasso point estimate itself.  Our approach is predicated upon the role played by the sampling variance in the lasso problem, commonly denoted
by~$\sigma^2$.
Importantly, the lasso objective function does not depend on~$\sigma^2$, and hence neither does the lasso solution. The sampling variance~$\sigma^2$ does, however, play a role in the Bayesian lasso posterior. The value of~$\sigma^2$  essentially controls the spread of the posterior around its mode. Hence, if $\sigma^2$ is small, the posterior will be tightly concentrated around its mode, \textcolor{black}{and thus} is close to the lasso solution. This
implies
that the (Bayesian lasso) Gibbs sampler with a small, fixed value of~$\sigma^2$ will yield a sequence that is tightly concentrated around the lasso solution. We also note that: (1) the lasso solution is \emph{exactly} the mode of the marginal posterior of the regression coefficients, and (2) the mode of the joint posterior of the
\textcolor{black}{regression coefficients and hyperparameters}
used by the Gibbs sampler differs from the lasso solution by a distance proportional to~$\sigma^2$.
For computation of the lasso point estimate, the relevance of the discussion in the \textcolor{black}{immediately} preceding paragraph is realized by the fact that in the limit as $\sigma^2\to0$ the Gibbs sampler reduces to a deterministic sequence. Moreover, the limit of this deterministic sequence \textcolor{black}{can be shown to be} the lasso solution. This realization motivates
our
Deterministic Bayesian Lasso
algorithm for computing the lasso point estimate.

A rigorous theoretical analysis demonstrates that
\color{black}
(1)~the
Deterministic Bayesian Lasso
converges to the lasso solution with probability 1
\color{black}
and, (2)~it leads to a representation of the lasso estimator that demonstrates how it achieves both $\ell_1$ and $\ell_2$ types of shrinkage simultaneously. Connections between the
Deterministic Bayesian Lasso
and the EM algorithm, and modifications to the
Deterministic Bayesian Lasso
for the purposes of computing other lasso-like estimators are also provided.
%
We also study the connections between our proposed algorithm and Iteratively Re-weighted Least Squares, an approach motivated by optimization.
\color{black}
The probabilistic underpinning of our proposed methodology provides, (1)~a theoretical backing for our proposed procedure and, (2)~a means of avoiding certain technical difficulties that optimization methods in the literature have to contend with.
\color{black}

\color{black}
Further, it will be demonstrated, via simulation and real data analysis, that in non-sparse and/or high-multicollinearity settings the Deterministic Bayesian Lasso has computational
advantages over coordinatewise algorithms. Such non-sparse and high-multicollinearity settings are highly prevalent in high dimensional and big data applications. \color{black} In the remainder of the paper, in reference to its motivation, and for brevity, we shall
interchangeably refer to the Deterministic Bayesian Lasso framework \textcolor{black}{by its acronym
SLOG: \emph{Shrinkage via Limit of Gibbs Sampling}.}

\color{black}
We note that one of the goals of the paper is to obtain a faster means of calculating the lasso solution in high multicollinearity and/or low sparsity settings. There are of course other computationally fast methods for high dimensional regression including ``2-step methods" such as thresholding and then regressing (``marginal regression"), or Bayesian variants such as SSVS. Besides the lasso, these  ``2-step methods" methods are also useful, and have their respective strengths. One of the primary advantages of the lasso is that the chance of bringing in many predictors, which have low predictive power in the presence of other covariates, is relatively less.\color{black}

\section{%
Methodology
}
\label{sec:theory}


\subsection{%
The Lasso and the
Bayesian Lasso Posterior}

Our developments assume that we are in the standard regression model setting with a length-$n$ response vector $\bm{y}$ that is centered ($\bm{1}_n^T\bm{y}=0$) and has distribution
\begin{align*}
\bm{y}\sim N_n(\bm{X}\bbeta,\sigma^2\bm{I}_n),
\end{align*}
where
$\bm{X}$ is the $n\times p$ design matrix and
without loss of generality
$\sigma^2>0$ is assumed known.
We assume that the columns of~$\bm{X}$ have been standardized such that $\sum_{i=1}^n X_{ij}^2=n$ for each $j\in\oneto{p}$.  The frequentist lasso estimator \citep{tibshirani1996} of the coefficient vector~$\bbeta$ is
\begin{align}
\bbhl=\argmin_{\bbeta\in\reals^p}\left(||\bm{y}-\bm{X}\bbeta||_2^2+2\lambda||\bbeta||_1\right),
\label{lasso-def}
\end{align}
where $||\cdot||_r$ denotes the usual $L_r$ vector norm and $\lambda>0$ is the regularization parameter.
Here it should be noted that if $\rank(\bm{X})=p$, then the minimization in~(\ref{lasso-def}) is strictly convex, and hence $\bbhl$ is unique.  However, if $\rank(\bm{X})<p$, which for example is necessarily true when $p>n$, then there may be uncountably many solutions which achieve the minimization in~(\ref{lasso-def}), i.e., $\bbhl$ may not be uniquely defined.  Nevertheless, uniqueness of~$\bbhl$ can still be obtained when $\rank(\bm{X})<p$ under quite mild conditions.
For instance, \citet{tibshirani2013} showed that $\bbhl$ is unique if the columns of~$\bm X$ are in a state called
``general position.''
In turn, a simple sufficient condition for the columns of~$\bm X$ to be in general position
is that the entries of~$\bm{X}$
are
drawn from a distribution
that is
absolutely continuous with respect to Lebesgue measure
on~$\reals^{n\times p}$~~\citep{tibshirani2013}.
We
will henceforth make the
assumption that
the columns of~$\bm X$ are in general position
(henceforth referred to as Assumption~1),
which
in turn
implies that $\bbhl$ is unique.
%
%
%
%
Note that
another
consequence of
Assumption~1
is that the solution to the lasso problem for any subset of the columns of~$\bm{X}$ is also unique,
since the columns in the subset are also in general position.

The lasso estimator~$\bbhl$ may be interpreted from a Bayesian perspective as
\begin{align*}
\bbhl=\argmax_{\bbeta\in\reals^p}\;\pi(\bbeta\mid\bm{y}),
\end{align*}
where $\pi(\bbeta\mid\bm{y})$ is the posterior distribution of~$\bbeta$ under the Bayesian model
\begin{align*}
\bm{y}\mid\bbeta&\sim N_n(\bm{X}\bbeta,\sigma^2\bm{I}_n)\\
\beta_j&\simiid\text{DoubleExp}(\lambda/\sigma^2).
\end{align*}
Note that the double exponential distribution may be expressed as a scale mixture of normals \citep[e.g.,][]{andrews1974}.  Hence, the Bayesian model above may be rewritten as the hierarchical model
\begin{align}
\bm{y}\mid\bbeta&\sim N_n(\bm{X}\bbeta,\sigma^2\bm{I}_n)\notag\\
\beta_j\mid\omega_j&\simind N(0,\sigma^2\omega_j)\label{lasso-model}\\
\omega_j&\simiid\text{Exp}(\lambda^2/2\sigma^2),\notag
\end{align}
which is popularly referred to as the Bayesian lasso~\citep{park2008}.
Here it should be explicitly noted that our hierarchy appears to differ from that of \citeauthor{park2008}.  However, it can be seen that the two representations are in fact equivalent by noting that our regularization parameter~$\lambda$ and the regularization parameter $\lambda^{\pc}$ of \citeauthor{park2008} are related according to $\lambda=\lambda^{\pc}\sqrt{\sigma^2}$, and we take $\sigma^2$ as known.
Under our model~(\ref{lasso-model}), the joint posterior is then
\begin{align}
\pi(\bbeta,\bm{\omega}\mid\bm{y})&\propto
f(\bm{y}\mid\bbeta)\;\left[\prod_{j=1}^p\pi(\beta_j\mid\omega_j)\right]\;\left[\prod_{j=1}^p\pi(\omega_j)\right]\notag\\
&\propto\exp\left(-\frac{1}{2\sigma^2}||\bm{y}-\bm{X}\bbeta||_2^2\right)\;
\left[\prod_{j=1}^p\omega_j^{-1/2}\right]
\exp\left(-\frac{1}{2\sigma^2}\sum_{j=1}^p\beta_j^2\omega_j^{-1}\right)\notag\\
&\qquad\times\exp\left(-\frac{\lambda^2}{2\sigma^2}\sum_{j=1}^p\omega_j\right)\notag\\
&\propto\exp\left[-\frac{1}{2\sigma^2}\left(
\normof{\bm{y}-\bm{X}\bbeta}_2^2+\sum_{j=1}^p\beta_j^2\omega_j^{-1}+\lambda^2\sum_{j=1}^p\omega_j+\sigma^2\sum_{j=1}^p\log\omega_j\right)\right].
\label{posterior}
\end{align}
For convenience, let $g(\bbeta,\bm{\omega})$ denote the quantity in parentheses in
the last line of~(\ref{posterior}).

\citet{park2008} used the joint posterior~(\ref{posterior}) to derive a Gibbs sampler for drawing from the joint lasso
posterior. The
convergence properties of such a sequence were
subsequently
investigated by \citet{kyung2010}.
The above
Gibbs sampler cycles through
the conditionals
\begin{align}
\bbeta\mid\bm{\omega},\bm{y}&\stackrel{\text{iid}}{\sim}N_p\left[\left(\bm{X}^T\bm{X}+\bm{\Omega}^{-1}\right)^{-1}\bm{X}^T\bm{y},\;
\sigma^2\left(\bm{X}^T\bm{X}+\bm{\Omega}^{-1}\right)^{-1}\right],
\label{conditional-beta}\\
\omega_j^{-1}\mid\bbeta,\bm{y}&\stackrel{\text{iid}}{\sim}\begin{cases}
\text{InverseGaussian}(\lambda/|\beta_j|,\;\lambda/\sigma^2) & \text{ if }\beta_j\ne0,\\
\text{InverseGamma}(1/2,\;\lambda^2/2\sigma^2) & \text{ if }\beta_j=0,
\end{cases}
\label{conditional-omega}
\end{align}
where $\bm{\Omega}=\Diag(\omega_1,\ldots,\omega_p)$, and where we may replace $\left(\bm{X}^T\bm{X}+\bm{\Omega}^{-1}\right)^{-1}$ by the alternative expression $\bm{\Omega}^{1/2}\left(\bm{I}_p+\bm{\Omega}^{1/2}\bm{X}^T\bm{X}\bm{\Omega}^{1/2}\right)^{-1}\bm{\Omega}^{1/2}$ whenever an element of $\bm{\omega}$ is zero.

\subsection{The Deterministic Bayesian Lasso Algorithm}

The Gibbs sampler of \citet{park2008}
was motivated by its ability to provide credible intervals for the lasso estimates.
However,
we discovered that
the particular form of the conditionals~(\ref{conditional-beta})~and~(\ref{conditional-omega}) that comprise the Gibbs sampler suggests a novel method for calculating the lasso
point estimate
itself.  Specifically, notice that as $\sigma^2\to0$, the conditional distribution of $\bbeta\mid\bm{\omega},\bm{y}$ given in~(\ref{conditional-beta}) converges to degeneracy at its mean $(\bm{X}^T\bm{X}+\bm{\Omega}^{-1})^{-1}\bm{X}^T\bm{y}$.  Similarly, the conditional distribution of $\omega_j\mid\bbeta,\bm{y}$ given in~(\ref{conditional-omega}) converges to degeneracy at $\lambda^{-1}|\beta_j|$
\color{black}
in both the $\beta_j=0$ and $\beta_j\ne0$ cases.
For the $\beta_j=0$ case, note that if $U_m\sim\text{InverseGamma}(1/2,\,c_m)$ with $c_m\to\infty$ as $m\to\infty$, then $U_m\to\infty$ in probability as $m\to\infty$, which in turn implies that $U_m^{-1}\to0$ as $m\to\infty$.  (The $\beta_j\ne0$ case is clear from the properties of the inverse Gaussian distribution.)
\color{black}

Thus, in the limit as $\sigma^2\to0$, the Bayesian lasso Gibbs sampler reduces to a deterministic sequence~$\{(\bm{b}^{(k)},\bm{w}^{(k)}):k\ge0\}$ given by
\begin{align*}
w^{(k)}_j&=\lambda^{-1}|b^{(k)}_j|
\qquad\text{ for each }j\in\{1,\ldots,p\},
\\
\bm{b}^{(k+1)}&=(\bm{W}^{(k)})^{1/2}\left[
\bm{I}_p
+(\bm{W}^{(k)})^{1/2}\bm{X}^T\bm{X}(\bm{W}^{(k)})^{1/2}\right]^{-1}(\bm{W}^{(k)})^{1/2}\bm{X}^T\bm{y},
\end{align*}
where $\bm{W}^{(k)}=\Diag(w^{(k)}_1,\ldots,w^{(k)}_p)$, and where $\bm{b}^{(0)}$ is some specified starting point.  Substituting the form of $\bm{w}^{(k)}$ into the equation for $\bm{b}^{(k+1)}$ yields
\begin{align}
\bm{b}^{(k+1)}=(\bm{B}^{(k)})^{1/2}\left[\lambda\bm{I}_p+(\bm{B}^{(k)})^{1/2}\bm{X}^T\bm{X}(\bm{B}^{(k)})^{1/2}\right]^{-1}(\bm{B}^{(k)})^{1/2}\bm{X}^T\bm{y},\label{sequence-zeros}
\end{align}
where $\bm{B}^{(k)}=\Diag(|b_1^{(k)}|,\ldots,|b_p^{(k)}|)$.  Note that if every component of~$\bm{b}^{(k)}$ is nonzero, then we may replace~(\ref{sequence-zeros}) by the simpler representation
\begin{align}
\bm{b}^{(k+1)}
&=
\left[\bm{X}^T\bm{X}+\lambda\left(\bm{B}^{(k)}\right)^{-1}\right]^{-1}\bm{X}^T\bm{y}.\label{sequence}
\end{align}
Suppose the starting point~$\bm{b}^{(0)}$ is drawn randomly from some distribution~$P_0$ on~$\reals^p$, where $P_0$ is absolutely continuous with respect to Lebesgue measure on~$\reals^p$.  Then under mild regularity conditions,
$\bm b^{(k)}\to\bbhl$ as $k\to\infty$ with $P_0$-probability~$1$,
\color{black}
where $P_0$-probability simply denotes probability under the distribution~$P_0$ from which the starting point $\bm b^{(0)}$ is drawn.
\color{black}
This result
will be shown in Section~\ref{subsec:properties}.
Thus, the recursive sequence given by~(\ref{sequence-zeros}) or~(\ref{sequence}),
which we call the SLOG algorithm,
provides a straightforward method of calculating~$\bbhl$ that holds
\color{black}
regardless of the values of $p$ and $n$. From~\eqref{sequence} it is observed that each iteration of SLOG requires the inversion of a $p \times p$ matrix. This inversion can become unduly time consuming in high dimensions. In Section~\ref{sec:rslog} a variant of SLOG is developed that successfully overcomes this problem. This variant, termed rSLOG, is able to rapidly reduce the size of the matrix that needs inverting at each iteration of SLOG.
\color{black}

Essentially, the SLOG
algorithm
may be interpreted as providing the $\bbeta$ components of a Gibbs sampler in its degenerate limit as $\sigma^2\to0$.
Some intuition for this connection may be gained by noting that the lasso estimator does not depend on the value of~$\sigma^2$.  Thus, for the purposes of finding the lasso estimator, the value of~$\sigma^2$ may be taken as
any value that
may be convenient.  Now observe from the form of the joint posterior~(\ref{posterior}) that the smaller the value of~$\sigma^2$, the more concentrated the posterior is around its mode.  (It should be noted that the lasso estimator is the mode of the marginal posterior, and the modes of the joint and marginal posteriors do not coincide.  However, they do coincide in their limits as $\sigma^2\to0$.)  Thus, the Gibbs sampler can be made arbitrarily closely concentrated around the lasso solution by taking the value of $\sigma^2$ small enough.  The SLOG algorithm simply carries this idea to its limiting conclusion by ``sampling'' directly from the degenerate limits of the conditional distributions. \color{black} An annealing type variant to SLOG, where the cycles of the Gibbs sampler are based on a decreasing $\sigma^2$ sequence, is investigated in Supplemental Section~\ref{sec:AppE}. \color{black}

\subsection{%
Alternative Representations
and Fixed-Point Results}
\label{subsec:properties}

In this section, we provide theoretical results to justify the use of the 
SLOG
algorithm for calculation of~$\bbhl$.

\subsubsection{Alternative Representation}

The
Deterministic Bayesian Lasso
algorithm has already been written in both a general form~(\ref{sequence-zeros}) and a simpler form~(\ref{sequence}), with the simpler form only applicable in the absence of components that are exactly zero.  (Note that the relevant issue is zeros in the components~$b_j^{(k)}$ of the sequence~$\bm{b}^{(k)}$ generated by the SLOG
algorithm.
Zeros in the components of~$\bbhl$ itself are irrelevant.)
In fact, we will eventually show in Lemma~\ref{lem:no-zeros} that we may use the simpler form~(\ref{sequence}) for all $k\ge0$ with $P_0$-probability~1.  However, a more general form that can be applied for any point in $\reals^p$ will still be useful.  The following lemma introduces a somewhat more intuitive representation of~(\ref{sequence-zeros}).  We first define some additional notation.  For each $k\ge0$,
let $\gamma^{(k)}=\{j:b^{(k)}_j\ne0\}$
denote the set of indices of the nonzero components of $\bm b^{(k)}$,
and let $\bm{X}_\star$ denote the matrix formed by retaining the $j$th column of~$\bm{X}$ if and only if $j\in\gamma^{(k)}$.  Similarly, let $\bm{b}_\star^{(k)}$ be the vector formed by retaining the $j$th element of~$\bm{b}^{(k)}$ if and only if $j\in\gamma^{(k)}$, and let $\bm{B}^{(k)}_\star$ be the diagonal matrix with the absolute values of the elements of~$\bm{b}^{(k)}_\star$ on its diagonal.  Also, let
$\bm{b}_{\star\star}^{(k+1)}$
denote the vector formed by retaining the $j$th element of~$\bm{b}^{(k+1)}$ if and only if $j\in\gamma^{(k)}$.  (Note that
$\bm{b}_{\star\star}^{(k+1)}$
is selected according to~$\gamma^{(k)}$, not $\gamma^{(k+1)}$.)  Then we have the following result.

\begin{lem}\label{lem:partition}
$
\bm{b}_{\star\star}^{(k+1)}
=\left[\bm{X}_\star^T\bm{X}_\star+\lambda(\bm{B}_\star^{(k)})^{-1}\right]^{-1}\bm{X}_\star^T\bm{y}$,
and $b^{(k+1)}_j=0$ for each $j\notin\gamma^{(k)}$.
\end{lem}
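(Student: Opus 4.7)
The plan is to exploit the block structure of the general recursion~(\ref{sequence-zeros}) after permuting coordinates so that the active indices $j\in\gamma^{(k)}$ come first and the inactive indices $j\notin\gamma^{(k)}$ come second. Since $b_j^{(k)}=0$ precisely when $j\notin\gamma^{(k)}$, the factor $(\bm{B}^{(k)})^{1/2}$ is block diagonal with upper-left block $(\bm{B}_\star^{(k)})^{1/2}$ and a zero lower-right block. Partitioning $\bm{X}$ conformably as $[\bm{X}_\star\mid\bm{X}_\circ]$, where $\bm{X}_\circ$ collects the inactive columns, the product $(\bm{B}^{(k)})^{1/2}\bm{X}^T\bm{X}(\bm{B}^{(k)})^{1/2}$ vanishes everywhere except in the upper-left block, where it equals $(\bm{B}_\star^{(k)})^{1/2}\bm{X}_\star^T\bm{X}_\star(\bm{B}_\star^{(k)})^{1/2}$. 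Consequently the bracketed matrix $\lambda\bm{I}_p+(\bm{B}^{(k)})^{1/2}\bm{X}^T\bm{X}(\bm{B}^{(k)})^{1/2}$ is block diagonal.

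Next, I would invert this bracketed matrix blockwise. Its lower block is $\lambda\bm{I}$, but its contribution is immediately killed because $(\bm{B}^{(k)})^{1/2}\bm{X}^T\bm{y}$ has a zero lower block as well. Multiplying out $(\bm{B}^{(k)})^{1/2}\left[\,\cdot\,\right]^{-1}(\bm{B}^{(k)})^{1/2}\bm{X}^T\bm{y}$ with this block structure yields both conclusions at once: for $j\notin\gamma^{(k)}$, the entry $b_j^{(k+1)}$ is forced to zero, while for the active coordinates, $\bm{b}_{\star\star}^{(k+1)}=(\bm{B}_\star^{(k)})^{1/2}\left[\lambda\bm{I}+(\bm{B}_\star^{(k)})^{1/2}\bm{X}_\star^T\bm{X}_\star(\bm{B}_\star^{(k)})^{1/2}\right]^{-1}(\bm{B}_\star^{(k)})^{1/2}\bm{X}_\star^T\bm{y}$.

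Finally, I would simplify the active block. Because the diagonal entries of $\bm{B}_\star^{(k)}$ are strictly positive by construction, $\bm{B}_\star^{(k)}$ is invertible, and I may conjugate the bracketed matrix inside the inverse by $(\bm{B}_\star^{(k)})^{-1/2}$ on both sides; the outer $(\bm{B}_\star^{(k)})^{1/2}$ factors are absorbed into the inverse, yielding the claimed form $\bm{b}_{\star\star}^{(k+1)}=\left[\bm{X}_\star^T\bm{X}_\star+\lambda(\bm{B}_\star^{(k)})^{-1}\right]^{-1}\bm{X}_\star^T\bm{y}$.

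There is no serious obstacle; the lemma is essentially careful bookkeeping. The only delicate point worth flagging is that the simpler representation~(\ref{sequence}) is not directly available when some $b_j^{(k)}$ vanish, because $(\bm{B}^{(k)})^{-1}$ is then undefined. This is precisely the reason for starting from the safe general form~(\ref{sequence-zeros}) and only inverting $\bm{B}_\star^{(k)}$ after restricting to the active set; the proof structure above shows that the reduction to an active-set formula of the same shape as~(\ref{sequence}) is legitimate in full generality.
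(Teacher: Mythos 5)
Your proposal is correct and follows essentially the same route as the paper's proof: permute so that the recursion~(\ref{sequence-zeros}) becomes block diagonal across the active and inactive index sets, read off the zero block for $j\notin\gamma^{(k)}$, and then use the invertibility of $\bm{B}_\star^{(k)}$ to absorb the outer $(\bm{B}_\star^{(k)})^{1/2}$ factors and recover the stated active-set formula. The only (immaterial) difference is the ordering of the blocks.
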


\begin{proof}
For convenience, we assume without loss of generality that
$\gamma^{(k)}=\{m+1,\ldots,p\}$ for some $0\le m\le p$.
Now observe that
\begin{align*}
&\left[\lambda\bm{I}_p+\left(\bm{B}^{(k)}\right)^{1/2}\bm{X}^T\bm{X}\left(\bm{B}^{(k)}\right)^{1/2}\right]^{-1}\\
&\qquad=\begin{bmatrix}
\lambda\bm{I}_m & \bm{0}_{m\times(p-m)} \\
\bm{0}_{(p-m)\times m} & \left[\lambda\bm{I}_{p-m}+\left(\bm{B}_\star^{(k)}\right)^{1/2}\bm{X}_\star^T\bm{X}_\star\left(\bm{B}_\star^{(k)}\right)^{1/2}\right]
\end{bmatrix}^{-1}\\
&\qquad=\begin{bmatrix}
\lambda^{-1}
\bm{I}_m & \bm{0}_{m\times(p-m)} \\
\bm{0}_{(p-m)\times m} & \left[\lambda\bm{I}_{p-m}+\left(\bm{B}_\star^{(k)}\right)^{1/2}\bm{X}_\star^T\bm{X}_\star\left(\bm{B}_\star^{(k)}\right)^{1/2}\right]^{-1}
\end{bmatrix},
\end{align*}
from which it follows that the recursion relation~(\ref{sequence-zeros}) may be written as
\begin{align*}
\bm{b}^{(k+1)}&=
\begin{bmatrix}
\bm{0}_{m\times m} & \bm{0}_{m\times(p-m)} \\
\bm{0}_{(p-m)\times m} & \left(\bm{B}_\star^{(k)}\right)^{1/2}
\end{bmatrix}
\begin{bmatrix}
\color{black}
\lambda^{-1}
\color{black}
\bm{I}_m & \bm{0}_{m\times(p-m)} \\
\bm{0}_{(p-m)\times m} & \left[\lambda\bm{I}_{p-m}+\left(\bm{B}_\star^{(k)}\right)^{1/2}\bm{X}_\star^T\bm{X}_\star\left(\bm{B}_\star^{(k)}\right)^{1/2}\right]^{-1}
\end{bmatrix}\\
&\qquad\qquad\qquad\qquad\qquad\qquad\times\begin{bmatrix}
\bm{0}_{m\times m} & \bm{0}_{m\times(p-m)} \\
\bm{0}_{(p-m)\times m} & \left(\bm{B}_\star^{(k)}\right)^{1/2}
\end{bmatrix}
\bm{X}^T\bm{y}\\
&=\begin{bmatrix}
\bm{0}_m\\
\left(\bm{B}_\star^{(k)}\right)^{1/2}\left[\lambda\bm{I}_{p-m}+\left(\bm{B}_\star^{(k)}\right)^{1/2}\bm{X}_\star^T\bm{X}_\star\left(\bm{B}_\star^{(k)}\right)^{1/2}\right]^{-1}\left(\bm{B}_\star^{(k)}\right)^{1/2}\bm{X}_\star^T\bm{y}
\end{bmatrix}
\end{align*}
and the result follows from the fact that $\bm{B}_\star^{(k)}$ is invertible.
\end{proof}

\begin{rem}
Lemma~\ref{lem:partition} establishes that a modified version of the simpler form~(\ref{sequence}) can still be used even in the presence of zeros in the components of $\bm b^{(k)}$.  Any such zero components simply remain zero in the next iteration.  Meanwhile, the nonzero components are updated by applying the simpler form~(\ref{sequence}) using the subvector of these nonzero components and the submatrix of the corresponding columns of $\bm X$.
\end{rem}

\subsubsection{Fixed Points}

We now establish results on fixed points of the
SLOG
algorithm.
To this end, it will be helpful to consider the recursion relation as a function.  Specifically, let $A:\reals^p\to\reals^p$ be the function that maps $\bm{b}^{(k)}$ to $\bm{b}^{(k+1)}$ according to~(\ref{sequence-zeros}), or equivalently Lemma~\ref{lem:partition}.

Suggestions
of the relationship between the sequence $\{\bm{b}^{(k)}:k\ge0\}$ and the lasso estimator are provided by the following lemmas.  The first states that the lasso's objective function, which we define to be
$Q(\bm{b})=-\normof{\bm{y}-\bm{X}\bm{b}}_2^2-2\lambda\normof{\bm{b}}_1$,
is nondecreasing
as a function of~$k$ when evaluated at each iteration of the SLOG sequence~$\bm{b}^{(k)}$,
while the second uses this result to conclude that the lasso estimator is a fixed point of the recursion
under broad conditions.

\begin{lem}\label{lem:monotone}
$Q\!\left[A(\bm{b})\right]\ge Q(\bm{b})$ for all $\bm{b}\in\reals^p$, with strict inequality if $A(\bm{b})\ne\bm{b}$.
Moreover, $Q\!\left[\bm b^{(k)}\right]$ converges as $k\to\infty$.
\end{lem}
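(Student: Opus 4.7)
My strategy is a majorization--minimization (MM) argument. The key observation is that the SLOG update $A(\bm b)$ is precisely the unique maximizer of a quadratic lower bound on $Q$ constructed from the pointwise inequality
\[
|c| \;\le\; \tfrac{1}{2}\bigl(|b_j| + c^2/|b_j|\bigr), \qquad b_j \ne 0,
\]
(a rearrangement of $(|c| - |b_j|)^2 \ge 0$), which holds with equality iff $|c| = |b_j|$. Summing over the active coordinates and substituting into the penalty term of $-Q$ produces a surrogate $\tilde Q(\,\cdot\,;\bm b) \le Q(\,\cdot\,)$ that is tight at $\bm b$, and whose stationarity condition is exactly the linear system $\bigl[\bm X_\star^T\bm X_\star + \lambda(\bm B_\star)^{-1}\bigr]\bm c = \bm X_\star^T\bm y$ whose solution, by Lemma~\ref{lem:partition}, is the restriction of $A(\bm b)$ to the active coordinates.

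To make the majorizer well defined when some entries of $\bm b$ vanish, I would first reduce to the active set $\gamma = \{j:b_j \ne 0\}$. Lemma~\ref{lem:partition} gives $A(\bm b)_j = 0$ for every $j \notin \gamma$, so both $Q(\bm b)$ and $Q(A(\bm b))$ depend only on the restricted columns $\bm X_\star$ and the corresponding subvectors. On this reduced problem the components of $\bm b_\star$ are strictly nonzero by construction, so the majorizer above is well defined, and the standard MM chain
\[
Q(A(\bm b)) \;\ge\; \tilde Q\bigl(A(\bm b);\bm b\bigr) \;\ge\; \tilde Q(\bm b;\bm b) \;=\; Q(\bm b)
\]
holds, where the first inequality is the majorization bound, the second is the maximization property of $A(\bm b)$, and the equality is tightness of the surrogate at the expansion point.

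For the strict inequality, I would note that $\tilde Q(\,\cdot\,;\bm b)$ is strictly concave on the $\gamma$-restricted subspace: its Hessian is $-2\bigl[\bm X_\star^T\bm X_\star + \lambda(\bm B_\star)^{-1}\bigr]$, and $(\bm B_\star)^{-1}$ is strictly positive definite regardless of the rank of $\bm X_\star$, so the Hessian is negative definite. Hence $A(\bm b)$ is the \emph{unique} maximizer, and the middle inequality in the chain is strict whenever $A(\bm b) \ne \bm b$. For convergence of $\{Q(\bm b^{(k)})\}$, observe that $Q(\bm b) \le 0$ for every $\bm b \in \reals^p$, so the sequence is monotone nondecreasing and bounded above, hence convergent by the monotone convergence theorem.

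The main obstacle is purely bookkeeping: one must verify that the zero components of $\bm b$ can be cleanly excised from the analysis so that the MM inequality, which is meaningless at $b_j = 0$, never has to be evaluated there. Lemma~\ref{lem:partition} does exactly this — zeros propagate to zeros, and the surrogate is built only on the active submatrix. Once this projection is in place the MM argument runs as it does for iteratively reweighted least squares applied to smooth objectives, and the non-differentiability of the $\ell_1$ penalty need not be confronted through subdifferential calculus.
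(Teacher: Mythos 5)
Your proof is correct, and it reaches the same conclusion by a more conceptual route than the paper. The paper's proof is a direct algebraic expansion of $Q\!\left[A(\bm b)\right]-Q(\bm b)$ that, after using $\normof{\bm b_\star}_1=\bm b_\star^T\bm B_\star^{-1}\bm b_\star$ and completing the square, lands on the identity
\begin{align*}
Q\!\left[A(\bm b)\right]-Q(\bm b)
=(\bm c_{\star\star}-\bm b_\star)^T\!\left(\bm X_\star^T\bm X_\star+\lambda\bm B_\star^{-1}\right)\!(\bm c_{\star\star}-\bm b_\star)
+\lambda\!\!\sum_{j\,:\,b_j\ne0}\!\!|b_j|^{-1}\left(|c_j|-|b_j|\right)^2,
\end{align*}
with strictness coming from positive definiteness of $\bm X_\star^T\bm X_\star+\lambda\bm B_\star^{-1}$. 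Your MM argument is in fact a repackaging of exactly this identity: your first inequality $Q(A(\bm b))\ge\tilde Q(A(\bm b);\bm b)$ accounts for the term $\lambda\sum_j|b_j|^{-1}(|c_j|-|b_j|)^2$ (the slack in $|c|\le\tfrac12(|b_j|+c^2/|b_j|)$), and your second inequality $\tilde Q(A(\bm b);\bm b)\ge\tilde Q(\bm b;\bm b)$ accounts for the quadratic-form term, since the gap between a strictly concave quadratic at its maximizer and at any other point is precisely that quadratic form in the difference. So the two proofs certify the same two nonnegative quantities; yours explains \emph{why} the decomposition exists (minorize--maximize with a tight quadratic surrogate), while the paper's verifies it by computation without naming the surrogate. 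Your handling of the zero coordinates via Lemma~\ref{lem:partition}, the strictness argument via uniqueness of the maximizer of the strictly concave surrogate (which needs only $\lambda\bm B_\star^{-1}\succ0$, not any rank condition on $\bm X_\star$), and the monotone-bounded argument for convergence of $Q\!\left[\bm b^{(k)}\right]$ all match the paper's logic. The one thing the paper's explicit identity buys that your chain does not immediately expose is the exact magnitude of the per-iteration improvement, which is reused later (the function $q(\bm b)=Q[A(\bm b)]-Q(\bm b)$ in the proof of Theorem~\ref{thm:convergence}); but your argument yields a lower bound of the same form, so nothing downstream is lost.
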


\begin{proof}
Let $\bm{b}\in\reals^p$, and define $\bm{B}=\Diag(|b_1|,\ldots,|b_p|)$.  Also, for convenience, define $\bm{c}=A(\bm{b})$ and $\bm{C}=\Diag(|c_1|,\ldots,|c_p|)$.
Observe that by Lemma~\ref{lem:partition}, we may write $Q(\bm{b})$ and~$Q(\bm{c})$ as
\begin{align*}
Q(\bm{b})&=-\normof{\bm{y}-\bm{X}_\star\bm{b}_\star}_2^2-2\lambda\normof{\bm{b}_\star}_1,\\
Q(\bm{c})&=-\normof{\bm{y}-\bm{X}_\star\left(\bm{X}_\star^T\bm{X}_\star+\lambda\bm{B}_\star^{-1}\right)^{-1}\bm{X}_\star^T\bm{y}}_2^2-2\lambda\normof{\left(\bm{X}_\star^T\bm{X}_\star+\lambda\bm{B}_\star^{-1}\right)^{-1}\bm{X}_\star^T\bm{y}}_1,
\end{align*}
where $\bm{b}_\star$ denotes the nonzero components of $\bm{b}$ and where $\bm{B}_\star$ is the corresponding positive definite diagonal matrix (analogous to the definition of $\bm{b}^{(k)}_\star$ and $\bm{B}^{(k)}_\star$ from $\bm{b}^{(k)}$ and $\bm{B}^{(k)}$).
Then
\begin{align*}
Q(\bm{c})-Q(\bm{b})
&=-\bm{y}^T\bm{X}_\star\left(\bm{X}_\star^T\bm{X}_\star+\lambda\bm{B}_\star^{-1}\right)^{-1}\bm{X}_\star^T\bm{X}_\star\left(\bm{X}_\star^T\bm{X}_\star+\lambda\bm{B}_\star^{-1}\right)^{-1}\bm{X}_\star^T\bm{y}\\
&\qquad+2\bm{y}^T\bm{X}_\star\left(\bm{X}_\star^T\bm{X}_\star+\lambda\bm{B}_\star^{-1}\right)^{-1}\bm{X}_\star^T\bm{y}
-2\lambda\normof{\left(\bm{X}_\star^T\bm{X}_\star+\lambda\bm{B}_\star^{-1}\right)^{-1}\bm{X}_\star^T\bm{y}}_1\\
&\qquad+\bm{b}_\star^T\bm{X}_\star^T\bm{X}_\star\bm{b}_\star-2\bm{b}_\star^T\bm{X}_\star^T\bm{y}+2\lambda\normof{\bm{b}_\star}_1\\
&=\bm{y}^T\bm{X}_\star\left(\bm{X}_\star^T\bm{X}_\star+\lambda\bm{B}_\star^{-1}\right)^{-1}\bm{X}_\star^T\bm{y}
-2\lambda\normof{\left(\bm{X}_\star^T\bm{X}_\star+\lambda\bm{B}_\star^{-1}\right)^{-1}\bm{X}_\star^T\bm{y}}_1\\
&\qquad+\bm{y}^T\bm{X}_\star\left(\bm{X}_\star^T\bm{X}_\star+\lambda\bm{B}_\star^{-1}\right)^{-1}\left(\lambda\bm{B}_\star^{-1}\right)\left(\bm{X}_\star^T\bm{X}_\star+\lambda\bm{B}_\star^{-1}\right)^{-1}\bm{X}_\star^T\bm{y}\\
&\qquad+\bm{b}_\star^T\bm{X}_\star^T\bm{X}_\star\bm{b}_\star-2\bm{b}_\star^T\bm{X}_\star^T\bm{y}+2\lambda\normof{\bm{b}_\star}_1\\
&
=
\bm{c}_{\star\star}^T
\bm{X}_\star^T\bm{y}-2\lambda\normof{
\bm{c}_{\star\star}
}_1
+\lambda
\bm{c}_{\star\star}^T
\bm{B}_\star^{-1}
\bm{c}_{\star\star}
+\bm{b}_\star^T\bm{X}_\star^T\bm{X}_\star\bm{b}_\star-2\bm{b}_\star^T\bm{X}_\star^T\bm{y}+2\lambda\normof{\bm{b}_\star}_1,
\end{align*}
where $
\bm{c}_{\star\star}
=\left(\bm{X}_\star^T\bm{X}_\star+\lambda\bm{B}_\star^{-1}\right)^{-1}\bm{X}_\star^T\bm{y}$.
Now note that we may write $\normof{\bm{b}_\star}_1=\bm{b}_\star^T\bm{B}_\star^{-1}\bm{b}_\star$ since each element of $\bm{b}_\star$ is nonzero, and hence
\begin{align*}
Q(\bm{c})-Q(\bm{b})
&=
\bm{c}_{\star\star}^T
\bm{X}_\star^T\bm{y}
+\bm{b}_\star^T\left(\bm{X}_\star^T\bm{X}_\star+\lambda\bm{B}_\star^{-1}\right)\bm{b}_\star
-2\bm{b}_\star^T\bm{X}_\star^T\bm{y}\\
&\qquad+\lambda\left(
\bm{c}_{\star\star}^T
\bm{B}_\star^{-1}
\bm{c}_{\star\star}
-2\normof{
\bm{c}_{\star\star}
}_1+\normof{\bm{b}_\star}_1\right)\\
&=
\bm{c}_{\star\star}^T
\left(\bm{X}_\star^T\bm{X}_\star+\lambda\bm{B}_\star^{-1}\right)
\bm{c}_{\star\star}
+\bm{b}_\star^T\left(\bm{X}_\star^T\bm{X}_\star+\lambda\bm{B}_\star^{-1}\right)\bm{b}_\star
-2\bm{b}_\star^T\left(\bm{X}_\star^T\bm{X}_\star+\lambda\bm{B}_\star^{-1}\right)
\bm{c}_{\star\star}
\\
&\qquad+\lambda\sum_{j\,:\,b_j\ne0}\left(c_j^2|b_j|^{-1}-2|c_j|+|b_j|\right)\\
&=(
\bm{c}_{\star\star}
-\bm{b}_\star)^T\left(\bm{X}_\star^T\bm{X}_\star+\lambda\bm{B}_\star^{-1}\right)(
\bm{c}_{\star\star}
-\bm{b}_\star)
+\lambda\sum_{j\,:\,b_j\ne0}|b_j|^{-1}\left(|c_j|-|b_j|\right)^2\ge0,
\end{align*}
which establishes the first result.  To obtain the second result, note that $A(\bm{b})\ne\bm{b}$ is equivalent to $
\bm{c}_{\star\star}
\ne\bm{b}_\star$, noting that for each~$j$ such that $b_j=0$, we necessarily have $c_j=0=b_j$ as well.  Then the strict inequality follows immediately from the fact that
$\bm{X}_\star^T\bm{X}_\star+\lambda\bm{B}_\star^{-1}$ is positive definite.
To obtain
convergence of the sequence $Q\!\left[\bm b^{(k)}\right]$,
simply combine the first result with the fact that $Q\!\left[\bm b^{(0)}\right]\le Q\!\left[\bm b^{(k)}\right]\le0$ for all~$k\ge0$.
\end{proof}

\begin{lem}\label{lem:lasso-fixed}
Let $\bm X$ be drawn from a distribution that is absolutely continuous with respect to Lebesgue measure on $\reals^{n\times p}$.  Then
$A(\bbhl)=\bbhl$.
\end{lem}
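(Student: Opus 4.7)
The plan is to deduce the fixed-point property directly from the ascent property established in Lemma~\ref{lem:monotone}. Recall that by definition
\[
\bbhl = \argmin_{\bm{b}\in\reals^p}\bigl(\normof{\bm{y}-\bm{X}\bm{b}}_2^2 + 2\lambda\normof{\bm{b}}_1\bigr) = \argmax_{\bm{b}\in\reals^p} Q(\bm{b}),
\]
so $Q(\bm{b})\le Q(\bbhl)$ for every $\bm{b}\in\reals^p$. In particular, $Q\!\left[A(\bbhl)\right]\le Q(\bbhl)$. On the other hand, Lemma~\ref{lem:monotone} gives $Q\!\left[A(\bbhl)\right]\ge Q(\bbhl)$, with strict inequality whenever $A(\bbhl)\ne\bbhl$. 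Combining these two, the only consistent possibility is $A(\bbhl)=\bbhl$. The absolute-continuity hypothesis on~$\bm{X}$ is used only to guarantee (via the general-position argument recalled earlier) that $\bbhl$ is a well-defined single point, so that the statement $A(\bbhl)=\bbhl$ is unambiguous.

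As a sanity check that this one-line argument is telling the truth, I would also verify the claim directly from the KKT conditions, since this gives a more constructive picture. Writing $\gamma=\{j:\hat{\beta}^{\lasso}_j\ne 0\}$, the subdifferential optimality conditions for the lasso yield
\[
\bm{X}_\star^T\bm{y} - \bm{X}_\star^T\bm{X}_\star \bbhl_\star = \lambda\,\sign(\bbhl_\star),
\]
where $\bm{X}_\star$ and $\bbhl_\star$ correspond to the index set~$\gamma$, and the contributions from the zero coordinates drop out of $\bm{X}\bbhl$. Because each entry of $\bbhl_\star$ is nonzero, $\sign(\bbhl_\star)=\bm{B}_\star^{-1}\bbhl_\star$ with $\bm{B}_\star=\Diag(|\hat{\beta}^{\lasso}_j|:j\in\gamma)$, so rearranging gives $(\bm{X}_\star^T\bm{X}_\star+\lambda\bm{B}_\star^{-1})\bbhl_\star = \bm{X}_\star^T\bm{y}$, i.e., $\bbhl_\star = (\bm{X}_\star^T\bm{X}_\star+\lambda\bm{B}_\star^{-1})^{-1}\bm{X}_\star^T\bm{y}$. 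This is exactly the expression for the nonzero block of $A(\bbhl)$ given in Lemma~\ref{lem:partition}, while the same lemma sends the zero block to zero. Hence $A(\bbhl)=\bbhl$ coordinate by coordinate, confirming the conclusion.

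The plan is therefore to present the proof in the short form (invoking Lemma~\ref{lem:monotone} together with the defining optimality of~$\bbhl$), since no further work is required. There is no real obstacle: the monotone lemma has already done all the analytical labor. The only point requiring mild care is ensuring that the strict inequality part of Lemma~\ref{lem:monotone} is invoked correctly on the vector $\bbhl\in\reals^p$, which is immediate because the lemma holds for every $\bm{b}\in\reals^p$ with no side conditions.
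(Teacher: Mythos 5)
Your proof is correct and follows essentially the same route as the paper: both rest on the ascent property of Lemma~\ref{lem:monotone} together with the maximality of $Q$ at $\bbhl$ (the paper concludes via uniqueness of the maximizer under the absolute-continuity condition, while you equivalently invoke the strict-inequality clause of the lemma). Your supplementary KKT verification is also sound and in fact reproduces the argument the paper uses for the more general fixed-point characterization in Lemma~\ref{lem:other-fixed}.
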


\begin{proof}
Note from Lemma~\ref{lem:monotone} that $Q\!\left[A(\bm{b})\right]\ge Q(\bm{b})$ for all~$\bm{b}\in\reals^p$, and recall that $\bbhl=\argmax_{\bm{b}\in\reals^p}Q(\bm{b})$ by definition.  Then $Q\!\left[A(\bbhl)\right]=Q(\bbhl)$.
Observe that $\bbhl$ is the unique maximizer of~$Q$ by
the condition on $\bm X$.
It follows that $A(\bbhl)=\bbhl$.
\end{proof}

Lemma~\ref{lem:lasso-fixed}
above establishes
that the lasso estimator
is
a fixed point of the recursion~$A$ that maps $\bm{b}^{(k)}$ to $\bm{b}^{(k+1)}$.  It is natural to ask whether there exist other fixed points for this recursion.  The following lemma answers this question in the affirmative.  For the sake of clarity, we temporarily introduce somewhat more cumbersome notation for the lasso estimator.  Namely, we will explicitly indicate the dependence of $\bbhl$ on $\bm{y},\bm{X}$, and $\lambda$ by writing $\bbhl(\bm{y},\bm{X},\lambda)$ to mean precisely~(\ref{lasso-def}).

\begin{lem}\label{lem:other-fixed}
$A(\bm{b})=\bm{b}$ if and only if the vector~$\bm{b}_\star$ of the nonzero components of~$\bm{b}$ satisfies $\bm{b}_\star=\bbhl(\bm{y},\bm{X}_\star,\lambda)$, where $\bm{X}_\star$ is the matrix formed by retaining the columns of~$\bm{X}$ corresponding to the elements of~$\bm{b}$ retained in~$\bm{b}_\star$.
\end{lem}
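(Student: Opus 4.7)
The plan is to exploit Lemma~\ref{lem:partition} to reduce the fixed-point equation $A(\bm{b})=\bm{b}$ to a condition on the nonzero subvector $\bm{b}_\star$, and then recognize that condition as the (sub)gradient optimality condition for the lasso problem on the reduced design $\bm{X}_\star$.

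First, I would observe that by Lemma~\ref{lem:partition}, the components of $A(\bm{b})$ outside $\gamma=\{j:b_j\ne 0\}$ are automatically zero, which matches $\bm{b}$ there, while on $\gamma$ the recursion yields
\[
A(\bm{b})_\star=\left[\bm{X}_\star^T\bm{X}_\star+\lambda\bm{B}_\star^{-1}\right]^{-1}\bm{X}_\star^T\bm{y}.
\]
Hence $A(\bm{b})=\bm{b}$ is equivalent to the single equation
\[
\left(\bm{X}_\star^T\bm{X}_\star+\lambda\bm{B}_\star^{-1}\right)\bm{b}_\star=\bm{X}_\star^T\bm{y}.
\]
The key observation is that since every entry of $\bm{b}_\star$ is nonzero, $\bm{B}_\star^{-1}\bm{b}_\star$ has $j$-th component $b_{\star j}/|b_{\star j}|=\sign(b_{\star j})$, so $\bm{B}_\star^{-1}\bm{b}_\star=\sign(\bm{b}_\star)$. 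Therefore the fixed-point equation rewrites as
\[
\bm{X}_\star^T\bm{X}_\star\,\bm{b}_\star-\bm{X}_\star^T\bm{y}+\lambda\sign(\bm{b}_\star)=\bm{0}.
\]

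Next I would recognize this as the stationarity condition for the convex objective $f(\bm{z})=\|\bm{y}-\bm{X}_\star\bm{z}\|_2^2+2\lambda\|\bm{z}\|_1$ evaluated at a point $\bm{z}=\bm{b}_\star$ whose components are all nonzero. Since $\|\cdot\|_1$ is differentiable at such a point, the subdifferential of $f$ there reduces to a single gradient, and the displayed equation is exactly $\tfrac12\nabla f(\bm{b}_\star)=\bm{0}$. By convexity of $f$, this is both necessary and sufficient for $\bm{b}_\star$ to be a global minimizer of $f$, i.e., a lasso solution for the design $\bm{X}_\star$. For uniqueness I would invoke Assumption~1: since the columns of $\bm{X}$ are in general position, so are any subset of them, so $\bbhl(\bm{y},\bm{X}_\star,\lambda)$ is uniquely defined and must equal $\bm{b}_\star$.

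For the converse direction I would simply reverse the chain: if $\bm{b}_\star=\bbhl(\bm{y},\bm{X}_\star,\lambda)$, then, because the components of $\bm{b}_\star$ are (by definition) all nonzero, the KKT conditions for the reduced lasso problem give $\bm{X}_\star^T\bm{X}_\star\bm{b}_\star+\lambda\sign(\bm{b}_\star)=\bm{X}_\star^T\bm{y}$, which is the fixed-point equation, and Lemma~\ref{lem:partition} then ensures $A(\bm{b})=\bm{b}$ coordinatewise. I do not expect a significant obstacle here; the only subtlety to be careful about is the identification $\bm{B}_\star^{-1}\bm{b}_\star=\sign(\bm{b}_\star)$, which hinges crucially on the nonzero restriction built into the definition of $\bm{b}_\star$, and the use of Assumption~1 to inherit general position (and hence uniqueness) for the sub-problem.
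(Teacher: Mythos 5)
Your proposal is correct and follows essentially the same route as the paper's proof: both reduce $A(\bm{b})=\bm{b}$ via Lemma~\ref{lem:partition} to the equation $\bm{X}_\star^T(\bm{y}-\bm{X}_\star\bm{b}_\star)=\lambda\sign\bm{b}_\star$ and identify this as the Karush--Kuhn--Tucker condition for the lasso restricted to the columns of $\bm{X}_\star$ in the all-nonzero case. The extra detail you supply on the subdifferential collapsing to a gradient at a fully nonzero point, and on inheriting uniqueness from Assumption~1, is a welcome elaboration of steps the paper leaves implicit, but it is not a different argument.
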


\begin{proof}
By Lemma~\ref{lem:partition}, $A(\bm{b})=\bm{b}$ is equivalent to $\bm{b}_\star=\left(\bm{X}_\star^T\bm{X}_\star+\lambda\bm{B}_\star^{-1}\right)^{-1}\bm{X}_\star^T\bm{y}$, where $\bm{B}_\star$ is the diagonal matrix with the absolute values of the elements of~$\bm{b}_\star$ as its diagonal entries.  Then simply rewrite this as $\bm{X}_\star^T\left(\bm{y}-\bm{X}_\star\bm{b}_\star\right)=\lambda\sign\bm{b}_\star$, which may be recognized as the Karush--Kuhn--Tucker condition for the lasso problem using only the covariates in~$\bm{X}_\star$ (more precisely, as the case of this condition when all components of the possible solution are nonzero).  Thus, $A(\bm{b})=\bm{b}$ if and only if $\bm{b}_\star=\bbhl(\bm{y},\bm{X}_\star,\lambda)$.
\end{proof}

Lemma~\ref{lem:other-fixed} has several consequences.  First, it may be seen that $A(\bm{0}_p)=\bm{0}_p$.  Second, since the lasso solution for each subset of the columns of~$\bm{X}$ is unique by
Assumption~1,
there are at most $2^p$ fixed points of~$A$.  (In fact, there are fewer than $2^p$ fixed points whenever some components of $\bbhl$ are already zero.)  Third, every fixed point of~$A$ has at least one zero component, except for possibly $\bbhl$ itself (if each of its components is nonzero).

\section{%
Convergence Analysis%
}

In this section, we establish under mild regularity conditions that, with $P_0$-probability~$1$,
the sequence generated by the
Deterministic Bayesian Lasso
algorithm converges to~$\bbhl$.
We begin by stating and proving two lemmas that
motivate a
simplifying assumption.

\begin{lem}\label{lem:lasso-zero}
If $\bm{X}^T\bm{y}=\bm{0}_p$, then $\bbhl=\bm{0}_p$.
\end{lem}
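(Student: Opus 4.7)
The plan is to prove the claim by direct evaluation of the lasso objective function and exploiting the orthogonality hypothesis $\bm{X}^T\bm{y}=\bm{0}_p$ to eliminate the cross term. Let $Q_+(\bbeta)=\|\bm{y}-\bm{X}\bbeta\|_2^2+2\lambda\|\bbeta\|_1$ denote the lasso objective (note that this is $-Q(\bbeta)$ plus the constant $\|\bm{y}\|_2^2$ in the sign convention used elsewhere, but it is cleaner to work with the minimization form from~(\ref{lasso-def})). So $\bbhl=\argmin_{\bbeta\in\reals^p}Q_+(\bbeta)$.

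First I would expand
\[
Q_+(\bbeta)=\|\bm{y}\|_2^2-2\bm{y}^T\bm{X}\bbeta+\|\bm{X}\bbeta\|_2^2+2\lambda\|\bbeta\|_1,
\]
and then substitute the hypothesis $\bm{X}^T\bm{y}=\bm{0}_p$, which forces the cross term $\bm{y}^T\bm{X}\bbeta=(\bm{X}^T\bm{y})^T\bbeta$ to vanish for every $\bbeta\in\reals^p$. This reduces the objective to
\[
Q_+(\bbeta)=\|\bm{y}\|_2^2+\|\bm{X}\bbeta\|_2^2+2\lambda\|\bbeta\|_1.
\]

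Next I would observe that each of the two $\bbeta$-dependent summands is nonnegative, so $Q_+(\bbeta)\ge\|\bm{y}\|_2^2=Q_+(\bm{0}_p)$ for every $\bbeta\in\reals^p$, with equality requiring in particular $2\lambda\|\bbeta\|_1=0$; since $\lambda>0$, this forces $\bbeta=\bm{0}_p$. Hence $\bm{0}_p$ is the unique minimizer of $Q_+$, and therefore $\bbhl=\bm{0}_p$.

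There is essentially no obstacle here: the argument is a one-line algebraic simplification made available by the orthogonality condition $\bm{X}^T\bm{y}=\bm{0}_p$, together with the fact that $\lambda>0$ makes the $\ell_1$ penalty strictly positive away from the origin. No appeal to Assumption~1 or to uniqueness results such as those of \citet{tibshirani2013} is needed, since strict minimization at $\bm{0}_p$ is established directly.
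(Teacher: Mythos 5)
Your proof is correct and follows essentially the same route as the paper's: expand the objective, note that the cross term vanishes because $\bm{X}^T\bm{y}=\bm{0}_p$, and observe that the remaining $\bbeta$-dependent terms are nonnegative so $\bm{0}_p$ minimizes. Your additional remark that $\lambda>0$ forces uniqueness of the minimizer is a small, correct refinement of the paper's one-line argument.
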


\begin{proof}
If $\bm{X}^T\bm{y}=\bm{0}_p$, then
$\normof{\bm{y}-\bm{X}\bbeta}_2^2+2\lambda\normof{\bbeta}_1
=\normof{\bm{y}}_2^2+\normof{\bm{X}\bbeta}_2^2+2\lambda\normof{\bbeta}_1$,
which is clearly minimized by $\bbeta=\bm{0}_p$.
\end{proof}

\begin{lem}\label{lem:separable}
Suppose the columns of $\bm{X}$ may be permuted and partitioned as $\bm{X}=\left[\bm{X}_{(1)}\;\;\bm{X}_{(2)}\right]$, where $\bm{X}_{(1)}^T\bm{X}_{(2)}^{}$ equals the zero matrix of the appropriate size.  Then\\
$\bbhl(\bm{y},\bm{X},\lambda)=\left[\bbhl(\bm{y},\bm{X}_{(1)},\lambda),\;\bbhl(\bm{y},\bm{X}_{(2)},\lambda)\right]$.
\end{lem}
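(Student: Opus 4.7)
The plan is to exploit the separability that the orthogonality condition $\bm{X}_{(1)}^T\bm{X}_{(2)} = \bm{0}$ forces on the lasso objective. Writing $\bbeta = (\bbeta_{(1)}, \bbeta_{(2)})$ according to the partition of the columns of $\bm{X}$, I would expand the residual sum of squares
\begin{align*}
\normof{\bm{y}-\bm{X}\bbeta}_2^2
&= \normof{\bm{y}-\bm{X}_{(1)}\bbeta_{(1)}-\bm{X}_{(2)}\bbeta_{(2)}}_2^2,
\end{align*}
and observe that the cross term $2\bbeta_{(1)}^T\bm{X}_{(1)}^T\bm{X}_{(2)}\bbeta_{(2)}$ vanishes by hypothesis. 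After rearranging and adding and subtracting $\normof{\bm y}_2^2$, this gives the key identity
\begin{align*}
\normof{\bm{y}-\bm{X}\bbeta}_2^2
= \normof{\bm{y}-\bm{X}_{(1)}\bbeta_{(1)}}_2^2
+ \normof{\bm{y}-\bm{X}_{(2)}\bbeta_{(2)}}_2^2
- \normof{\bm{y}}_2^2.
\end{align*}

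The second ingredient is that the $\ell_1$ penalty decomposes additively along the partition: $\normof{\bbeta}_1 = \normof{\bbeta_{(1)}}_1 + \normof{\bbeta_{(2)}}_1$. Combining these two observations, the full lasso objective equals the sum of the two block lasso objectives, minus the constant $\normof{\bm{y}}_2^2$. Since this constant does not depend on $\bbeta$, the joint minimizer over $(\bbeta_{(1)},\bbeta_{(2)})$ is obtained by minimizing each block objective separately, which yields the claimed decomposition $\bbhl(\bm{y},\bm{X},\lambda)=\bigl[\bbhl(\bm{y},\bm{X}_{(1)},\lambda),\,\bbhl(\bm{y},\bm{X}_{(2)},\lambda)\bigr]$.

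A small technical point to address is uniqueness: Assumption~1 (columns of $\bm{X}$ in general position) guarantees that all three lasso problems have unique solutions, so the equality is unambiguous as an equality of vectors rather than merely of argmin sets. I do not expect any serious obstacle here; the entire argument is driven by the vanishing of the cross term and by the fact that both the squared-loss part (after the orthogonality simplification) and the penalty part are separable across the partition. The only care needed is in bookkeeping the column permutation so that the concatenation on the right-hand side corresponds to the same ordering of coordinates as on the left.
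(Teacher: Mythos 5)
Your proposal is correct and follows essentially the same route as the paper's proof: expand the residual sum of squares, use $\bm{X}_{(1)}^T\bm{X}_{(2)}=\bm{0}$ to kill the cross term, absorb the extra copy of $\normof{\bm{y}}_2^2$ as an additive constant, and note that the $\ell_1$ penalty splits across the blocks so the objective separates. The remark about uniqueness under Assumption~1 is a sensible addition but not needed beyond what the paper already assumes.
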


\begin{proof}
The proof is given in the Supplemental Section.
\end{proof}

The
point
of Lemma~\ref{lem:separable}
is that when the covariates may be permuted and partitioned into sets $\bm{X}_{(1)}$ and $\bm{X}_{(2)}$ that are uncorrelated with each other, then solving the lasso problem for $\bm{X}$ is equivalent to solving the lasso problem for $\bm{X}_{(1)}$ and $\bm{X}_{(2)}$ separately and combining the solutions.  With this result in mind, we now
assume that
for any permutation and partition $\left[\bm{X}_{(1)}\;\;\bm{X}_{(2)}\right]$ of the columns of~$\bm{X}$ such that $\bm{X}_{(1)}^T\bm{X}_{(2)}^{}$ is the zero matrix, both $\bm{X}_{(1)}^T\bm{y}$ and $\bm{X}_{(2)}^T\bm{y}$ are nonzero
(henceforth referred to as Assumption~2).
This assumption is
not restrictive,
as
can be seen from the preceding lemmas.  If
Assumption~2
did not hold, then the problem could be split into finding $\bbeta_{(1)}$ and $\bbeta_{(2)}$ separately by Lemma~\ref{lem:separable}, and one of these solutions would be exactly zero by Lemma~\ref{lem:lasso-zero}.  Thus, the effect of
Assumption~2
is merely to ensure that we are not attempting to solve a problem that can be trivially reduced to a simpler one, though it will also be needed to avoid a technical difficulty in proving the following useful result.

\begin{lem}\label{lem:no-zeros}
Under
Assumption~2,
$P_0(b_j^{(k)}=0\text{ for some }k\ge0\text{ and }j\in\oneto{p})=0.$
\end{lem}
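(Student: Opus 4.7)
The plan is to induct on $k$ with the sets $T_k := \{\bm b^{(0)} : b_j^{(k)} = 0 \text{ for some } j\}$; a countable union bound then gives the lemma. For $k=0$, absolute continuity of $P_0$ makes $T_0$, a finite union of hyperplanes, a $P_0$-null set. For the inductive step, I assume $P_0\bigl(\bigcup_{r<k} T_r\bigr)=0$. On the complementary full-probability event, $\bm b^{(0)},\ldots,\bm b^{(k-1)}$ all lie in $U := \{\bm b : b_j \neq 0 \;\forall j\}$, so the simpler form~(\ref{sequence}) applies throughout. Factoring $[A(\bm b)]_j = |b_j|\,v_j(\bm b)$, where $v_j(\bm b) := \bigl[(\bm X^T\bm X \bm B + \lambda \bm I)^{-1}\bm X^T\bm y\bigr]_j$, and iterating, one obtains $|b_j^{(k)}| = |b_j^{(0)}|\prod_{r=0}^{k-1}|v_j(\bm b^{(r)})|$. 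Hence $\{b_j^{(k)}=0\}$ on the good event forces $v_j(\bm b^{(k-1)})=0$ for some $j$, and it suffices to show $P_0(\{v_j(\bm b^{(k-1)})=0\})=0$ for each $j$.

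The algebraic crux is that $v_j$, viewed as a rational function of $(|b_1|,\ldots,|b_p|)$, is not identically zero. By Cramer's rule $v_j = \det(\bm M_j)/\det(\bm X^T\bm X \bm B + \lambda \bm I)$, where $\bm M_j$ denotes $\bm X^T\bm X\bm B + \lambda \bm I$ with its $j$-th column replaced by $\bm X^T\bm y$; the denominator is strictly positive (its eigenvalues coincide with those of $\bm B^{1/2}\bm X^T\bm X \bm B^{1/2} + \lambda \bm I$). A multilinear expansion identifies the coefficient of $\prod_{i\in S}|b_i|$ in $\det(\bm M_j)$ as $\pm\lambda^{p-1-|S|}\det\bigl(\bm X_{S\cup\{j\}}^T[\bm X_S\mid\bm y]\bigr)$ for each $S\subseteq\oneto{p}\setminus\{j\}$. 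Supposing every such minor vanishes, I would induct on distance from $j$ in the correlation graph $G := \bigl(\oneto{p},\{(i,\ell):(\bm X^T\bm X)_{i\ell}\neq 0\}\bigr)$: the $S=\emptyset$ case gives $(\bm X^T\bm y)_j=0$; for $\ell$ at distance $1$ the $|S|=1$ case with $S=\{\ell\}$ gives $(\bm X^T\bm X)_{j\ell}(\bm X^T\bm y)_\ell=0$, hence $(\bm X^T\bm y)_\ell=0$; and for $\ell$ at distance $d+1\geq 2$ with a shortest-path predecessor $m$ at distance $d$, the $|S|=2$ case with $S=\{m,\ell\}$, after substituting $(\bm X^T\bm X)_{j\ell}=0$ and the inductive hypothesis $(\bm X^T\bm y)_m=0$, reduces to $(\bm X^T\bm X)_{jm}(\bm X^T\bm X)_{m\ell}(\bm X^T\bm y)_\ell=0$, forcing $(\bm X^T\bm y)_\ell=0$. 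Propagating gives $(\bm X^T\bm y)_r=0$ for every $r$ in the connected component $C$ of $j$ in $G$, so the partition $[\bm X_C,\bm X_{C^c}]$ with $\bm X_C^T\bm y=\bm 0$ violates Assumption~2. Hence $v_j\not\equiv 0$, and $\{v_j=0\}$ is Lebesgue-null in $\reals^p$.

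To transfer this nonvanishing through $k-1$ iterations, I would use that since $A$ depends only on $|\bm b|$, the iteration factors through $\tilde A : \bm a \mapsto \Diag(\bm a)\bigl(\bm X^T\bm X \Diag(\bm a) + \lambda \bm I\bigr)^{-1}\bm X^T\bm y$ on $\reals_+^p$. Its defining equation rearranges to $a_i = \lambda c_i/(\bm X^T\bm y - \bm X^T\bm X \bm c)_i$ wherever the denominators are nonzero, showing $\tilde A$ is birational. Its Jacobian determinant is thus a nonzero rational function, and so is that of $\tilde A^{k-1}$, making $\tilde A^{k-1}$ a local diffeomorphism off a Lebesgue-null set and so giving it image with nonempty interior in $\reals^p$. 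The rational function $v_j\circ\tilde A^{k-1}$ therefore inherits the nonvanishing of $v_j$ on that open subset, and so its zero set in $\reals_+^p$ is Lebesgue-null; pulling back under the orthant-wise diffeomorphism $|\cdot|:\reals^p\to\reals_+^p$ gives a Lebesgue-null, hence $P_0$-null, set in $\reals^p$. This closes the induction.

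The main obstacle will be the graph-distance induction establishing nonvanishing of $v_j$: one must choose, at each step, the subset $S$ (here a one- or two-element extension along a geodesic in the correlation graph) so that the vanishing of $\det\bigl(\bm X_{S\cup\{j\}}^T[\bm X_S\mid\bm y]\bigr)$, combined with the earlier conclusions, cleanly isolates a fresh $(\bm X^T\bm y)_\ell$ against a product of $(\bm X^T\bm X)$-entries that are nonzero by construction of the path. A secondary technical concern is formalizing the birationality of $\tilde A$ when $\bm X^T\bm y$ has zero coordinates; one checks that the denominators $(\bm X^T\bm y - \bm X^T\bm X \bm c)_i$ are generically nonzero in $\bm c$, so the rational inverse is well-defined on a dense open subset.
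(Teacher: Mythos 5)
Your overall architecture is sound and, in one respect, genuinely different from the paper's. Both arguments push null sets through the iteration via the nonvanishing Jacobian of the (bi)rational correspondence $b_j=\lambda a_j/\bigl[\bm X^T(\bm y-\bm X\bm a)\bigr]_j$, but where the paper shows the one-step zero set is null by observing that $[A(\bm b)]_j=0$ for $j\in M$ forces the linear constraint $\bm X_0^T\bm X_\star\bm a_\star=\bm X_0^T\bm y$ (which, by Assumption~2, confines $\bm a_\star$ to a proper hyperplane), you instead show directly that $v_j=\det(\bm M_j)/\det(\bm X^T\bm X\bm B+\lambda\bm I)$ is a not-identically-zero rational function. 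Your multilinear expansion and the identification of the coefficient of $\prod_{i\in S}|b_i|$ as $\pm\lambda^{p-1-|S|}\det\bigl(\bm X_{S\cup\{j\}}^T[\bm X_S\mid\bm y]\bigr)$ are correct, and since distinct $S$ give linearly independent monomials, it does suffice to show that not all of these minors vanish.

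The genuine gap is in the graph-distance induction, which fails as written for vertices at distance at least $3$ from $j$. If $\ell$ is at distance $d+1\ge3$ and its geodesic predecessor $m$ is at distance $d\ge2$, then $(\bm X^T\bm X)_{jm}=0$ as well as $(\bm X^T\bm X)_{j\ell}=0$ and $(\bm X^T\bm y)_j=0$, so the entire first row of the $3\times3$ minor for $S=\{m,\ell\}$ vanishes; that minor is identically zero and yields no information about $(\bm X^T\bm y)_\ell$, so the propagation stalls. The repair is to take $S$ to be the whole geodesic $\{m_1,\ldots,m_d,\ell\}$: non-consecutive vertices on a shortest path are non-adjacent and $(\bm X^T\bm y)_{m_i}=0$ by the inductive hypothesis, so $\bm X_{S\cup\{j\}}^T[\bm X_S\mid\bm y]$ is lower triangular in the path ordering with determinant $(\bm X^T\bm X)_{jm_1}(\bm X^T\bm X)_{m_1m_2}\cdots(\bm X^T\bm X)_{m_d\ell}\,(\bm X^T\bm y)_\ell$, every $(\bm X^T\bm X)$ factor being nonzero by adjacency. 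With that fix (and reading Assumption~2 as also excluding $\bm X^T\bm y=\bm 0$, needed when the component of $j$ is all of $\oneto{p}$), your argument goes through; it is, however, considerably heavier than the paper's hyperplane observation, which reaches the same null-set conclusion with no determinant combinatorics. A secondary point to tighten: because the iteration interposes $|\cdot|$ at every step, $v_j\circ A^{k-1}$ is only piecewise rational, so the ``image has nonempty interior'' argument must be run on each sign-pattern region separately, exactly as you do for the initial orthant.
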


\begin{proof}
The proof is given in the Supplemental Section.
\end{proof}

We now state and prove the following result,
which states that our SLOG algorithm converges to the lasso estimator.

\begin{thm}\label{thm:convergence}
Under
Assumptions~1~and~2,
$\bm b^{(k)}\to\bbhl$
as $k\to\infty$ with $P_0$-probability~$1$.
\end{thm}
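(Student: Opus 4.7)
My plan is to combine the Lyapunov property of $Q$ along the SLOG orbit (Lemma~\ref{lem:monotone}), the characterization of fixed points of the recursion map $A$ (Lemmas~\ref{lem:lasso-fixed} and \ref{lem:other-fixed}), and a quantitative local instability analysis of $A$ near any ``spurious'' fixed point. Throughout I work on the $P_0$-probability-one event from Lemma~\ref{lem:no-zeros} on which no coordinate $b_j^{(k)}$ ever vanishes, so that the simpler recursion~\eqref{sequence} applies at every step. Boundedness of the orbit is then immediate: Lemma~\ref{lem:monotone} gives $Q(\bm{b}^{(k)}) \ge Q(\bm{b}^{(0)})$, and since $Q(\bm{b}) \le -2\lambda\normof{\bm{b}}_1$, we obtain $\normof{\bm{b}^{(k)}}_1 \le -Q(\bm{b}^{(0)})/(2\lambda)$, placing the orbit in a compact set.

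Next, I would show that every accumulation point is a fixed point of $A$. Using the general form~\eqref{sequence-zeros}, the map $A:\reals^p\to\reals^p$ is continuous on all of $\reals^p$ because $\bm{I}_p + \bm{B}^{1/2}\bm{X}^T\bm{X}\bm{B}^{1/2}$ is positive definite, hence invertible, for every nonnegative diagonal $\bm{B}$. If $\bm{b}^{(k_i)}\to\bm{b}^*$ along a subsequence, then $\bm{b}^{(k_i+1)}=A(\bm{b}^{(k_i)})\to A(\bm{b}^*)$; since $Q(\bm{b}^{(k)})$ converges to some $Q^*$ by Lemma~\ref{lem:monotone}, both $\bm{b}^*$ and $A(\bm{b}^*)$ attain the value $Q^*$, and the strict-inequality clause of Lemma~\ref{lem:monotone} forces $A(\bm{b}^*)=\bm{b}^*$.

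I would then classify candidate accumulation points. By Lemma~\ref{lem:other-fixed} combined with Assumption~1, $A$ has only finitely many fixed points, each identified with the lasso solution on some subset of columns of~$\bm{X}$. If a fixed point $\bm{b}^*$ satisfied all KKT inequalities $|\bm{X}_j^T(\bm{y}-\bm{X}\bm{b}^*)| \le \lambda$ at its zeroed indices, it would solve the full lasso problem and hence equal $\bbhl$ by uniqueness. Thus any \emph{spurious} fixed point $\bm{b}^*\ne\bbhl$ must have at least one coordinate $j$ with $b_j^*=0$ and $L:=|\bm{X}_j^T(\bm{y}-\bm{X}\bm{b}^*)|>\lambda$. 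A block-matrix expansion of~\eqref{sequence} in the $j$-th row as $\bm{b}^{(k)}\to\bm{b}^*$ (using that the dominant $(j,j)$-entry $\lambda/|b_j^{(k)}|$ of $\lambda(\bm{B}^{(k)})^{-1}$ blows up while the remaining Schur complement converges to the fixed-point equation for $\bm{b}^*_\star$) then yields the local instability
\[
\frac{|b_j^{(k+1)}|}{|b_j^{(k)}|}\;\longrightarrow\;\frac{L}{\lambda}\;>\;1,
\]
so $\bm{b}^*$ is a repelling fixed point of $A$ in the $j$-th coordinate direction.

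The main obstacle is to leverage this local instability globally on a full $P_0$-measure set of starting points. The idea is that the basin of attraction of any spurious $\bm{b}^*$ must lie inside a codimension-$\ge 1$ ``stable manifold,'' which is Lebesgue null and hence $P_0$-null by absolute continuity of $P_0$. Once spurious fixed points are ruled out as accumulation points, boundedness plus finiteness of the fixed-point set plus the common value $Q^*$ shared by all accumulation points identifies $\bbhl$ as the only possibility, and convergence $\bm{b}^{(k)}\to\bbhl$ follows. The delicate step is the stable-manifold analysis itself: $A$ is merely continuous (not $C^1$) across the coordinate hyperplanes where the spurious fixed points sit, so off-the-shelf smooth stable-manifold theorems do not apply directly, and one must instead construct the stable sets by hand from the explicit expansion above, using Assumption~2 to avoid the degenerate decomposable case flagged by Lemmas~\ref{lem:lasso-zero} and \ref{lem:separable}.
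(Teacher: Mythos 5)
Your skeleton matches the paper's proof closely: the Lyapunov role of $Q$ (Lemma~\ref{lem:monotone}), the finiteness of the fixed-point set via Lemma~\ref{lem:other-fixed} and Assumption~1, the identification of a violated KKT coordinate $J$ with $\left|\left[\bm{X}_0^T(\bm{y}-\bm{X}_\star\bm{b}_\star)\right]_J\right|>\lambda$ at any spurious fixed point, and the first-order expansion giving the repulsion ratio $L/\lambda>1$ are all present in the paper's argument (the paper reaches the KKT violation through a chain of inequalities starting from $Q(\bbeta)>Q(\bm b)$ rather than by direct appeal to uniqueness, but the conclusion is the same). However, your closing step is a genuine gap. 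You propose to rule out convergence to a spurious fixed point by showing its basin of attraction lies in a Lebesgue-null ``stable manifold,'' and you correctly observe that $A$ is not $C^1$ across the coordinate hyperplanes, so no off-the-shelf stable-manifold theorem applies; you then leave the by-hand construction unexecuted. That construction is precisely the hard part, and without it the proof does not close.

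The paper avoids this machinery entirely, and you already have all the ingredients to do the same. On the $P_0$-full-measure event of Lemma~\ref{lem:no-zeros}, every $b_J^{(k)}$ is nonzero, so writing $\bm b^{(k)}=\bm b^*+\varepsilon\bm r$ with $b_J^{(k)}=\varepsilon u_J$ and $u_J\ne0$, the expansion yields
\begin{align*}
\left|b_J^{(k+1)}\right|=\lambda^{-1}\varepsilon\,|u_J|\left|\left[\bm{X}_0^T(\bm{y}-\bm{X}_\star\bm{b}_\star)\right]_J\right|+O(\varepsilon^2)>\varepsilon\,|u_J|=\left|b_J^{(k)}\right|
\end{align*}
for all sufficiently small $\varepsilon$. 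Hence once the orbit enters a small enough neighborhood of $\bm b^*$, the quantity $|b_J^{(k)}|$ is strictly increasing, so it is bounded away from zero and cannot converge to $b_J^*=0$ --- a direct contradiction with $\bm b^{(k)}\to\bm b^*$. The only measure-theoretic input needed is Lemma~\ref{lem:no-zeros} itself (to guarantee $u_J\ne0$ and that the orbit never lands exactly on a spurious fixed point); no stable-manifold or codimension argument is required. Two smaller points: your step identifying the single limit from ``boundedness plus finiteness of the fixed-point set plus the common value $Q^*$'' is not quite airtight as stated, since distinct fixed points could in principle share a $Q$-value; the paper instead shows $\normof{\bm b^{(k+1)}-\bm b^{(k)}}_2\to0$ and uses the minimal gap between the finitely many fixed points to conclude that the nearest fixed point is eventually constant. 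You should also verify continuity of $A$ from the representation~(\ref{sequence-zeros}) (as you do) rather than~(\ref{sequence}), since the accumulation-point argument must apply at points with zero coordinates.
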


\begin{proof}
The proof is
long and technical and is therefore
given in the Supplemental Section.
\end{proof}

It should be remarked that the only purpose of the random starting point is to ensure that with $P_0$-probability~1, our sequence avoids ``accidentally'' landing exactly on a fixed point other than $\bbhl$.  If a rule could be obtained by which the starting point could be chosen to avoid such a possibility, then we could choose the starting point by this rule
instead.

%

\section[Properties of the Deterministic Bayesian Lasso Algorithm]{%
Properties of the Deterministic Bayesian Lasso\\Algorithm}

\subsection{%
Connections to Other Methods%
}
\subsubsection{EM Algorithm}

An alternative interpretation of the
Deterministic Bayesian Lasso
algorithm may be obtained by comparing it to
the EM algorithm \citep{dempster1977}.
Recall that the
Deterministic Bayesian Lasso is based on a Gibbs sampler that includes both the parameter of interest~$\bm\beta$ and a latent variable~$\bm\omega$.  An EM algorithm for the same parameter~$\bm\beta$ and latent variable~$\bm\omega$ can be considered in which the log-likelihood is
\begin{align*}
\ell(\bbeta,\bm{\omega})=-\frac{1}{2\sigma^2}g(\bbeta,\bm{\omega}),
\end{align*}
where $g(\bm\beta,\bm\omega)$ again denotes the quantity in parentheses in the last line of~(\ref{posterior}).  The iterates of the resulting EM algorithm coincide with those of the SLOG algorithm, as we now demonstrate below.

First,
suppose that the value of~$\bbeta$ at the $k$th step of the EM algorithm is $\bm{b}^{(k)}$.  For the E-step of the EM algorithm, we obtain a function $h(\bbeta;\bm{b}^{(k)})$ defined by
\begin{align}
h(\bbeta;\bm{b}^{(k)})=E_\star\left[\ell(\bbeta,\bm{\omega})\right]
=-\frac{1}{2\sigma^2}E_\star\left[g(\bbeta,\bm{\omega})\right],\label{h1}
\end{align}
where $E_\star$ denotes an expectation taken with respect to the distribution~$P_\star$ where $\bbeta$ is fixed and $\bm{\omega}$ has the
distribution
\begin{align*}
\omega_j^{-1}\stackrel{\text{iid}}{\sim}\begin{cases}
\text{InverseGaussian}(\lambda/|b_j^{(k)}|,\;\lambda/\sigma^2) & \text{ if }b_j^{(k)}\ne0,\\
\text{InverseGamma}(1/2,\;\lambda^2/2\sigma^2) & \text{ if }b_j^{(k)}=0.
\end{cases}\tag{$P_\star$}
\end{align*}
Note that
the above
coincides with the distribution of $\bm\omega^{-1}\mid\bm\beta,\bm y$ with $\bm\beta=\bm b^{(k)}$ under the Bayesian model in~(\ref{conditional-omega}).
Then~(\ref{h1}) becomes
\begin{align}
h(\bbeta;\bm{b}^{(k)})
&=-\frac{1}{2\sigma^2}\left[\normof{\bm{y}-\bm{X}\bbeta}_2^2
+E_\star\left(\sum_{j=1}^p\beta_j^2\omega_j^{-1}\right)+E_\star\left(\lambda^2\sum_{j=1}^p\omega_j+\sigma^2\sum_{j=1}^p\log\omega_j\right)\right]
\notag\\
&=-\frac{1}{2\sigma^2}\left[\normof{\bm{y}-\bm{X}\bbeta}_2^2
+\sum_{j=1}^p E_\star\left(\beta_j^2\omega_j^{-1}\right)+c_0(\bm{b}^{(k)})\right]
,\label{h2}\end{align}
where $c_0(b^{(k)})$ does not depend on $\bbeta$.  The expectation~$E_\star\left(\beta_j^2\omega_j^{-1}\right)$ may be evaluated as
\begin{align*}
E_\star(\beta_j^2\omega_j^{-1})=\begin{cases}0&\text{ if }\beta_j=0,\\
\dfrac{\lambda\beta_j^2}{|b_j^{(k)}|}&\text{ if }\beta_j\ne0
\text{ and }b_j^{(k)}\ne0,\\
\infty&\text{ if }\beta_j\ne0\text{ and }b_j^{(k)}=0,
\end{cases}
\end{align*}
noting that the last case holds by the fact that $E_\star(\omega_j^{-1})=\infty$ when $|b_j^{(k)}|=0$ since the shape parameter of the inverse gamma distribution in $P_\star$ is $1/2$.  Then due to this last case, $h(\bbeta;\bm{b}^{(k)})=-\infty$ whenever $b_j^{(k)}=0$ and $\beta_j\ne0$.  Now consider the M-step of the EM algorithm, which takes
\begin{align*}
\bm{b}^{(k+1)}=\argmax_{\bbeta\in\reals^p}\;h(\bbeta;\bm{b}^{(k)}).
\end{align*}
If $b^{(k)}_j=0$ for some $j$, then
$b^{(k+1)}_j=0$ as well, since otherwise $h(\bm{b}^{(k+1)};\bm{b}^{(k)})=-\infty$ and the maximum is not obtained.  (Note that $h(\bm{0}_p;\bm{b}^{(k)})>-\infty$, so a value greater than $-\infty$ is clearly obtainable.)  Then the M-step essentially  maximizes the function $h(\bbeta;\bm{b}^{(k)})$ subject to the restriction that $\beta_j=0$ for every $j$ such that $b^{(k)}_j=0$.
For convenience, we now assume without loss of generality that $b^{(k)}_j=0$ for each $j\in\oneto{m}$ and $b^{(k)}_j\ne0$ for each $j\in\{m+1,\ldots,p\}$, where $0\le m\le p$.  Also, partition $\bm{X}$ as
\begin{align*}
\bm{X}=\begin{bmatrix}\bm{X}^{0} & \bm{X}_\star\end{bmatrix},
\end{align*}
where $\bm{X}^0$ is $n\times m$ and $\bm{X}_\star$ is $n\times(p-m)$, and let $\bm{B}_\star^{(k)}=\Diag(|b^{(k)}_{m+1}|,\ldots,|b^{(k)}_p|)$, noting that $\bm{B}_\star^{(k)}$ is invertible.
Then we have that
\begin{align*}
\bm{b}^{(k+1)}=\begin{bmatrix}\bm{0}_m\\
\displaystyle\argmax_{\bbeta_\star\in\reals^{p-m}}\;h_\star(\bbeta_\star;\bm{b}^{(k)})
\end{bmatrix},
\end{align*}
where
\begin{align*}
h_\star(\bbeta_\star;\bm{b}^{(k)})=-\frac{1}{2\sigma^2}\left[\normof{\bm{y}-\bm{X}_\star\bbeta_\star}_2^2+\lambda\bbeta_\star^T(\bm{B}_\star^{(k)})^{-1}\bbeta_\star+c_0(\bm{b}^{(k)})\right].
\end{align*}
Then
\begin{align*}
\argmax_{\bbeta_\star\in\reals^{p-m}}\;h_\star(\bbeta_\star;\bm{b}^{(k)})
&=\argmin_{\bbeta_\star\in\reals^{p-m}}\left[\bbeta_\star^T\bm{X}_\star^T\bm{X}_\star\bbeta-2\bbeta_\star^T\bm{X}_\star^T\bm{y}+\lambda\bbeta_\star^T(\bm{B}_\star^{(k)})^{-1}\bbeta_\star\right]\\
&=\argmin_{\bbeta_\star\in\reals^{p-m}}\left\{(\bbeta_\star-\tilde{\bbeta}_\star)^T\left[\bm{X}_\star^T\bm{X}_\star+\lambda(\bm{B}_\star^{(k)})^{-1}\right]
(\bbeta_\star-\tilde{\bbeta}_\star)\right\}=\tilde{\bbeta}_\star,
\end{align*}
where $\tilde{\bbeta}_\star=[\bm{X}_\star^T\bm{X}_\star+\lambda(\bm{B}_\star^{(k)})^{-1}]^{-1}\bm{X}_\star^T\bm{y}$.  Hence,
\begin{align*}
\bm{b}^{(k+1)}&=\begin{bmatrix}\bm{0}_m\\
\left[\bm{X}_\star^T\bm{X}_\star+\lambda\left(\bm{B}_\star^{(k)}\right)^{-1}\right]^{-1}\bm{X}_\star^T\bm{y}
\end{bmatrix}
\end{align*}
which is precisely the
form of the SLOG update as shown in
Lemma~\ref{lem:partition}.

\begin{rem}
\color{black}
In light of the connections between the SLOG algorithm and the EM algorithm, it may be asked why the EM algorithm has not thus far been central to the exposition of our proposed methodology.  First, note that the EM framework provides no motivation for the particular form of the augmentation that is employed by the SLOG algorithm.  Moreover,
\color{black}
the similarity of the SLOG algorithm and the EM algorithm does not necessarily mean that we can invoke the various results on
convergence of the EM algorithm
that have appeared in the literature \citep[e.g.,][]{wu1983}
to claim convergence of SLOG.
The nondifferentiable penalty term imposed by the lasso leads to problems with certain regularity conditions that are typically required to apply such EM algorithm convergence results.  In particular, condition~(10) of \citet{wu1983} fails for the Bayesian lasso.  It should also be noted that
a recent paper by
\citet{chretien2012}
considered
extensions of the EM algorithm for which convergence can be demonstrated even under a nondifferentiable penalty term.  However, the resulting sequence does not necessarily coincide with the iterates of the SLOG algorithm, and hence the convergence proofs of \citet{chretien2012} are not applicable here.
Hence, the formal proof of convergence of SLOG as given by Theorem~\ref{thm:convergence} is not redundant.
\end{rem}


\subsubsection{Iterative Re-weighted Least Squares}
\label{sec:Lai}

A closely
related problem of minimizing the $\ell_1$ norm of $\bm\beta$ under the linear constraint $\bm X\bm\beta = \bm y$ has
also
been studied
in the literature
\citep{daubechies2010,chartrand2008,candes2007}. In \citet{daubechies2010}, an iteratively re-weighted least squares (IRLS) algorithm was proposed to solve the
linearly
constrained
$\ell_1$
minimization. This algorithm updates $\bm\beta_{k+1}$ by solving a weighted least squares problem with weights $w_j=(\beta_{k,j}^2+\epsilon_k^2)^{-1/2}$ (computed elementwise), where $\epsilon_k$ is a sequence of small positive numbers introduced to avoid the possibility of division by zero. The algorithm is shown to converge under a so-called null space condition, a slightly weaker version of the more commonly imposed restricted isometry property
of \citet{candes2005}.
In addition to proving convergence of the algorithm, \citet{daubechies2010} also establish sufficient conditions for the limit of the algorithm to exhibit a specified degree of sparsity.

The results of \citet{daubechies2010} were
recently
extended by \citet{lai2013} to solve the lasso problem specifically. Similarly to \citet{daubechies2010}, their algorithm also updates $\bm\beta_{k+1}$ by solving a re-weighted least squares problems approximated with a
sequence of small constants $\epsilon_k$
to avoid infinite weights.
The
$\ell_1$
term in the lasso objective function is replaced by
the
approximation
\[
\|\bm\beta\|_{1,\epsilon} = \sum_{j=1}^{p} \left( \beta_j^2 + \epsilon^2 \right)^{1/2}.
\]
(Note that the approximation above is actually the $q=1$ case of their algorithm.)
The
iteration update for
$\bm\beta_{k+1}$
in terms of
$\bm\beta_{k}$
is given by
\[
\displaystyle\left(\bm X^{T}\bm X + \Diag\!\left[\frac{\lambda}{\left(\epsilon_k^2 + \beta_{k1}^2\right)^{1/2}},\ldots,\frac{\lambda}{\left(\epsilon_k^2 + \beta_{k1}^2\right)^{1/2}}\right]\right)\bm\beta_{k+1} = \bm X^{T}\bm y
\]
where the series of $\epsilon_k$ is chosen adaptively to boost speed of convergence.
In particular, $\epsilon_k=\min\{\epsilon_{k-1},\,\alpha\, r(\bm\beta_k)_{h+1}\}$. The value $r(\bm\beta_k)_{h+1}$ is the
$(h+1)$th--largest
parameter estimate
(in magnitude)
at iteration $k$. They further replace the subspace condition in \citet{daubechies2010} by the restricted isometry property (RIP) of certain order to prove preliminary results on convergence, error bound, and local convergence behavior.

The primary motivation for using the sequence $\epsilon_k$ in the IRLS algorithm for approximating the subproblems is to avoid encountering infinite weights due to division by 0. However, for the SLOG algorithm we prove that the $\epsilon$-approximations are not necessary and that
all coefficients at all iterations are nonzero with probability~$1$.
Thus we can avoid the additional complexity and errors introduced by the $\epsilon$-approximations of IRLS.

We also briefly compare
the Lai algorithm
to SLOG
for accuracy and time
(see Supplemental Section~\ref{sec:AppB}).
The results showed that in settings of high sparsity that the SLOG algorithm afforded increases in computational speed without a loss of accuracy compared to the Lai algorithm. In settings of low sparsity it was observed that, for the same level of accuracy, that SLOG offered similar computational speed compared to Lai. The findings described here were consistent across settings of both high and low multicollinearity.

Moreover, the SLOG algorithm differs from the work of \citet{lai2013} in several key respects. First, at a conceptual level, SLOG has a probabilistic interpretation as a limit of Gibbs samplers, which are well understood and commonly employed in Bayesian
inference.
Second, the assumptions made by SLOG are weaker, i.e., no restricted isometry property or null space property is required for
convergence. \textcolor{black}{
Recall that a sufficient condition for the restricted isometry property to hold is if the entries of the design matrix are independently and identically distributed sub-Gaussian random variables \citep{baraniuk2008simple,devore2009instance}. The condition on the design matrix for convergence of SLOG is however much weaker: it is sufficient for the entries of the design matrix to have a joint distribution that is absolutely continuous with respect to Lebesgue measure on~$\reals^{n\times p}$.}
Third,
the algorithms
fundamentally
differ in their approach to the presence of zeros in the coefficient
paths.
These zeros would lead to infinite weights on the following iteration for both algorithms.
The approach of \citet{lai2013} has to make an explicit allowance for this fact by choosing a series of
$\epsilon$'s,
whereas we instead use a theoretical approach to show that these exact zeros almost surely do not occur.  We therefore show that this problem is a non-issue.
Moreover,
for the Lai algorithm, the $\epsilon$'s
are chosen using an estimate of sparsity~$h$ which may not be consistent with the provided penalty parameter $\lambda$. The SLOG algorithm does not require any such additional information.

\subsection{%
Detailed
Analysis in One Dimension}

The behavior of the recursive sequence $\left\{\bm{b}^{(k)}:k\ge1\right\}$ generated by the
Deterministic Bayesian Lasso
algorithm may be better understood through an explicit analysis of its behavior in the special case when $p=1$.
%
%
In this case,
the lasso estimator reduces to a simple soft-thresholding estimator, i.e.,
\begin{align}
\bhl{}=\frac{1}{n}\sth(\bm{X}^T\bm{y},\lambda)=\sth(\bh,\lambda/n),
\label{lasso-sth}
\end{align}
where $\sth(a,c)=(a-c\sign a)\;I(|a|>c)$ denotes the usual soft-thresholding function and $\bh=\bm{X}^T\bm{y}/n$ is the least squares estimator.  Then the sequence~(\ref{sequence}) becomes
\begin{align}
b^{(k+1)}=|b^{(k)}|\left(\lambda+n|b^{(k)}|\right)^{-1}n\bh=\frac{|b^{(k)}|\;\bh}{\lambda/n+|b^{(k)}|}.\label{sequence1}
\end{align}
In this case, it is possible to express the sequence
of SLOG iterates
in non-recursive form according to the following lemma.

\begin{lem}\label{lem:one-d-form}
When $p=1$, the sequence~(\ref{sequence1}) takes the form
\begin{align*}
b^{(k)}=\frac{c^k\lambda|b^{(0)}|\sign\bh}{\lambda+n|b^{(0)}|\sum_{m=0}^{k-1}c^m}
\end{align*}
for all $k\ge1$, where $c=n|\bh|/\lambda$.
\end{lem}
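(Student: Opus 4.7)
The plan is to prove Lemma~\ref{lem:one-d-form} by straightforward induction on $k$, using the recursion~\eqref{sequence1}. Before beginning, I would record two simple sign observations that make the induction clean: from~\eqref{sequence1}, the factor $|b^{(k)}|/(\lambda/n+|b^{(k)}|)$ is always nonnegative, so $\sign b^{(k+1)}=\sign\hat\beta$ as soon as $b^{(k)}\ne 0$. Thus, for all $k\ge 1$, $\sign b^{(k)}=\sign\hat\beta$ and $|b^{(k)}|=b^{(k)}\sign\hat\beta$, which is what the proposed closed form requires. (The degenerate cases $b^{(0)}=0$ or $\hat\beta=0$ reduce both the recursion and the claimed formula to zero.)

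For the base case $k=1$, I would just plug $b^{(0)}$ into \eqref{sequence1}, rewrite $n\hat\beta=c\lambda\sign\hat\beta$ using the definition $c=n|\hat\beta|/\lambda$, and check that the result equals $c\lambda|b^{(0)}|\sign\hat\beta/(\lambda+n|b^{(0)}|)$, which is the claimed formula at $k=1$ since $\sum_{m=0}^{0}c^m=1$.

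For the inductive step, the cleanest route is to introduce the abbreviation $D_k=\lambda+n|b^{(0)}|\sum_{m=0}^{k-1}c^m$, so the induction hypothesis reads $b^{(k)}=c^k\lambda|b^{(0)}|\sign\hat\beta/D_k$ and $|b^{(k)}|=c^k\lambda|b^{(0)}|/D_k$. The key algebraic identity to verify is the telescoping relation
\[
\frac{\lambda}{n}+|b^{(k)}|=\frac{\lambda D_{k+1}}{n D_k},
\]
which follows immediately from $D_k+nc^k|b^{(0)}|=D_{k+1}$. Substituting into~\eqref{sequence1} and using $c\lambda=n|\hat\beta|$ once more to convert $c^k\cdot n\hat\beta$ into $c^{k+1}\lambda\sign\hat\beta$ collapses the expression to $c^{k+1}\lambda|b^{(0)}|\sign\hat\beta/D_{k+1}$, which is exactly the claim at index $k+1$.

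There is really no hard step here; the only point where one can get tangled is bookkeeping the sign of $\hat\beta$ versus the sign of $b^{(k)}$, and the telescoping identity $D_k+nc^k|b^{(0)}|=D_{k+1}$ which drives the whole induction. I would make both of these explicit up front so the inductive step reduces to one line of substitution.
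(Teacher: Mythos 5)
Your proposal is correct and follows essentially the same route as the paper's proof: induction on $k$, with the degenerate case $b^{(0)}=0$ handled separately, the base case verified by direct substitution into~(\ref{sequence1}), and an inductive step whose algebra is exactly the telescoping identity you isolate (the paper performs the same computation after dividing numerator and denominator by $|b^{(0)}|$). Your explicit bookkeeping of $\sign b^{(k)}=\sign\hat\beta$ and the abbreviation $D_k$ are presentational choices only; no substantive difference.
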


\begin{proof}
The proof is given in the Supplemental Section.
\end{proof}

The closed-form expression for $b^{(k)}$ that is provided by
Lemma~\ref{lem:one-d-form}
allows several of its properties to be seen clearly, as described by the following theorem.

\begin{thm}\label{thm:one-d-properties}
When $p=1$, the sequence~(\ref{sequence1}) satisfies the following properties:
\begin{enumerate}[label=(\roman*)]
\item\label{one-d-all-nonzero}
If at least one of $\bh$ or $b^{(0)}$ is zero, then $b^{(k)}=0$ for all $k\ge1$.  Otherwise, $b^{(k)}\ne0$ for all $k\ge1$, and moreover $\sign b^{(k)}=\sign\bh\ne0$ for all $k\ge1$.
\item\label{one-d-all-same-side}
The value of $\sign(b^{(k)}-\bhl{})$ is the same for all $k\ge1$.
\item\label{one-d-fixed-points}
The function that maps $b^{(k)}$ to $b^{(k+1)}$ has two (not necessarily distinct) fixed points: zero and $\bhl{}$.
\end{enumerate}
\end{thm}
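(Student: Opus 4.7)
The plan is to exploit the closed-form expression from Lemma~\ref{lem:one-d-form} throughout, since each of the three properties is essentially an algebraic consequence of that formula. Write
\[
b^{(k)} = \frac{c^k\lambda|b^{(0)}|\sign\hat\beta}{\lambda + n|b^{(0)}|\sum_{m=0}^{k-1}c^m}, \qquad c = n|\hat\beta|/\lambda,
\]
and observe that the denominator is manifestly positive whenever $b^{(0)}\ne 0$ (and always nonnegative).

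For part~\ref{one-d-all-nonzero}, I would read off the numerator factor-by-factor. If either $\hat\beta=0$ (so $c^k = 0$ for $k\ge 1$ and $\sign\hat\beta = 0$) or $b^{(0)}=0$, then $b^{(k)}=0$ for every $k\ge 1$. Conversely, if $\hat\beta$ and $b^{(0)}$ are both nonzero, then $c>0$, $|b^{(0)}|>0$, and the denominator is positive, so $b^{(k)}\ne 0$ and moreover $\sign b^{(k)}=\sign\hat\beta$. (This also gives a clean, non-probabilistic analogue of Lemma~\ref{lem:no-zeros} in the one-dimensional setting.)

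For part~\ref{one-d-all-same-side}, the natural split is between $c\le 1$ (i.e. $|\hat\beta|\le\lambda/n$) and $c>1$. In the first regime, the soft-thresholding formula~(\ref{lasso-sth}) gives $\bhl{}=0$, so $\sign(b^{(k)}-\bhl{})=\sign b^{(k)}$, which is constant by part~\ref{one-d-all-nonzero} (with the edge cases $\hat\beta=0$ or $b^{(0)}=0$ yielding $b^{(k)}\equiv 0$). In the second regime, $\bhl{}=(\lambda(c-1)/n)\sign\hat\beta$. Summing the geometric series in the denominator and computing $b^{(k)}-\bhl{}$ over a common denominator, after cancellation one obtains
\[
b^{(k)}-\bhl{} = \sign\hat\beta \cdot \lambda(c-1)\cdot\frac{n|b^{(0)}|-\lambda(c-1)}{n\bigl[\lambda(c-1)+n|b^{(0)}|(c^k-1)\bigr]}.
\]
The denominator is positive, $\lambda(c-1)>0$, and the remaining factor $n|b^{(0)}|-\lambda(c-1)$ does not depend on~$k$; hence $\sign(b^{(k)}-\bhl{})$ is constant in~$k$. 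This algebraic cancellation is the one nontrivial computation and is where I would expect to spend the most care.

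For part~\ref{one-d-fixed-points}, I would give a direct solution of $b = |b|\hat\beta/(\lambda/n+|b|)$ rather than appealing to Lemma~\ref{lem:other-fixed}, since $p=1$ makes the calculation transparent. Multiplying through and considering the sign of~$b$: if $b>0$ one gets $b=\hat\beta-\lambda/n$, if $b<0$ one gets $b=\hat\beta+\lambda/n$, and in either case the condition for the sign to be consistent is exactly $|\hat\beta|>\lambda/n$, yielding $b=\sth(\hat\beta,\lambda/n)=\bhl{}$; and $b=0$ is always a solution. Thus the fixed points are precisely $0$ and $\bhl{}$, which coincide whenever $|\hat\beta|\le\lambda/n$.
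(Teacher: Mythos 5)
Your proposal is correct, and parts~\ref{one-d-all-same-side} and~\ref{one-d-fixed-points} take a genuinely different route from the paper. For part~\ref{one-d-all-same-side}, the paper never sums the geometric series: it works directly with the one-step recursion~(\ref{sequence1}), showing that each of the three relations $|b^{(k)}|>|\bhl{}|$, $|b^{(k)}|<|\bhl{}|$, $|b^{(k)}|=|\bhl{}|$ is preserved under a single iteration (using $|\bhl{}|=|\bh|-\lambda/n$ when $\bhl{}\ne0$), which combined with part~\ref{one-d-all-nonzero} gives constancy of $\sign(b^{(k)}-\bhl{})$ by induction. You instead compute $b^{(k)}-\bhl{}$ in closed form and isolate the $k$-dependence in a positive denominator; I verified your displayed identity and it is right. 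The trade-off: the paper's invariance argument needs only the recursion, while your calculation is essentially the same algebra the paper defers to the proof of Theorem~\ref{thm:one-d-convergence}, so you get the explicit gap $b^{(k)}-\bhl{}$ (and hence the convergence rate) almost for free. For part~\ref{one-d-fixed-points} the paper simply asserts that it follows from~\ref{one-d-all-nonzero} and~\ref{one-d-all-same-side}; your direct sign-split solution of $b=|b|\bh/(\lambda/n+|b|)$ is more self-contained and makes the ``not necessarily distinct'' clause transparent (the two fixed points coincide exactly when $|\bh|\le\lambda/n$). The only nitpick is that in your $c>1$ branch of part~\ref{one-d-all-same-side} you should note the trivial subcase $b^{(0)}=0$ (where $b^{(k)}\equiv0$ and the sign is constant), though your displayed formula in fact remains valid there.
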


\begin{proof}
The proof is given in the Supplemental Section.
\end{proof}

The closed-form expression for $b^{(k)}$ that is provided by Lemma~\ref{lem:one-d-form} also facilitates a rigorous statement of the convergence rate of this sequence to the lasso estimator.  This result is stated by the following theorem.

\begin{thm}\label{thm:one-d-convergence}
Assume $b^{(0)}\ne0$.  Then the sequence~(\ref{sequence1}) satisfies
\begin{align*}
\left|b^{(k)}-\bhl{}\right|\le\begin{cases}
\left(\dfrac{n|\bh|}{\lambda}\right)^k\left|b^{(0)}-\bhl{}\right|&\text{ if }n|\bh|<\lambda,\\
\dfrac{\lambda}{nk}\phantom{\dfrac{\dfrac{1}{1}}{\dfrac{1}{1}}}&\text{ if }n|\bh|=\lambda,\\
\left(\dfrac{\lambda}{n|\bh|}\right)^k\left|\dfrac{\bh}{b^{(0)}}\right|\left|b^{(0)}-\bhl{}\right|&\text{ if }n|\bh|>\lambda
\end{cases}
\end{align*}
for all $k\ge1$.
\end{thm}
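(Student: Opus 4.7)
The plan is to substitute the closed-form expression from Lemma~\ref{lem:one-d-form} and handle the three regimes for $n|\bh|$ separately, using the matching form of $\bhl{}$ obtained from the soft-thresholding representation~(\ref{lasso-sth}). To keep expressions manageable I will set $c=n|\bh|/\lambda$ and $d=n|b^{(0)}|/\lambda$. Since Lemma~\ref{lem:one-d-form} writes $b^{(k)}$ with $\sign\bh$ as its sign factor for every $k\ge 1$, the quantity $|b^{(k)}-\bhl{}|$ depends on $b^{(0)}$ only through $|b^{(0)}|$.

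In the case $n|\bh|<\lambda$, soft-thresholding gives $\bhl{}=0$, so the claim reduces to $|b^{(k)}|\le c^k|b^{(0)}|$. Using the geometric sum $\sum_{m=0}^{k-1}c^m=(1-c^k)/(1-c)\ge 0$, the denominator in Lemma~\ref{lem:one-d-form} is at least $\lambda$, which yields the bound immediately. In the boundary case $n|\bh|=\lambda$, again $\bhl{}=0$, and now $\sum_{m=0}^{k-1}c^m=k$, so Lemma~\ref{lem:one-d-form} gives $|b^{(k)}|=\lambda|b^{(0)}|/(\lambda+nk|b^{(0)}|)$; comparing this with $\lambda/(nk)$ and clearing denominators reduces the desired inequality to the tautology $0\le\lambda$.

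The main work lies in the regime $n|\bh|>\lambda$, where $c>1$ and $\bhl{}=\lambda(c-1)\sign\bh/n$. Applying $\sum_{m=0}^{k-1}c^m=(c^k-1)/(c-1)$ in Lemma~\ref{lem:one-d-form} and simplifying (multiplying numerator and denominator by $(c-1)/\lambda$ and using $\lambda d = n|b^{(0)}|$) yields the key identity
\[
b^{(k)}-\bhl{}=\frac{\sign\bh\cdot(c-1)\lambda(d-c+1)}{n\bigl[(c-1)+d(c^k-1)\bigr]},
\]
whose numerator is independent of $k$, making the $c^{-k}$-type decay visible through the $c^k$ in the denominator. It suffices to treat the harder sub-case $\sign b^{(0)}=\sign\bh$, since the opposite sign only enlarges $|b^{(0)}-\bhl{}|$ on the right-hand side (one checks that $d+c-1\ge|d-c+1|$ for $c>1$, $d>0$). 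In this sub-case $|b^{(0)}-\bhl{}|=\lambda|d-c+1|/n$ and $|\bh/b^{(0)}|=c/d$, so after factoring the common $\lambda|d-c+1|/n$ from both sides, the target inequality becomes
\[
\frac{c-1}{(c-1)+d(c^k-1)}\le\frac{c^{1-k}}{d}.
\]
Cross-multiplying (both denominators are positive) reduces this to $d(c^{1-k}-1)\le c^{1-k}(c-1)$, and since $c>1$ and $k\ge 1$ force $c^{1-k}\le 1$, the left-hand side is non-positive while the right-hand side is strictly positive, closing the argument.

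The principal obstacle is the algebraic collapse in the third case: one has to realize that the numerator of $b^{(k)}-\bhl{}$ becomes $k$-independent after the right manipulation, which is what allows $(\lambda/(n|\bh|))^k$ to emerge cleanly. Beyond that, one must take care to verify that flipping the sign of $b^{(0)}$ only relaxes the bound rather than tightening it. Once the substitutions $c$ and $d$ are installed and Lemma~\ref{lem:one-d-form} is invoked, the remainder is essentially bookkeeping.
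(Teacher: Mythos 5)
Your proposal is correct and follows essentially the same route as the paper: substitute the closed form from Lemma~\ref{lem:one-d-form}, evaluate the geometric sum, and split into the three cases $c<1$, $c=1$, $c>1$ with $c=n|\bh|/\lambda$. The only difference is cosmetic, in the third case: you derive the exact $k$-independent-numerator identity for $b^{(k)}-\bhl{}$ and dispose of the sign of $b^{(0)}$ explicitly, whereas the paper reaches the same bound via the reverse triangle inequality $\bigl||b^{(0)}|-|\bhl{}|\bigr|\le|b^{(0)}-\bhl{}|$ followed by successive enlargements of the denominator.
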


\begin{proof}
The proof is given in the Supplemental Section.
\end{proof}

The
main message
of
Theorem~\ref{thm:one-d-convergence}
is that in one dimension, the SLOG sequence converges geometrically
fast (or ``linear convergence'' in optimization terminology)
to the lasso solution as long as $|\bh|\ne\lambda/n$.  (Note that $|\bh|=\lambda/n$ corresponds to the boundary between zero and non-zero values of the lasso solution.)


%

%

\subsection{Computational Complexity}
\label{sec:comp}

We now consider the computational
complexity of
the
Deterministic Bayesian Lasso
algorithm.  We also compare this to the complexity of
the popular
coordinatewise method of~\citet{friedman2007}.

\subsubsection{The Deterministic Bayesian Lasso Algorithm}

Begin by assuming that all $p$ coefficients from the previous iteration are nonzero.
Then each iteration of SLOG requires computation of the vector $(\bm{X}^T\bm{X}+\lambda\bm{B}^{-1})^{-1}\bm{X}^T\bm{y}$. Since $\bm{B}$ is diagonal $p\times p$, computation of~$\lambda\bm{B}^{-1}$ is $O(p)$.  Multiplication of the $p\times p$ matrix $(\bm{X}^T\bm{X}+\lambda\bm{B}^{-1})^{-1}$ by the $p\times1$ vector $\bm{X}^T\bm{y}$ is $O(p^2)$.  Hence, the key step is the inversion of the $p\times p$ matrix $\bm{X}^T\bm{X}+\lambda\bm{B}^{-1}$.  This step is $O(p^3)$ via na\"{i}ve matrix inversion.  If $n\ge p$, this cannot be improved upon without resorting to more sophisticated methods of matrix inversion \citep[e.g.,][]{coppersmith1990}.

However, if $n<p$, then some improvement is possible by noting that $\bm{X}^T\bm{X}=\sum_{i=1}^n\bm{x}_i^{}\bm{x}_i^T$, where $\bm{x}_i^T$ denotes the $i$th row of $\bm{X}$.  Hence, we are essentially inverting a ``rank-$n$ correction'' of the diagonal matrix $\lambda\bm{B}^{-1}$.  The method of \citet{miller1981} defines $p\times p$ matrices $\bm{A}_0=\lambda\bm{B}^{-1}$ and $\bm{A}_i=\bm{A}_0+\sum_{j=1}^i\bm{x}_j\bm{x}_j^T$ for each $i\in\{1,\ldots,n\}$.  Note that $\bm{A}_0$ is diagonal, but $\bm{A}_1,\ldots,\bm{A}_n$ are not (in general).  Then clearly $\bm{A}_0^{-1}=\lambda^{-1}\bm{B}$, and the remaining inverses are given by the recursive formula
\[
\bm{A}_i^{-1}=\bm{A}_{i-1}^{-1}-\frac{1}{1+\bm{x}_i^T\bm{A}_{i-1}^{-1}\bm{x}_i^{}}\left(\bm{A}_{i-1}^{-1}\bm{x}_i\right)\left(\bm{A}_{i-1}^{-1}\bm{x}_i\right)^T.
\]
Computation of the quadratic form $\bm{x}_i^T\bm{A}_{i-1}^{-1}\bm{x}_i$ is $O(p^2)$.  Similarly, the multiplication $\bm{A}_{i-1}^{-1}\bm{x}_i$ is $O(p^2)$, and the outer product $\bm{A}_{i-1}^{-1}\bm{x}_i\bm{A}_{i-1}^{-1}\bm{x}_i^T$ is $O(p^2)$ as well.  Thus, a single step of the \citet{miller1981} recursion is $O(p^2)$, and since $n$ such steps are needed to compute the desired matrix $\bm{A}_n^{-1}=(\bm{X}^T\bm{X}+\lambda\bm{B}^{-1})^{-1}$, the overall result is $O(n\,p^2)$.

Alternatively, direct application of the Woodbury matrix identity in the $n<p$ case yields
\begin{align*}
(\bm{X}^T\bm{X}+\lambda\bm{B}^{-1})^{-1}
&=\lambda^{-1}\bm{B}-(\lambda^{-1}\bm{B})\bm{X}^T\left[\bm{I}_n+\bm{X}(\lambda^{-1}\bm{B})\bm{X}^T\right]^{-1}\bm{X}(\lambda^{-1}\bm{B})\\
&=\lambda^{-1}\bm{B}-\lambda^{-1}\bm{B}\bm{X}^T\left(\lambda\bm{I}_n+\bm{X}\bm{B}\bm{X}^T\right)^{-1}\bm{X}\bm{B}.
\end{align*}
Since $\bm{B}$ is diagonal, computation of the $n\times n$ matrix $\bm{X}\bm{B}\bm{X}^T$ is $O(n^2\,p)$.  Then inversion of the $n\times n$ matrix $\lambda\bm{I}_n+\bm{X}\bm{B}\bm{X}^T$ is $O(n^3)$.  Finally, computation of the $p\times p$ matrix $\bm{X}^T(\lambda\bm{I}_n+\bm{X}\bm{B}\bm{X}^T)^{-1}\bm{X}$ is $O(n\,p^2)$, and multiplication of the result by $\bm{B}$ on the left and right is only $O(p^2)$, noting again that $\bm{B}$ is diagonal.  Thus, the overall result is again $O(n\,p^2)$, the same as under the method of \citet{miller1981}.


Now suppose that only $p_\star$ of the $p$ total coefficients from the previous iteration are nonzero.
Then the analysis above still holds if the $p\times p$ diagonal matrix~$\bm{B}$ and the $n\times p$ matrix~$\bm{X}$ are replaced by the $p_\star\times p_\star$ diagonal matrix~$\bm{B}_\star$ and the $n\times p_\star$ matrix~$\bm{X}_\star$.  Hence, a single iteration of the SLOG algorithm is
$O(p_\star^2\min\{n,p_\star\})$.

\subsubsection{Coordinate Descent Algorithm}
\label{sec:cdalg}

The pathwise coordinate optimization approach of \citet{friedman2007} successively recalculates each component~$b_j$ of the ``current'' coefficient vector~$\bm{b}$ according to
\begin{equation}
b_j=\sth_{\lambda}\!\left(\bm{x}_{(j)}^T\bm{y}-\sum_{k\ne j}\bm{x}_{(j)}^T\bm{x}_{(k)}^{}b_k\right),
\label{sth}
\end{equation}
where $\bm{x}_{(j)}$ denotes the $j$th column (not row) of $\bm{X}$ and $\sth_\lambda(\cdot)$ is defined by
\[
\sth_\lambda(u)=\begin{cases}u-\lambda&\text{ if }u>\lambda,\\0&\text{ if }-\lambda\le u\le\lambda,\\u+\lambda&\text{ if }u<-\lambda,\end{cases}
\]
i.e., $\sth_\lambda(\cdot)$ soft-thresholds its argument at~$\lambda$.  Note that the initial computation of the quantities $\bm{x}_{(j)}^T\bm{y}$ and $\bm{x}_{(j)}^T\bm{x}_{(k)}^{}$ is equivalent to the initial computation of $\bm{X}^T\bm{y}$ and $\bm{X}^T\bm{X}$ in the SLOG algorithm.  After this, (\ref{sth}) involves only scalar operations.  Thus, updating a single~$b_j$ is~$O(p)$, and a full iteration in which $b_1,\ldots,b_p$ are each updated once is~$O(p^2)$.


Now suppose that only $p_\star$ of the $p$ total coefficients of the ``current'' coefficient vector are nonzero, and suppose for simplicity
that any
effects due to changes in~$p_\star$ during a single iteration are negligible.  Then the sum in (\ref{sth}) includes at most $p_\star$ nonzero terms, and hence a full update of $b_1,\ldots,b_p$ is in fact only~$O(p\,p_\star)$.

However, this can be improved upon by noting that, at any given point in the algorithm, the sum in~(\ref{sth}) has the same value for all $j$ such that the ``current'' $b_j$ is zero.  Thus, if the components are ordered in such a way that the ``currently'' zero components are updated consecutively, then the value of the sum in~(\ref{sth}) may be calculated only once for the update of all $p-p_\star$ zero components, assuming none of these components change to nonzero.  Hence, although each iteration updates all $p$ components, the sum in~(\ref{sth}) only needs to be computed $O(p_\star)$ times.  Hence, a single iteration of coordinatewise descent is in fact only $O(p_\star^2)$.

Thus,
based on computational complexity, it would appear
that coordinate descent
may
enjoy a
non-negligible
advantage over
Deterministic Bayesian Lasso
in terms of the time needed per iteration.
Then any advantage to be gained by SLOG must be realized through a decrease in the number of iterations needed for convergence that is substantial enough to counteract the increased complexity of each individual iteration.  We show in the next section that this is indeed
very much
the case.

\subsection{Similar Algorithms for Lasso-Like Procedures}

The
Deterministic Bayesian Lasso
algorithm was derived as a degenerate limit of a Gibbs sampler for the Bayesian lasso.  However, Bayesian interpretations and corresponding Gibbs samplers have been proposed for a variety of other penalized regression methods beyond the original lasso.  It is natural to consider using
approaches similar to the Deterministic Bayesian Lasso
to  obtain analogous algorithms for such estimators.  We now briefly discuss this idea for some specific lasso variants.
Proofs are not given for the sake of brevity.
In each of the following sections, the hierarchical Bayesian construction of the problem is due to \citet{kyung2010}.

\subsubsection{Elastic Net}

The elastic net \citep{zou2005} imposes both an $L_1$ penalty and an $L_2$ penalty on $\bbeta$ by defining the estimator
\begin{align*}
\bbhen=\argmin_{\bbeta\in\reals^p}\left(||\bm{y}-\bm{X}\bbeta||_2^2+2\lambda_1||\bbeta||_1+\lambda_2\normof{\bbeta}_2^2\right).
\end{align*}
This estimator may be equivalently defined as the mode of the marginal posterior of $\bbeta$ under the Bayesian hierarchical model
\begin{align*}
\bm{y}\mid\bbeta&\sim N_n(\bm{X}\bbeta,\;\sigma^2\bm{I}_n)\\
\beta_j\mid\omega_j&\simind N\left(0,\;\frac{\sigma^2\omega_j}{1+\lambda_2\omega_j}\right)\\
\omega_j&\simiid\text{Exp}(\lambda_1^2/2\sigma^2).
\end{align*}
Here it should be explicitly noted that
at first glance
our hierarchy appears to differ from that of \citet{kyung2010}.  However, it can be seen that the two representations are in fact equivalent by noting that our regularization parameter~$\lambda_1$ and the regularization parameter $\lambda_1^{\kggc}$ of \citet{kyung2010} are related according to $\lambda_1=\lambda_1^{\kggc}\sqrt{\sigma^2}$, and
without loss of generality
we take $\sigma^2$ as known.  Then the elastic net Gibbs sampler draws alternately from the conditionals
\begin{align*}
\bbeta\mid\bm{\omega},\bm{y}&\sim N_p\left[\left(\bm{X}^T\bm{X}+\lambda_2\bm{I}_p+\bm{\Omega}^{-1}\right)^{-1}\bm{X}^T\bm{y},\;
\sigma^2\left(\bm{X}^T\bm{X}+\lambda_2\bm{I}_p+\bm{\Omega}^{-1}\right)^{-1}\right],\\
\omega_j^{-1}\mid\bbeta,\bm{y}&\stackrel{\text{iid}}{\sim}\begin{cases}
\text{InverseGaussian}(\lambda_1/|\beta_j|,\;\lambda_1/\sigma^2) & \text{ if }\beta_j\ne0,\\
\text{InverseGamma}(1/2,\;\lambda_1^2/2\sigma^2) & \text{ if }\beta_j=0,
\end{cases}
\end{align*}
where (as with the lasso) we may replace $\left(\bm{X}^T\bm{X}+\lambda_2\bm{I}_p+\bm{\Omega}^{-1}\right)^{-1}$ by the alternative expression
$\bm{\Omega}^{1/2}\left[\bm{I}_p+\bm{\Omega}^{1/2}\left(\lambda_2\bm{I}_p+\bm{X}^T\bm{X}\right)\bm{\Omega}^{1/2}\right]^{-1}\bm{\Omega}^{1/2}$ whenever an element of $\bm{\omega}$ is zero.
Taking the degenerate limits as $\sigma^2\to0$ yields the recursion relation
\begin{align*}
\bm{b}^{(k+1)}=\left(\bm{B}^{(k)}\right)^{1/2}\left[\lambda_1\bm{I}_p+\left(\bm{B}^{(k)}\right)^{1/2}\left(\lambda_2\bm{I}_p+\bm{X}^T\bm{X}\right)\left(\bm{B}^{(k)}\right)^{1/2}\right]^{-1}\left(\bm{B}^{(k)}\right)^{1/2}\bm{X}^T\bm{y},
\end{align*}
where $\bm{B}^{(k)}=\Diag(|b^{(k)}_1|,\ldots,|b^{(k)}_p|)$ as before.
Numerical investigations indicate that the recursive algorithm above converges to the elastic net estimate for a variety of real and simulated data.

\subsubsection{Group Lasso}

The group lasso \citep{yuan2006} is intended for use when the covariates may be naturally classified into groups and there
is a rationale or need
for covariates in the same group to
be simultaneously either
included or excluded from the
model.
More precisely, suppose there are $M$ groups of covariates, and define the notation
$G_m=\{j:\text{covariate }j\text{ is in group }m\}$.
Then the group lasso estimator is defined as
\begin{align*}
\bbhgl=\argmin_{\bbeta\in\reals^p}\left(||\bm{y}-\bm{X}\bbeta||_2^2+2\lambda\sum_{m=1}^M\sqrt{\sum_{j\in G_m}\beta_j^2}\right).
\end{align*}
This estimator may be equivalently defined as the mode of the marginal posterior of $\bbeta$ under the Bayesian hierarchical model
\begin{align*}
\bm{y}\mid\bbeta&\sim N_n(\bm{X}\bbeta,\;\sigma^2\bm{I}_n)\\
\bbeta_{(m)}\mid\omega_m&\simind N_{p_m}\left(\bm{0}_{p_m},\;\sigma^2\omega_m\bm{I}_{p_m}\right)\\
\omega_m&\simiid\text{Gamma}\left(\frac{p_m+1}{2},\;\frac{\lambda^2}{2\sigma^2}\right),
\end{align*}
for $m\in\oneto{M}$, where the elements of $\bbeta_{(m)}$ are $\{\beta_j:j\in G_m\}$ and $p_m=|G_m|$, the number of covariates in group~$m$.  Once again, we note that the $\lambda$ of our hierarchy and the $\lambda^{\kggc}$ of \citet{kyung2010} differ but are related by $\lambda=\lambda^{\kggc}\sqrt{\sigma^2}$.  Then the group lasso Gibbs sampler draws alternately from the conditionals
\begin{align*}
\bbeta\mid\bm{\omega},\bm{y}&\sim N_p\left[\left(\bm{X}^T\bm{X}+\tilde{\bm{\Omega}}^{-1}\right)^{-1}\bm{X}^T\bm{y},\;
\sigma^2\left(\bm{X}^T\bm{X}+\tilde{\bm{\Omega}}^{-1}\right)^{-1}\right],\\
\omega_m^{-1}\mid\bbeta,\bm{y}&\stackrel{\text{iid}}{\sim}\begin{cases}
\text{InverseGaussian}(\lambda/\normof{\bbeta_{(m)}}_2,\;\lambda/\sigma^2) & \text{ if }\normof{\bbeta_{(m)}}_2\ne0,\\
\text{InverseGamma}(1/2,\;\lambda^2/2\sigma^2) & \text{ if }\normof{\bbeta_{(m)}}_2=0,
\end{cases}
\end{align*}
where $\tilde{\bm{\Omega}}$ is the $p\times p$ diagonal matrix with $j$th diagonal element equal to $\omega_{\tilde{m}_j}$, where $\tilde{m}_j$ is the value of~$m$ such that $j\in G_m$.  Note that we may once again replace $\left(\bm{X}^T\bm{X}+\tilde{\bm{\Omega}}^{-1}\right)^{-1}$ by the alternative expression $\tilde{\bm{\Omega}}^{1/2}\left(\bm{I}_p+\tilde{\bm{\Omega}}^{1/2}\bm{X}^T\bm{X}\tilde{\bm{\Omega}}^{1/2}\right)^{-1}\tilde{\bm{\Omega}}^{1/2}$ whenever an element of $\bm{\omega}$ is zero.  Taking the degenerate limits as $\sigma^2\to0$ yields the recursion relation
\begin{align*}
\bm{b}^{(k+1)}=(\tilde{\bm{B}}^{(k)})^{1/2}\left[\lambda\bm{I}_p+(\tilde{\bm{B}}^{(k)})^{1/2}\bm{X}^T\bm{X}(\tilde{\bm{B}}^{(k)})^{1/2}\right]^{-1}(\tilde{\bm{B}}^{(k)})^{1/2}\bm{X}^T\bm{y},
\end{align*}
where $\tilde{\bm{B}}^{(k)}$ is the $p\times p$ diagonal matrix with $j$th diagonal element equal to $\normof{\bbeta_{(\tilde{m}_j)}}_2$.
Once again, numerical investigations in a variety of settings indicate that the iterates of the algorithm above converge to the group lasso estimate.
\section{%
{%
\color{black}
Applications of
\color{black}
}%
the Deterministic Bayesian Lasso}
\label{sec:numerical}

In this section we investigate the
numerical performance
of the
Deterministic Bayesian Lasso
algorithm in terms of both the number of iterations and computational time. To investigate the \textcolor{black}{speed} of convergence of SLOG we \textcolor{black}{first} apply the algorithm to simulated data. The simulated data are generated using a range of values for $n$, $p$, and the
\textcolor{black}{
level of multicollinearity between covariates.
} Specifically, data are generated according to the following scheme of \citet{JFTHRT}:
\begin{equation}
\label{Esim1}
\bm{y} = \sum_{j=1}^{p} \bm{x}_{(j)}\beta_j+k\sigma,
\end{equation}
\noindent where $\beta_j=(-1)^j\exp(-[j-1]/10)$, $\sigma \sim N(0,1)$, the covariate data $\bm{X}$ of dimension $n \times p$ are multivariate normal with pairwise correlation $\rho$, and $k$ is selected to give a signal-to-noise ratio of 3. Even though the $\bbeta$ are all
technically
non-zero, in the simulations we control the number of non-zero coefficient estimates in the lasso solution (i.e., the degree of of sparsity).

\textcolor{black}{For each dataset generated using relationship \eqref{Esim1}, lasso solutions (denoted $\bm{\beta}^{\lars}$ and \textcolor{black}{corresponding penalty parameter} $\lambda^{\lars}$) corresponding to different levels of sparsity ($0 \leq s \leq 1$) are found using the \textbf{lars} algorithm available in the statistical package R \citep{lars}.}
\textcolor{black}{The level of sparsity, \textcolor{black}{$s$,} defines the proportion of non-zero elements in $\bm{\beta}^{\lars}$ via the relationship $\sum_{i=1}^{p}I(\beta_{i}^{\lars} \neq 0)=(1-s)\times
\min\{n,p\}$, larger $s$ corresponding to a sparser lasso solution. It is important to note that a normalization factor of $\min\{n,p\}$ has been used in the above definition of sparsity because lasso estimates can have at most $\min\{n,p\}$ non-zero coefficients. This normalization allows sparsity levels to take on values from 0 to 1. It is important  however to keep in mind that in typical high dimensional settings when $p \gg n$, the number of zero coefficients as a proportion of the total number of predictors is actually $(s \times n + (p-n))/p  \gg s$.} Implementing the SLOG algorithm with $\lambda=\lambda^{\lars}$ allows the convergence of $\bm{\hat{\beta}}^{\SLOG}$ to $\bm{\beta}^{\lars}$ to be gauged. For comparison, the convergence of the estimates obtained from the coordinate descent algorithm (CD) described in Section~\ref{sec:cdalg}, $\bm{\hat{\beta}}^{\CD}$ to $\bm{\beta}^{\lars}$, is also investigated. \textcolor{black}{Judging the convergence of $\bm{\hat{\beta}}^{\SLOG}$ and $\bm{\hat{\beta}}^{\text{CD}}$ relative to $\bm{\beta}^{\lars}$ ensures that our reported timings for SLOG and CD are comparable - in the sense that the reported timings for SLOG and CD correspond to being the same distance from the ``ground truth" \textbf{lars} solution. Our choice of \textbf{lars} for this purpose stems from the fact that the \textbf{lars} algorithm is an established method that does not favor either CD or SLOG and, before the advent of CD, was arguably the most common means of computing the lasso solution.} The values $K^{\SLOG}$ and $K^{\CD}$ will denote the number of iterations until convergence of $\bm{\hat{\beta}}^{\SLOG}$ and $\bm{\hat{\beta}}^{\CD}$ to $\bm{\beta}^{\lars}$.
The quantity $d(\SLOG,\lars)$ denotes the scaled distance $||\bm{\hat{\beta}}^{\SLOG}-\bm{\beta}^{\lars}||_2/||\bm{\beta}^{\lars}||_2$. \textcolor{black}{The quantity $d(\SLOG(k))=
||\bm{\hat{\beta}}^{\SLOG}(k)-\bm{\hat{\beta}}^{\SLOG}(k-1)||_2 / ||\bm{\hat{\beta}}^{\SLOG}(k-1)||_2$ denotes the scaled distance between the values of $\bm{\hat{\beta}}^{\SLOG}$ at successive iterations $k-1$ and $k$. The meaning of $d(.,.)$ and $d(.)$ with other values follows similarly. In the paper, whenever SLOG is run in a stand alone manner (i.e., not in comparison with CD) it is iterated until $d(\SLOG(k))<1e-3$.}

\textcolor{black}{ We developed code in the statistical package R \citep{statR} to implement both the SLOG and CD algorithms.}
\textcolor{black}{ Both the SLOG and CD algorithms require a choice of starting value for the coefficients. Unless otherwise specified, SLOG and CD will be run using constant, and ``uninformed", starting values of $\bm{\hat{\beta}}^{\SLOG}=\bm{\hat{\beta}}^{\CD}=\sign(\bm{X}^T\bm{y})\lambda^{\lars}/p$.} In the analysis that follows, before apply SLOG and CD the covariate data will be \textcolor{black}{centered} and scaled to have mean zero and unit variance \textcolor{black}{and the response data will be centered to have mean zero}.

\begin{figure}
\hfill
\subfigure[Uninformed Start: $n=1000, p=200$]{\includegraphics[scale=0.8]{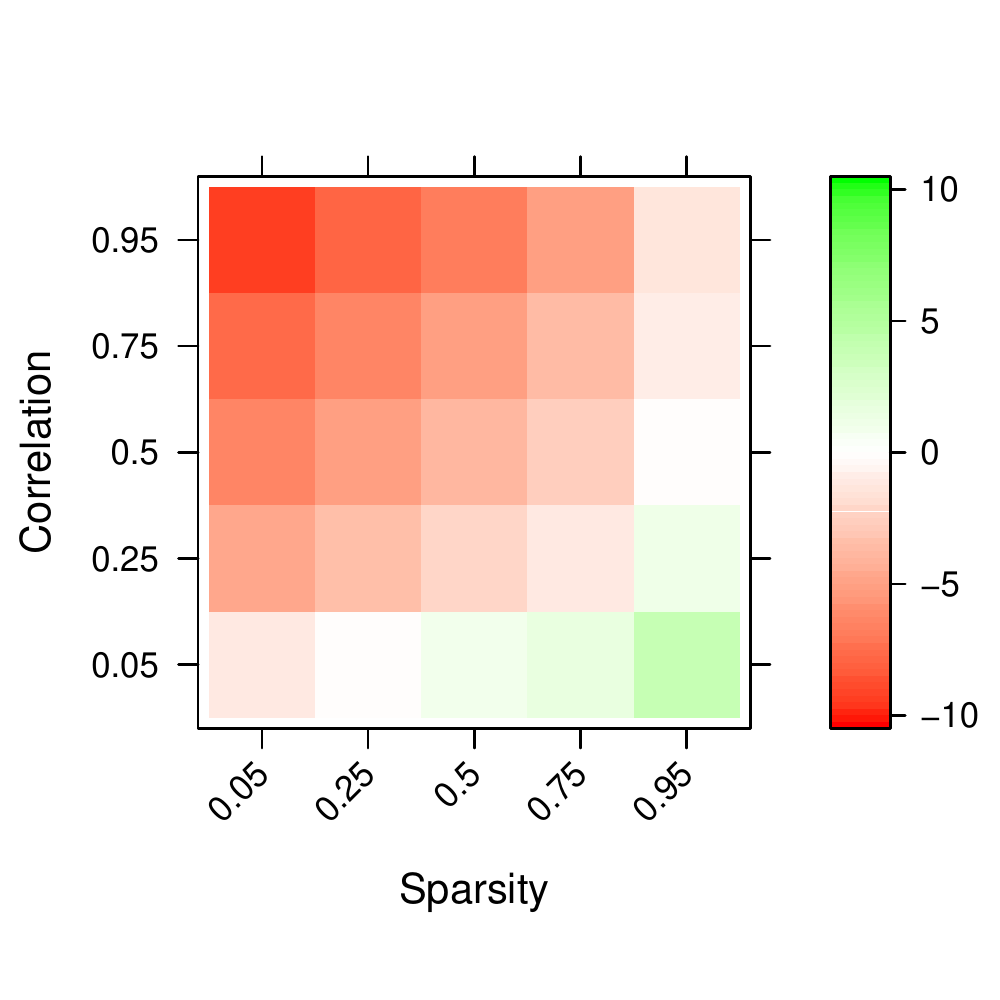} \label{F:heat1a}}
\hfill
\subfigure[Uninformed Start: $n=200, p=1000$]{\includegraphics[scale=0.8]{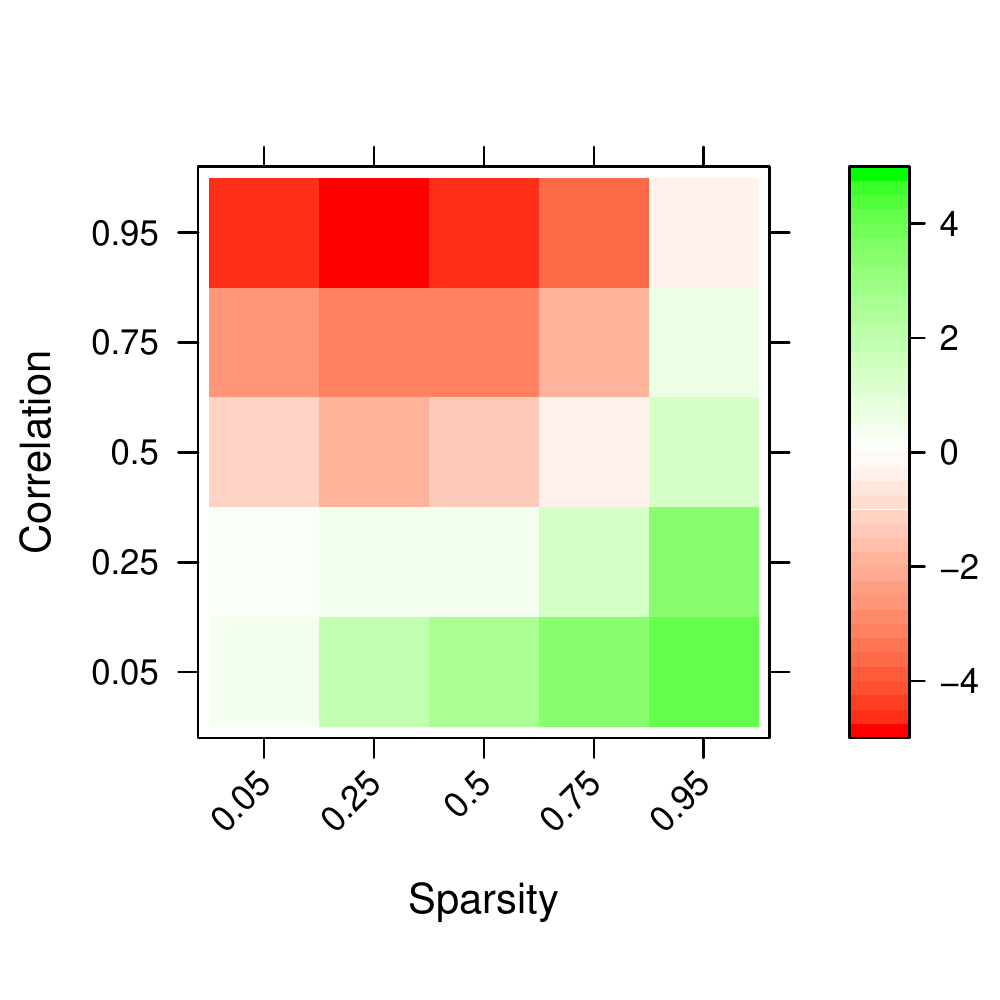} \label{F:heat1b}}
\hfill
\subfigure[Random Start: $n=1000, p=200$]{\includegraphics[scale=0.8]{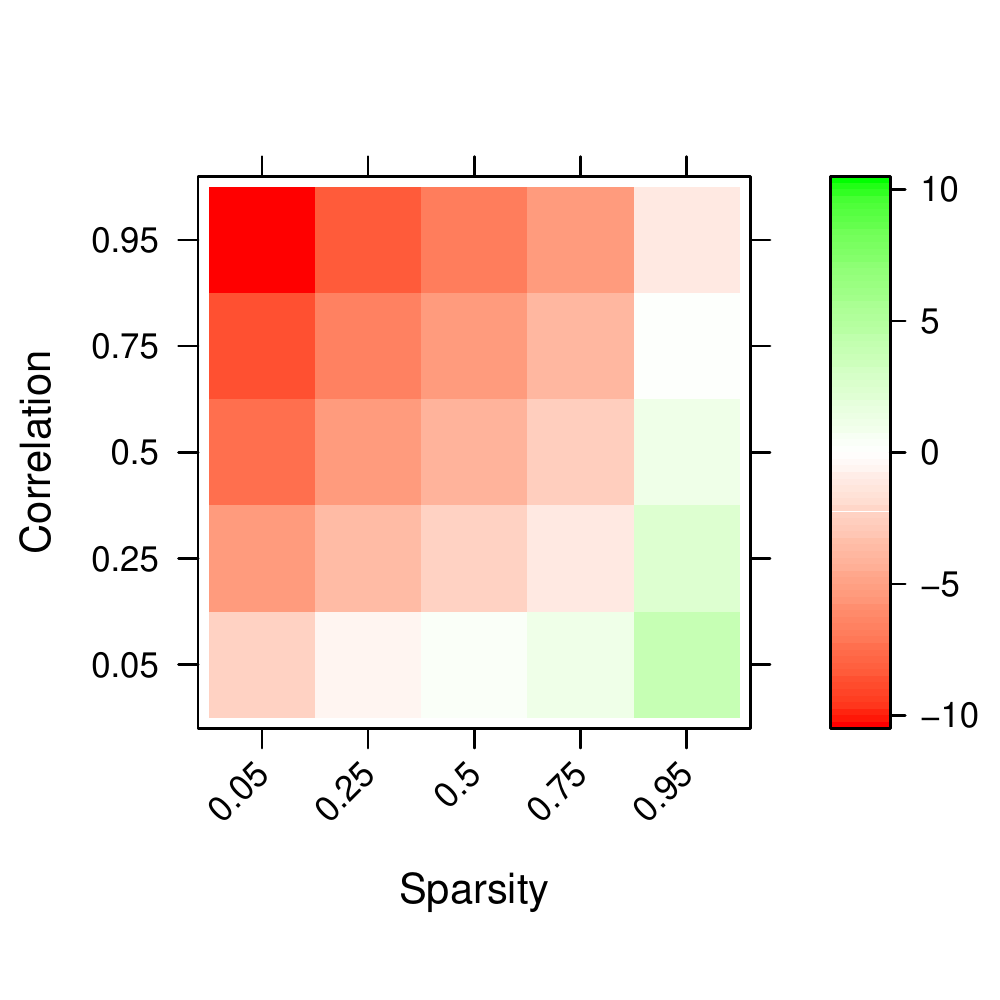} \label{F:heat1c}}
\hfill
\subfigure[Constant Start: $n=1000, p=200$]{\includegraphics[scale=0.8]{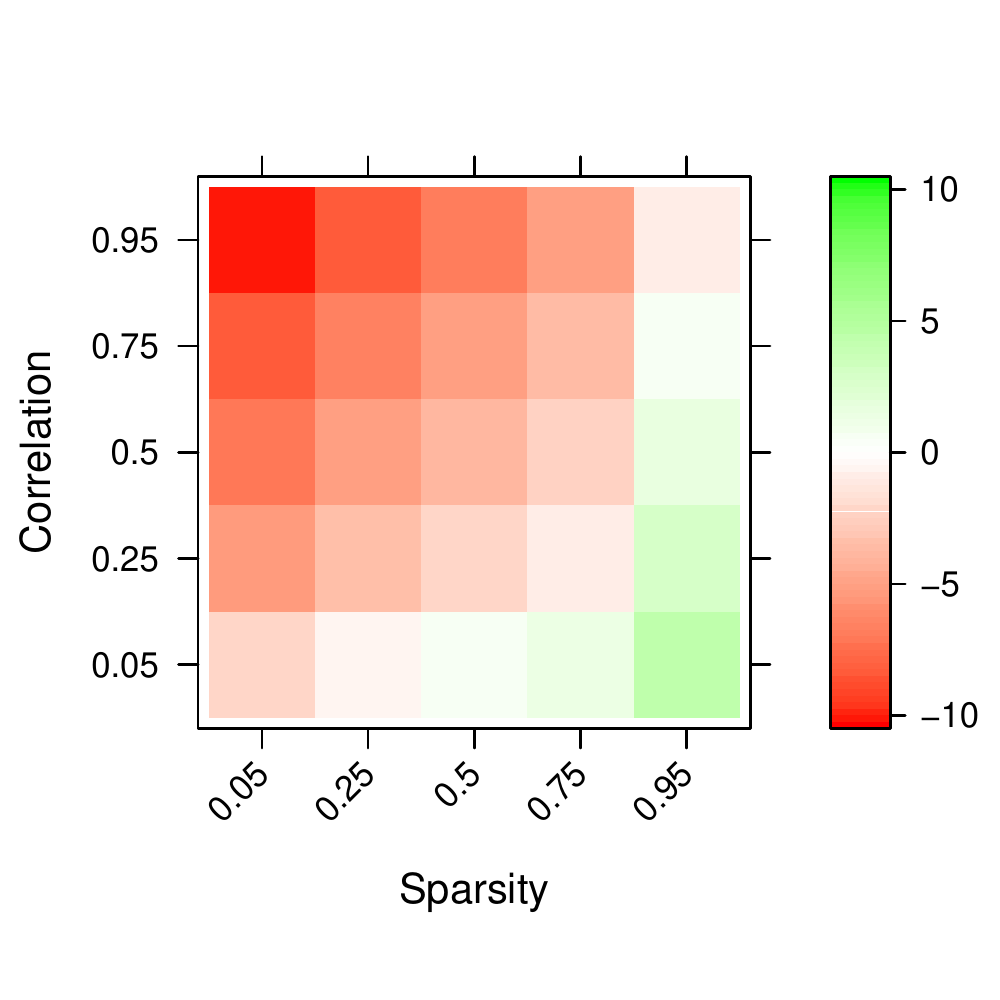} \label{F:heat1d}}
\caption{\label{F:heat1} Difference in the number of iterations until convergence
\textcolor{black}{for the SLOG algorithm relative to the CD algorithm as given by $\log(K^{\SLOG}/K^{\CD})$} . ``Red" regions correspond to SLOG taking fewer iterations to converge as compared to CD, and vice versa for ``green" regions. Data were simulated using relationship \eqref{Esim1} with values of correlation ($\rho$) in the range [0.05,0.95]. Sparsity levels ($s$) in the range [0.05,0.95] were used for $\bm{\beta}^{\lars}$. \textcolor{black} {The starting values for each coefficient in SLOG and CD are:
Uninformed Start: $\sign(\bm{X}^T\bm{y})\lambda^{\lars}/p$, Random Start: sampled from a Uniform distribution on [-5,5], and Constant Start: 0.1.} Convergence for each algorithm was achieved when $d(\CD,\lars)$ or $d(\SLOG,\lars)$ \textcolor{black}{attained a level} less than \textcolor{black}{1e-3. The values in the plots are based on 100 simulated datasets.}}
\end{figure}

\subsection{Iteration Comparison for SLOG}

Figure~\ref{F:heat1} illustrates the difference in the number of iterations until convergence to $\bm{\beta}^{\lars}$ for the SLOG algorithm relative to the CD algorithm \textcolor{black}{for various levels of sparsity, multicollinearity, and the choice of coefficient starting values}. The most striking observation is the marked improvement in the convergence of SLOG relative to CD as sparsity decreases and multicolinearity increases. For example, in scenarios of both low sparsity and high multicollinearity, CD is observed to require \textcolor{black}{over 250,000} additional iterations to converge compared to SLOG. These results, a consequence of the ``one-at-a-time'' coefficient updating employed by CD, are not surprising. It has been documented that CD converges more \textcolor{black}{slowly as multicollinearity increases} \citep{friedman2007}, yet a viable alternative that is demonstrably better has not been proposed. Additionally, as sparsity decreases there are more non-zero coefficients (i.e., a larger ``active set'' of covariates) that are impacted by each one-at-a-time update employed by CD. The issues associated with one-at-a-time updating are avoided by the simultaneous (i.e. ``all-at-once'') updating employed by SLOG.

\begin{figure}
\hfill
\subfigure[$s=0.05$, $\rho=0.10$ $(\circ)$/$\rho=0.95$ $(\triangle)$]{\includegraphics[scale=0.8]{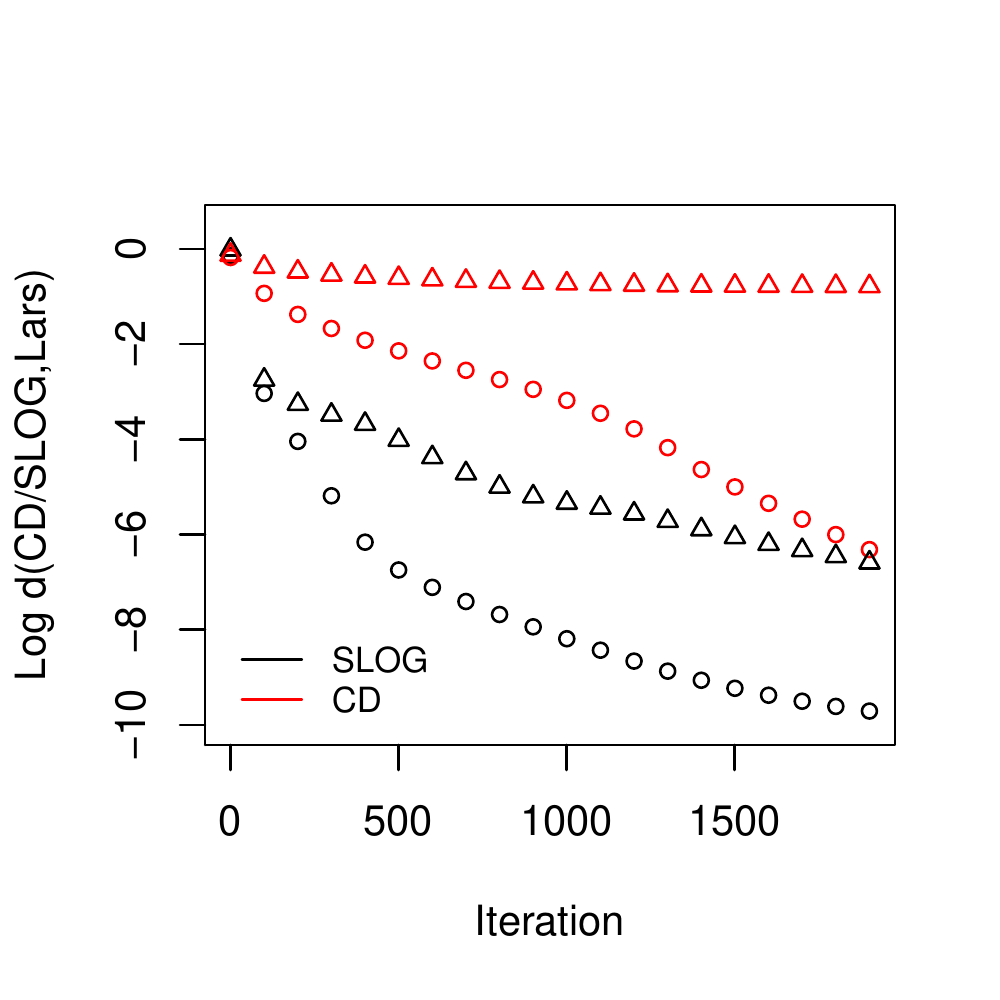} \label{F:conva}}
\hfill
\subfigure[$\rho=0.95$, $s=0.90$ $(\circ)$/$s=0.05$ $(\triangle)$]{\includegraphics[scale=0.8]{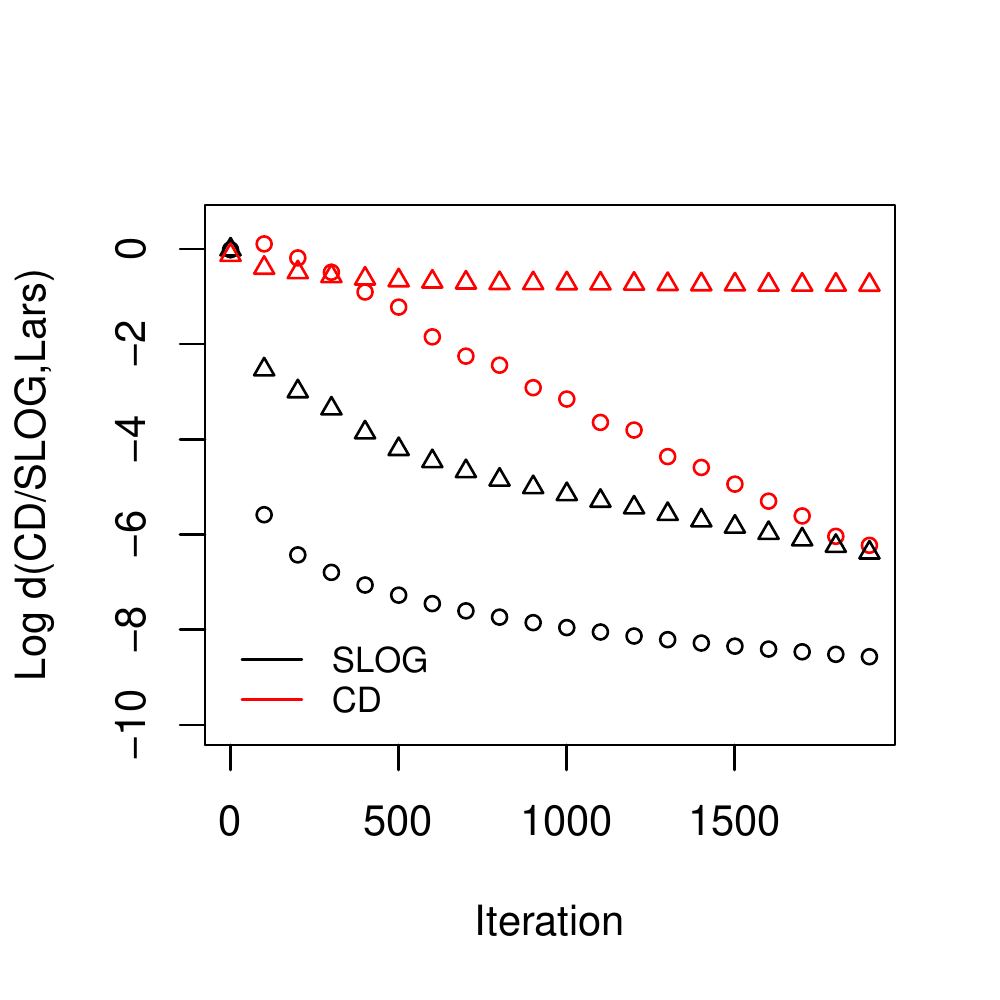} \label{F:convb}}
\hfill
\caption{\label{F:conv} Plots of the distance of $\bm{\hat{\beta}}^{\SLOG}$ and $\bm{\hat{\beta}}^{\CD}$ from $\bm{\beta}^{\lars}$ versus iteration number, for data simulated using relationship \eqref{Esim1} with $n=200$, $p=1000$ and varying levels of correlation ($\rho$). Varying values for the sparsity ($s$) of $\bm{\beta}^{\lars}$ are considered.}
\end{figure}

Examples of the ``overall" convergence paths for both the SLOG and CD algorithms are provided in Figure~\ref{F:conv}. This figure illustrates the more rapid convergence of SLOG compared to CD in situations of low sparsity (Figure~\ref{F:conva}) \textcolor{black}{and/or} high multicollinearity (Figure~\ref{F:convb}). Moreover, the very flat convergence path for CD when $s=0.05$ and $\rho=0.95$ demonstrates the extreme difficulties that CD experiences when faced with both low sparsity \textcolor{black}{and/or} high multicollinearity. The SLOG algorithm does not experience these difficulties.

As a further comparison of the convergence \textcolor{black}{patterns} of SLOG compared to CD, Figure~\ref{F:path} contains plots of the coefficient estimates from both algorithms as a function of iteration number. These plots illustrate the strikingly divergent paths that the SLOG and CD coefficient estimates take on their way to $\bm{\beta}^{\lars}$. In particular, in the setting of high multicollinearity and low sparsity (Figure~\ref{F:patha}) SLOG converges immediately to $\bm{\beta}^{\lars}$. Conversely, CD converges much more slowly to $\bm{\beta}^{\lars}$. Moreover,
\textcolor{black}{
due to the high multicollinearity, the path CD takes for one of the coefficients is initially in the opposite direction to the solution.}
In the setting of low multicollinearity and high sparsity (Figure~\ref{F:pathb}) essentially the opposite behavior to Figure~\ref{F:patha} is observed.

\begin{figure}
\hfill
\subfigure[$s=0, \rho=0.99, n=50, p=3$]{\includegraphics[scale=0.8]{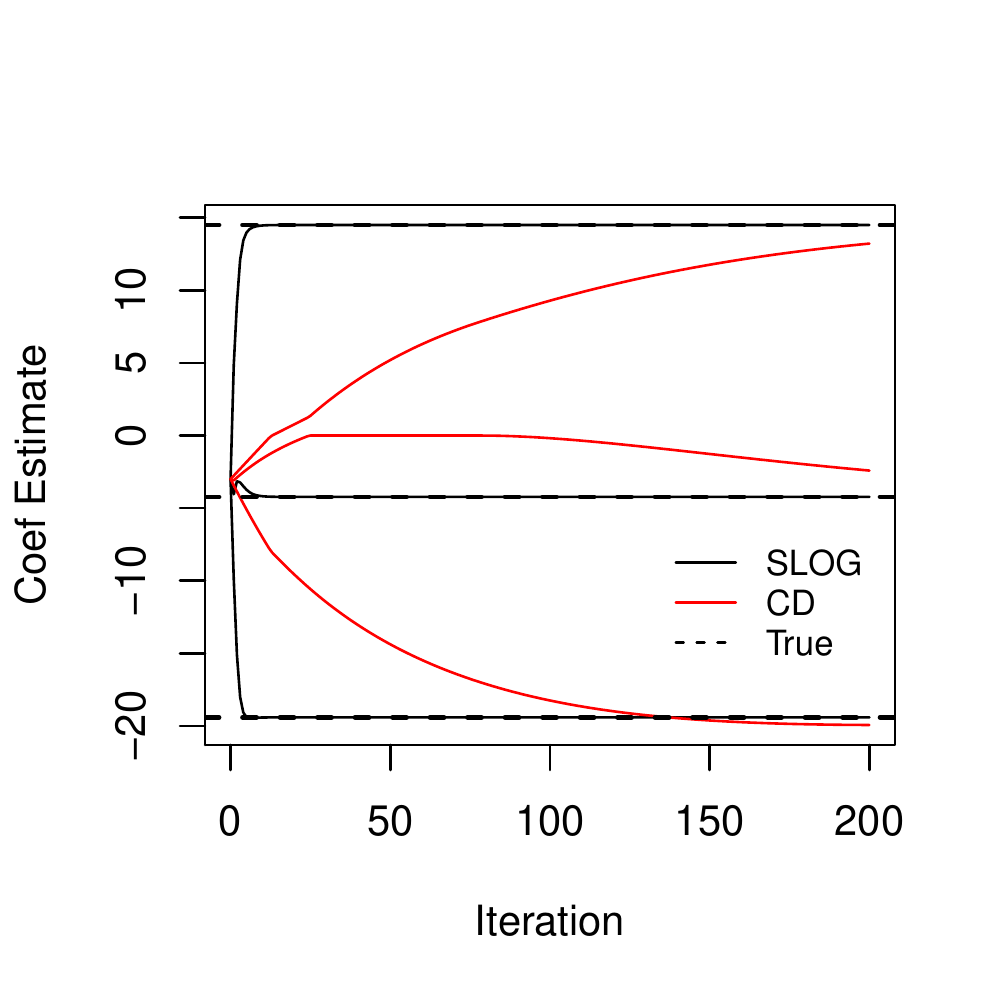} \label{F:patha}}
\hfill
\subfigure[$s=2/3, \rho=0.25, n=50, p=3$]{\includegraphics[scale=0.8]{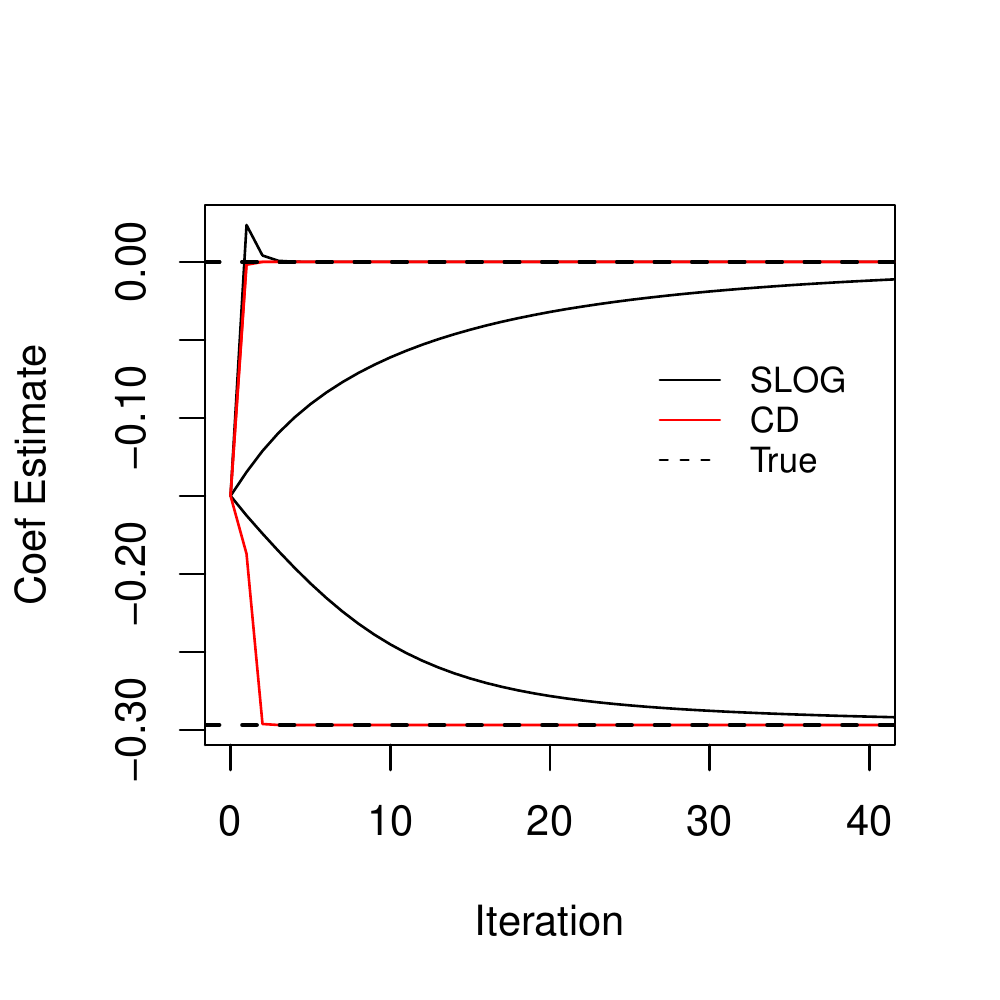} \label{F:pathb}}
\hfill
\caption{\label{F:path} Convergence of $\bm{\hat{\beta}}^{\SLOG}$ and $\bm{\hat{\beta}}^{\CD}$ to $\bm{\beta}^{\lars}$ in a ``non-sparse, high multicollinearity" setting (Figure \ref{F:patha}) and a ``sparse, low multicollinearity" setting (Figure \ref{F:pathb}). Data were simulated using relationship \eqref{Esim1}. The dashed vertical lines represent the values of $\bm{\beta}^{\lars}$. \textcolor{black}{SLOG and CD were implemented using starting values for each coefficient of -3 in (a) and -0.15 in (b).}}
\end{figure}

\textcolor{black}{
Figure~\ref{F:zero} contains plots of the rate at which the SLOG and CD algorithms set coefficient estimates to effective zeros. For the purposes of this figure, effective zeros for SLOG and CD are, respectively, defined as $|\bm{\hat{\beta}}^{\SLOG}|$ or $|\bm{\hat{\beta}}^{\CD}|$ being smaller than \textcolor{black}{1e-13.}} The rate at which the two algorithms set coefficients to exactly zero cannot be \textcolor{black}{directly} compared because by design SLOG coefficient estimates approach zero in the limit, rather than being set exactly to zero. Figure~\ref{F:zeroa} clearly illustrates that in situations of high multicollinearity and low sparsity that CD has much more difficulty locating zeros compared to SLOG. This finding, coupled with the fact that CD has more difficulty converging to non-zero coefficients, explains the larger number of iterations required by CD to converge compared to SLOG.
\textcolor{black}{In the situation of low multicollinearity and high sparsity (Figure~\ref{F:zerob}) the opposite behavior is observed with SLOG having more difficulty locating zeros compared to CD. Moreover, in the setting of Figure~\ref{F:zerob} CD is able to set the overwhelming majority of coefficients to effective zeros in a single iteration.} In Section~\ref{sec:rslog} we propose a modification to SLOG that enables the SLOG coefficient estimates to be set to exactly zero.

\begin{figure}
\hfill
\subfigure[$s=0.05, \rho=0.95$]{\includegraphics[scale=0.8]{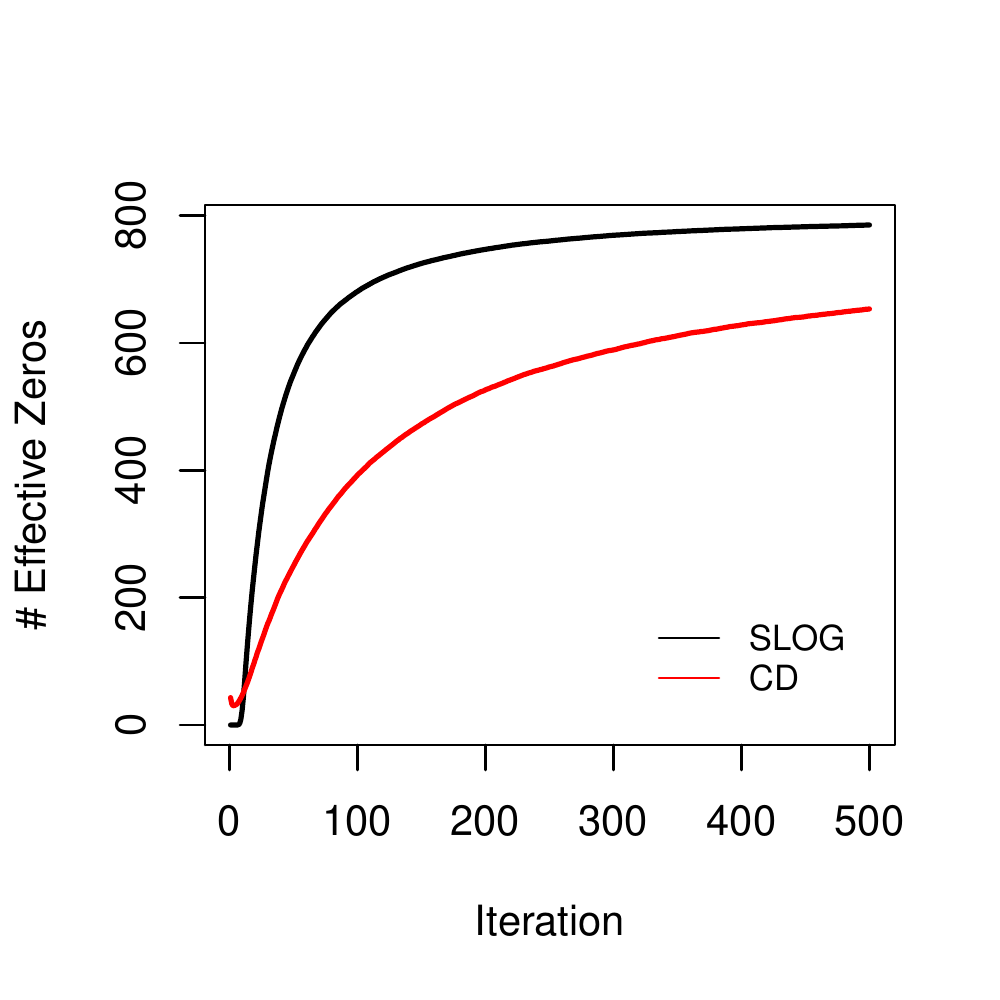} \label{F:zeroa}}
\hfill
\subfigure[$s=0.90, \rho=0.10$]{\includegraphics[scale=0.8]{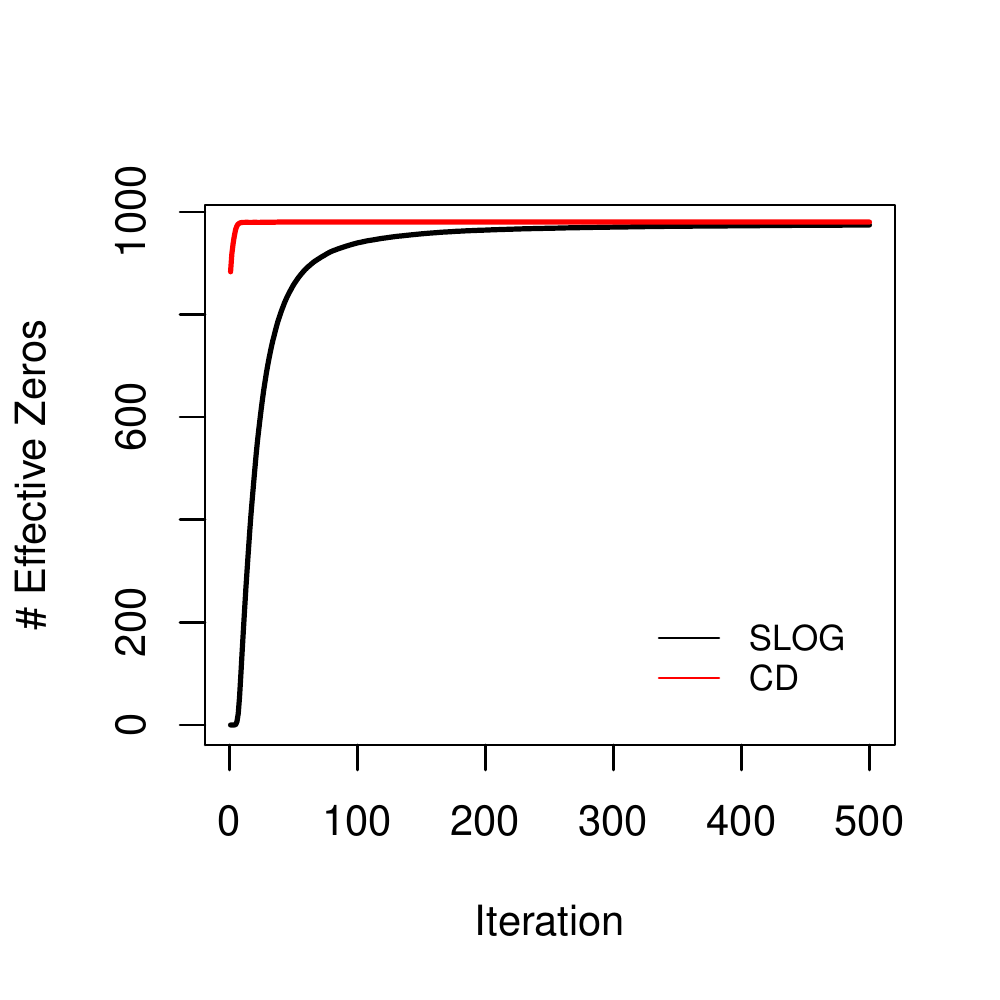} \label{F:zerob}}
\hfill
\caption{\label{F:zero} Number of elements of $\bm{\hat{\beta}}^{\SLOG}$ and $\bm{\hat{\beta}}^{\CD}$ that are less than
than \textcolor{black}{1e-13} in absolute value (effective zeros) in a ``non-sparse, high multicollinearity" setting (Figure \ref{F:zeroa}) and a ``sparse, low-multicollinearity" setting (Figure \ref{F:zerob}). Data were simulated using relationship \eqref{Esim1} with $n=200$ and $p=1000$. \textcolor{black}{The values in the plots are based on 100 simulated datasets.}}
\end{figure}

The results of this section have demonstrated that in situations of high multicollinearity \textcolor{black}{and/or} low sparsity, that the SLOG algorithm enjoys a substantial advantage over the CD algorithm in terms of the number of iterations required until convergence. However, as demonstrated in Section~\ref{sec:comp}, the computational complexity of each SLOG iteration is substantially greater than that of each CD iteration. In the next section we explore, via computational time, whether the reduced number of iterations required by SLOG, in certain settings, is enough to offset the greater complexity of each of its iterations.

\subsection{Timing Comparison for SLOG}

\subsubsection{Reduced Deterministic Bayesian Lasso Algorithm}
\label{sec:rslog}
Unlike the CD algorithm,
the SLOG algorithm is not designed to set coefficients to \textcolor{black}{exact zeros}.
Computationally, this property of SLOG requires the inversion of a $p \times p$ matrix at each iteration.
A simple modification to SLOG which overcomes \textcolor{black}{this hurdle} is to set coefficients to \textcolor{black}{exact zeros} once they fall below a pre-defined threshold $\theta$.
\textcolor{black}{Recall that } Lemma~\ref{lem:partition} can \textcolor{black}{now} be invoked to justify the inversion of a $(p-p_k)\times (p-p_k)$ matrix \textcolor{black}{at iteration $k$ of SLOG}, where $p_k$ denotes the number of coefficients that have been set to zero at the start of iteration $k$.
Figure~\ref{F:zero} provides illustrations of the rate at which $p_k$ increases with iteration number $k$ for \textcolor{black}{$\theta=$1e-13.}
We term this approach of ``thresholding" coefficients to zero the reduced SLOG (rSLOG) algorithm. \textcolor{black}{We shall demonstrate that in} high-dimensional settings the rSLOG algorithm can offer massive speed-ups in computational time compared to \textcolor{black}{traditional} SLOG. These speed-ups are \textcolor{black}{also} possible without a loss of estimation accuracy.
Two \textcolor{black}{competing} factors need to be balanced when choosing a threshold for the rSLOG algorithm: as $\theta$ decreases (1) the likelihood of rSLOG incorrectly thresholding coefficients to zero decreases; and (2) the computational speed of rSLOG generally decreases.
It is essential to note that if rSLOG correctly thresholds coefficients to zero that its use will \emph{not} result in a loss of estimation accuracy compared to SLOG.
\textcolor{black}{We shall demonstrate that there is a relatively large range of $\theta$ which is ``suitable";} Suitable $\theta$ being those for which rSLOG offers substantial speed-up over SLOG and CD, without a loss of estimation accuracy. In this paper, unless otherwise noted, we set \textcolor{black}{$\theta=$1e-13.} When interpreting the magnitude of $\theta$ it is important to remember that we standardize the covariate data to have zero mean and unit variance before applying rSLOG.

\begin{figure}
\hfill
\subfigure[$s=0.75, \rho=0.25$]{\includegraphics[scale=0.8]{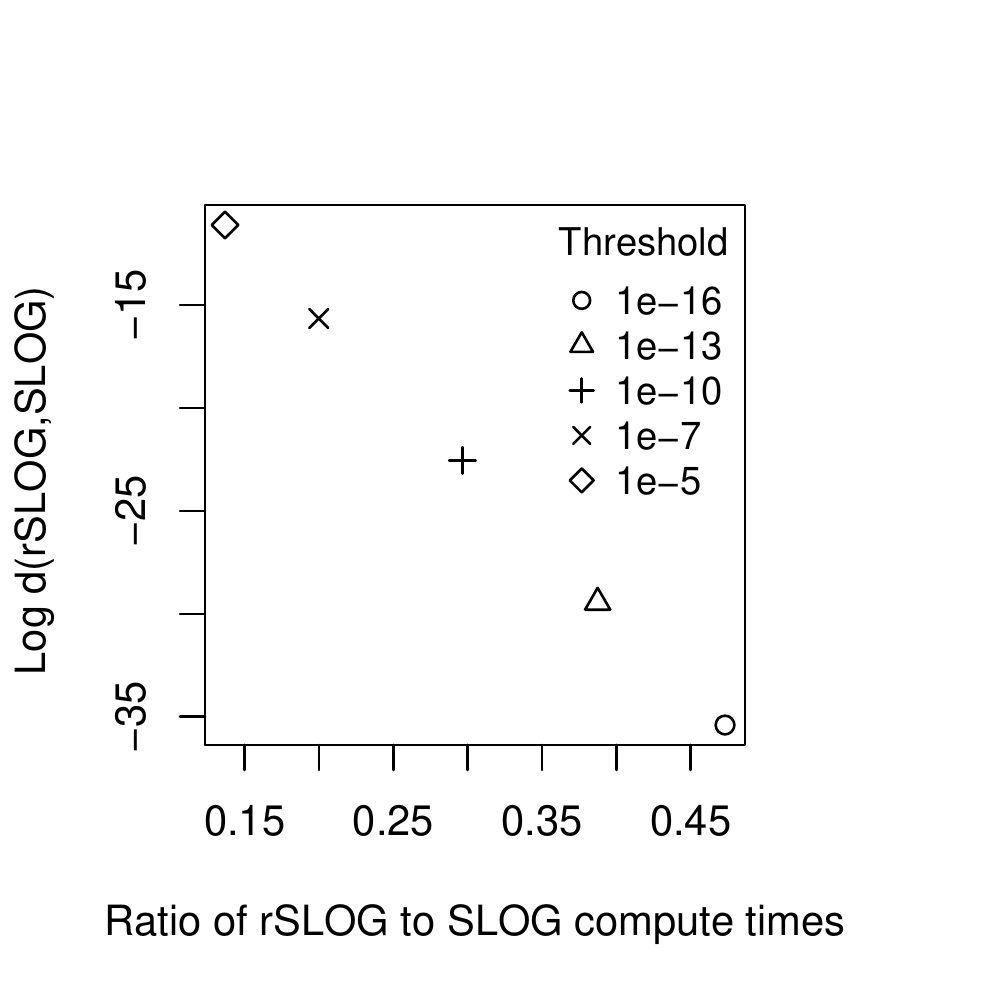} \label{F:theta1}}
\hfill
\subfigure[$s=0.05, \rho=0.95$]{\includegraphics[scale=0.8]{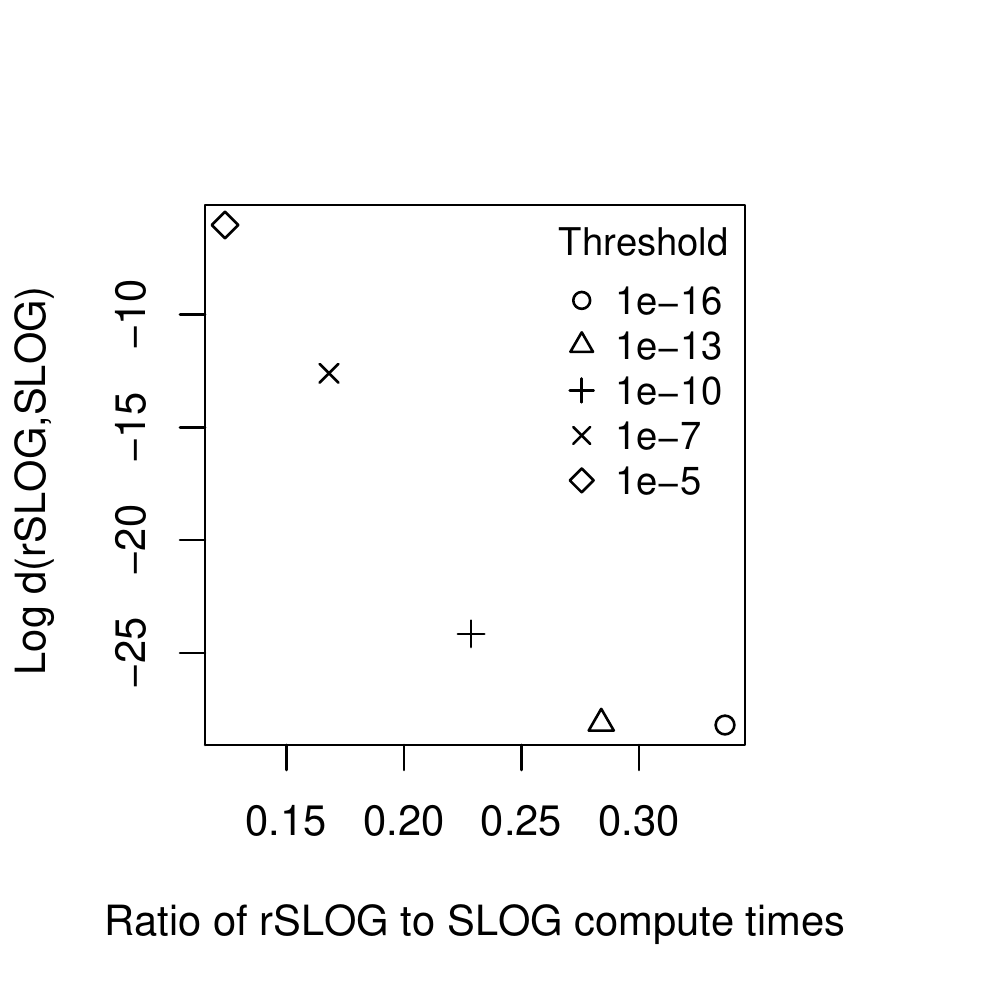} \label{F:theta2}}
\hfill
\caption{\label{F:theta}
\textcolor{black}{
Plots of the distance of $\bm{\hat{\beta}}^{\rSLOG}$ from $\bm{\hat{\beta}}^{\SLOG}$ and the ratio of the computation times for $\bm{\hat{\beta}}^{\rSLOG}$ compared to $\bm{\hat{\beta}}^{\SLOG}$ for five different threshold values ($\theta$). The values in the plots are averages over 100 datasets simulated using relationship \eqref{Esim1} with $n=200$ and $p=1000$. SLOG and rSLOG were run until $d(\SLOG(k))$ and $d(\rSLOG(k))$ were less than 1e-3, respectively.}}
\end{figure}

Figure~\ref{F:theta} provides examples of the trade-off between the closeness of $\bm{\hat{\beta}}^{\rSLOG}$ to $\bm{\hat{\beta}}^{\SLOG}$ and the computational time of rSLOG, over a range of threshold values. We note that because SLOG does not set coefficients to \textcolor{black}{exact zeros that,} the distance between $\bm{\hat{\beta}}^{\SLOG}$ and $\bm{\hat{\beta}}^{\rSLOG}$ will never be zero. \textcolor{black}{Figure~\ref{F:theta} clearly demonstrates the wide range of $\theta$ values that result in $\bm{\hat{\beta}}^{\rSLOG}$ being ``close" to $\bm{\hat{\beta}}^{\SLOG}$. Further, the relative computation times for rSLOG and SLOG indicate that rSLOG offers the potential for substantial improvements in computation time compared to SLOG. With these relative computation times in mind, it is clear that there is a large range of threshold values for which rSLOG offers substantial speed-ups, without loss of accuracy, compared to SLOG.}

\textcolor{black}{In addition to rSLOG, there are other potential means of avoiding the inversion of a $p \times p$ matrix at each iteration of SLOG. One such possibility is a ``hybrid" approach that splits the covariates into blocks and then applies SLOG (or rSLOG) to blocks with high multicollinearity and CD to blocks with low multicollinearity. This hybrid approach is developed in Supplemental Section~\ref{sec:AppD}.}


\subsubsection{Coordinate Descent Algorithm via \textbf{glmnet}}
\label{sec:cdglmnet}
\textcolor{black}{The most popular means of fitting the lasso in practice is the \textbf{glmnet} function of \citet{JFTHRT} available in~R. This function implements the CD algorithm in a pathwise fashion. The popularity of \textbf{glmnet} is attributable to its ability to efficiently compute the lasso solution. Thus for timing comparisons of CD against SLOG/rSLOG we shall use \textbf{glmnet}. The \textbf{glmnet} function implemented in~R does the majority of its numerical computations in Fortran \citep{JFTHRT}. The SLOG/rSLOG algorithm is implemented using code that is wholly written in~R.}

\textcolor{black} {The timings for the CD algorithm (implemented via \textbf{glmnet}) reported below are based on a convergence threshold of 1e-13.} In particular, the \textbf{glmnet} function is run until the maximum change in the objective function (i.e. the penalized residual sum of squares) is less than the convergence threshold multiplied by the null deviance \citep{JFTHRT}. Additionally, the \textbf{glmnet} function is fitted using a decreasing sequence of 50 $\lambda$ values that span $[\lambda^{\text{max}},\lambda^{\lars}]$. The value $\lambda^{\text{max}}$, automatically found by \textbf{glmnet}, is the smallest value of $\lambda$ that sets each coefficient estimate to zero. The use of a sequence of $\lambda$, rather than just the single value of interest $\lambda^{\lars}$, is recommended by the authors of the \textbf{glmnet} package who state: ``\emph{Do not supply a single value for lambda \ldots Supply instead a decreasing sequence of lambda values. glmnet relies on its warm starts for speed, and its often faster to fit a whole path than compute a single fit.}" The robustness of the reported results to the length of the $\lambda$ sequence used is investigated. The timings for the SLOG/rSLOG algorithm are based on the single $\lambda$ of interest, $\lambda^{\lars}$, and starting values of $\bm{\hat{\beta}}^{\SLOG}=\bm{\hat{\beta}}^{\rSLOG}=\sign(\bm{X}^T\bm{y})\lambda^{\lars}/p$. To ensure similar convergence \textcolor{black}{thresholds} for CD and SLOG/rSLOG, the latter two algorithms are iterated until the distance of $\bm{\hat{\beta}}^{\SLOG}$ (or $\bm{\hat{\beta}}^{\rSLOG}$) from $\bm{\beta}^{\lars}$ is at least as small as the distance of $\bm{\hat{\beta}}^{\CD}$ from $\bm{\beta}^{\lars}$.

\subsubsection{Simulation Study}
For data generated using relationship \eqref{Esim1}, the timings of CD (via \textbf{glmnet}) and rSLOG for computing the lasso solution corresponding to $\lambda^{\lars}$ were recorded. In the simulations we focus on low sparsity and/or high multicollinearity \textcolor{black}{settings} because these are the situations where rSLOG offers the potential for increased computational speed compared to CD. \textcolor{black} {Due to the benefits of rSLOG compared to SLOG described in Section \ref{sec:rslog} only the rSLOG algorithm is considered in the simulations that follow.}

\textcolor{black} {Figure~\ref{F:box1} contains box plots of the computation time until convergence for the CD and rSLOG algorithms applied to simulated datasets of size $n=1000$ and $p=200$. It is observed that in low sparsity and/or high multicollinearity settings rSLOG can offer substantial increases in computational speed compared to CD. The box plots also illustrate that the variability of the rSLOG computation time is substantially smaller than the corresponding CD computation time. Moreover outlying computation times are more prevalent for CD. The observed improvements in speed offered by rSLOG increase as sparsity decreases and/or multicollinearity increases. The increased computational speed of rSLOG can be attributed to the massive decrease in the number of iterations required. The decrease in the number of iterations required by rSLOG is more than sufficient to compensate for the increased complexity of each of its iterations. For example, when $s=0.05$ and $\rho=0.95$, on average, rSLOG required approximately 315 iterations to converge compared to approximately 780,000 for CD. The increased variability of the CD computation time relative to the rSLOG computation time is likely due to the ``piecemeal" nature of the one-at-a-time updating employed by CD, compared to the more ``holistic" all-at-once updating employed by rSLOG. The nature of the updating employed by CD makes its computation time more dependent on the vagaries of the simulated data.} A thorough analysis was undertaken to investigate the robustness of the results reported in Figure~\ref{F:box1} under different regimes. In particular, qualitatively similar results were obtained under the following modifications to the simulation settings used in Figure~\ref{F:box1}: (1) changing the length of the $\lambda$ sequence used in CD to 100 or 25; (2) \textcolor{black} {changing $\theta$ to 1e-10 or 1e-16;} and (3) decreasing the \textbf{glmnet} convergence threshold to 1e-11 or 1e-9. When the \textbf{glmnet} convergence threshold is decreased rSLOG is still observed to offer increased computational speed compared to CD, however, the magnitude of the increase is less.

\begin{figure}
\hfill
\subfigure[Correlation $\rho=0.85$]{\includegraphics[scale=0.8]{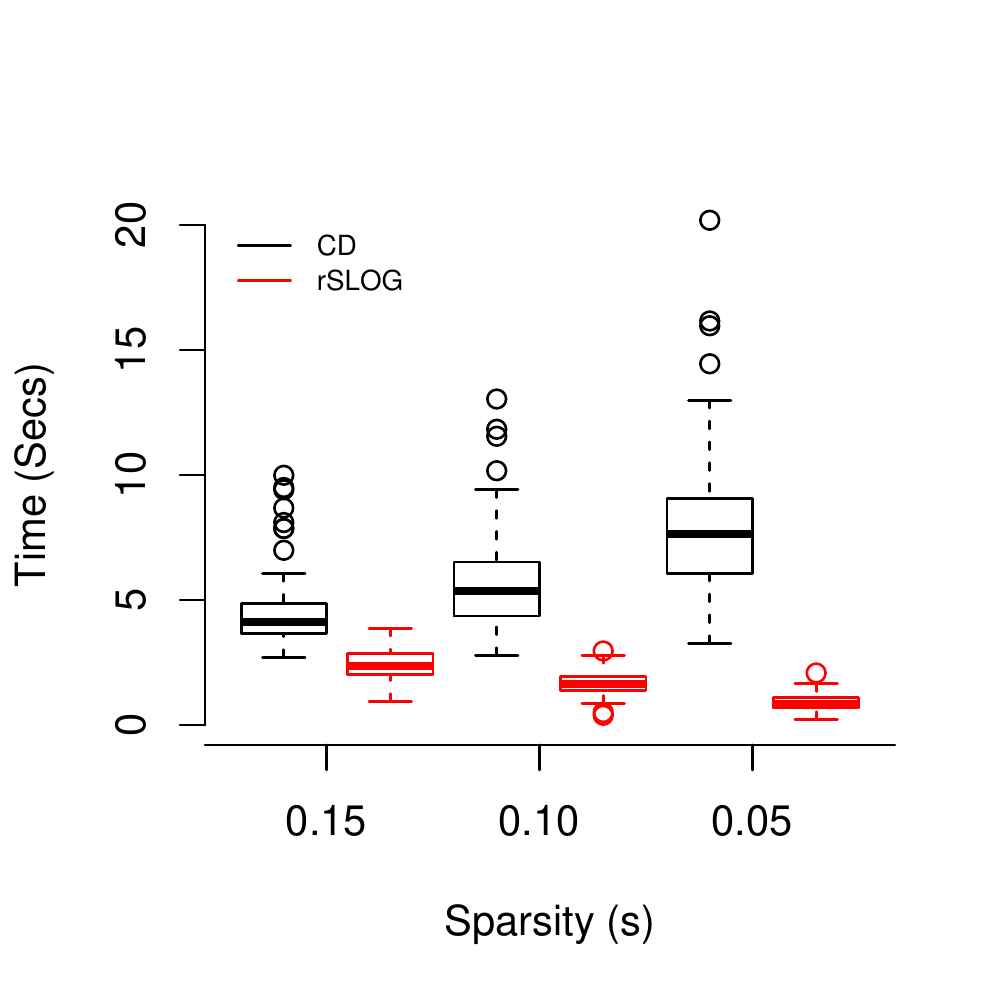} \label{F:box1a}}
\hfill
\subfigure[Correlation $\rho=0.90$]{\includegraphics[scale=0.8]{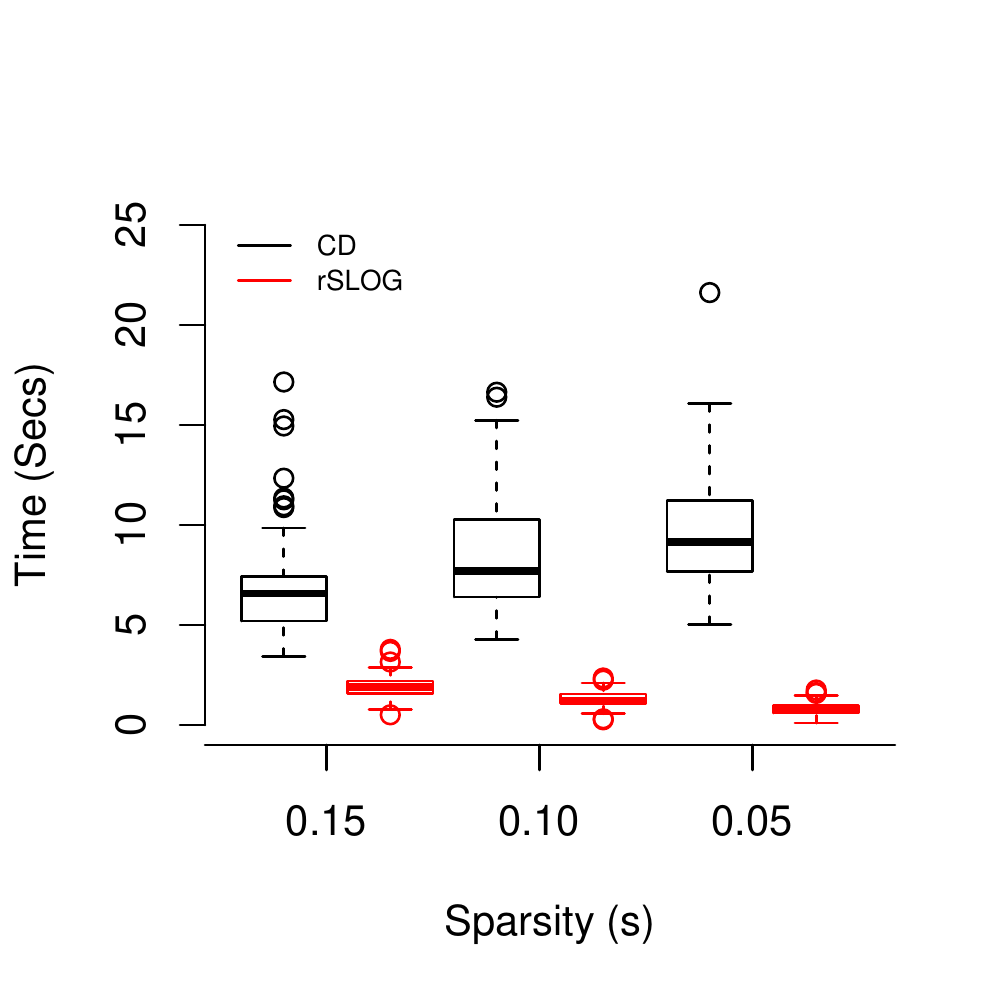} \label{F:box1b}}
\hfill
\subfigure[Correlation $\rho=0.95$]{\includegraphics[scale=0.8]{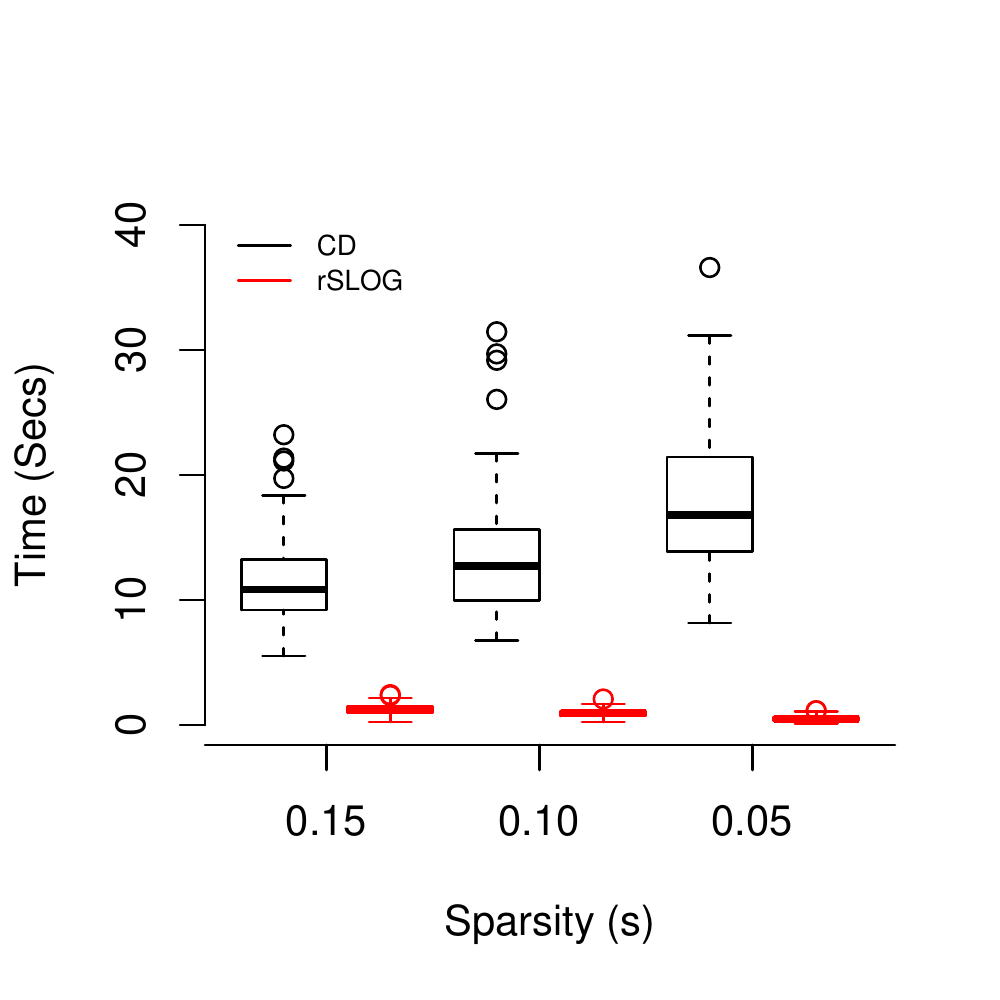} \label{F:box1c}}
\hfill
\subfigure[Correlation $\rho=0.99$]{\includegraphics[scale=0.8]{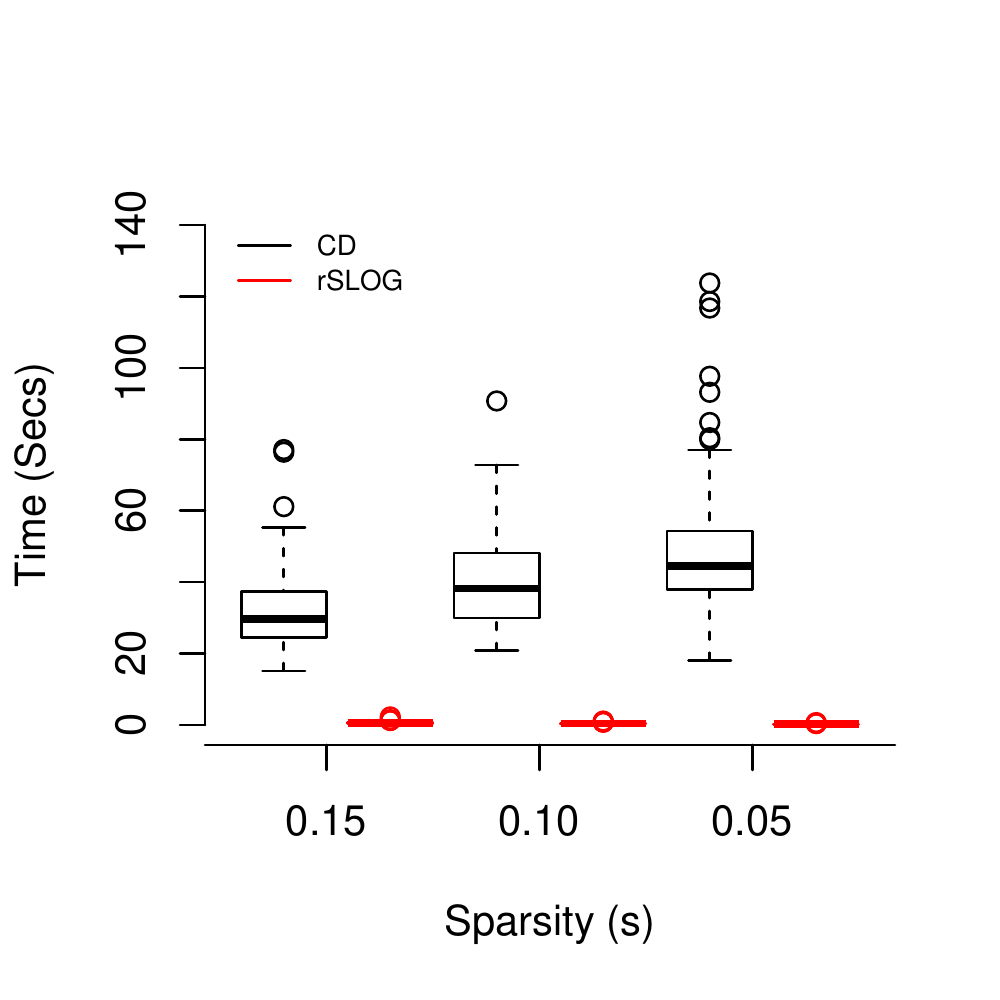} \label{F:box1d}}
\caption{\label{F:box1} Time in seconds until convergence of the CD and rSLOG algorithms applied to data generated using relationship \eqref{Esim1} with $n=1000$, $p=200$. Each box plot is based on 100 simulated datasets.}
\end{figure}

\begin{figure}
\hfill
\subfigure[Correlation $\rho=0.95$]{\includegraphics[scale=0.8]{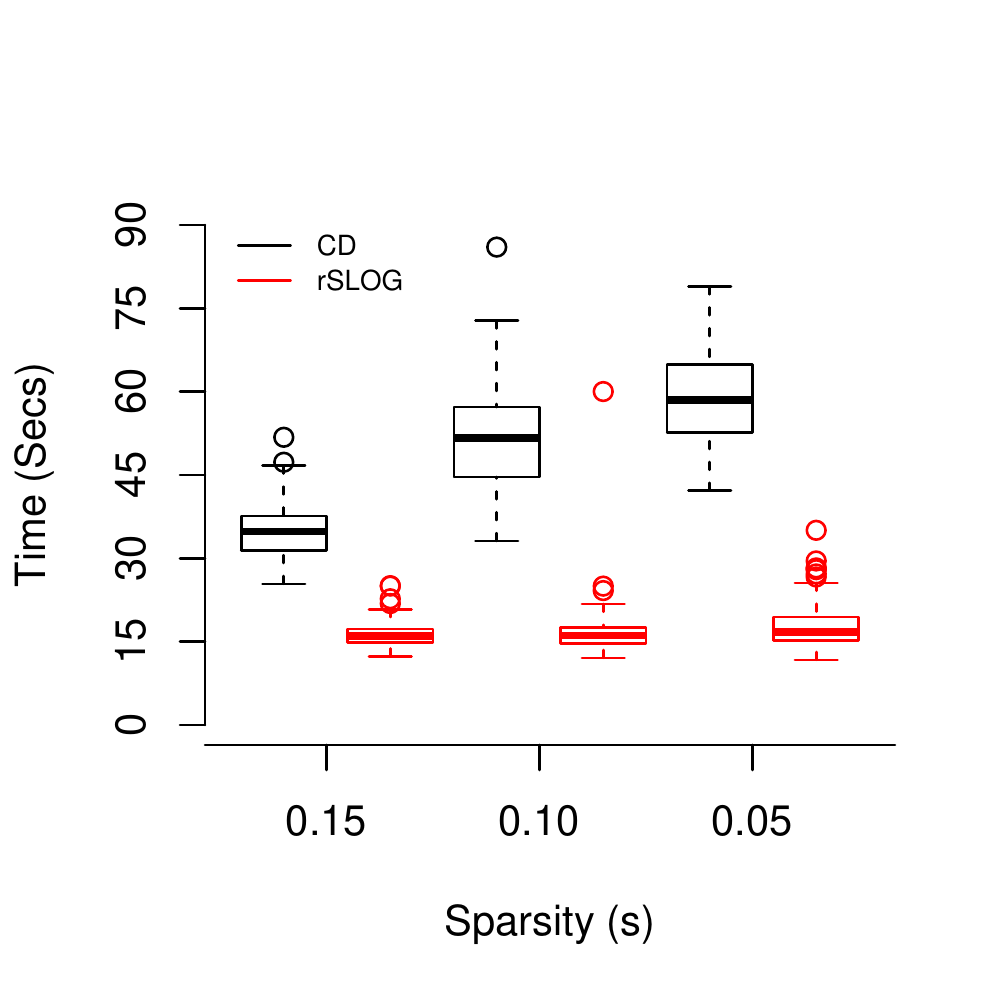} \label{F:box2a}}
\hfill
\subfigure[Correlation $\rho=0.99$]{\includegraphics[scale=0.8]{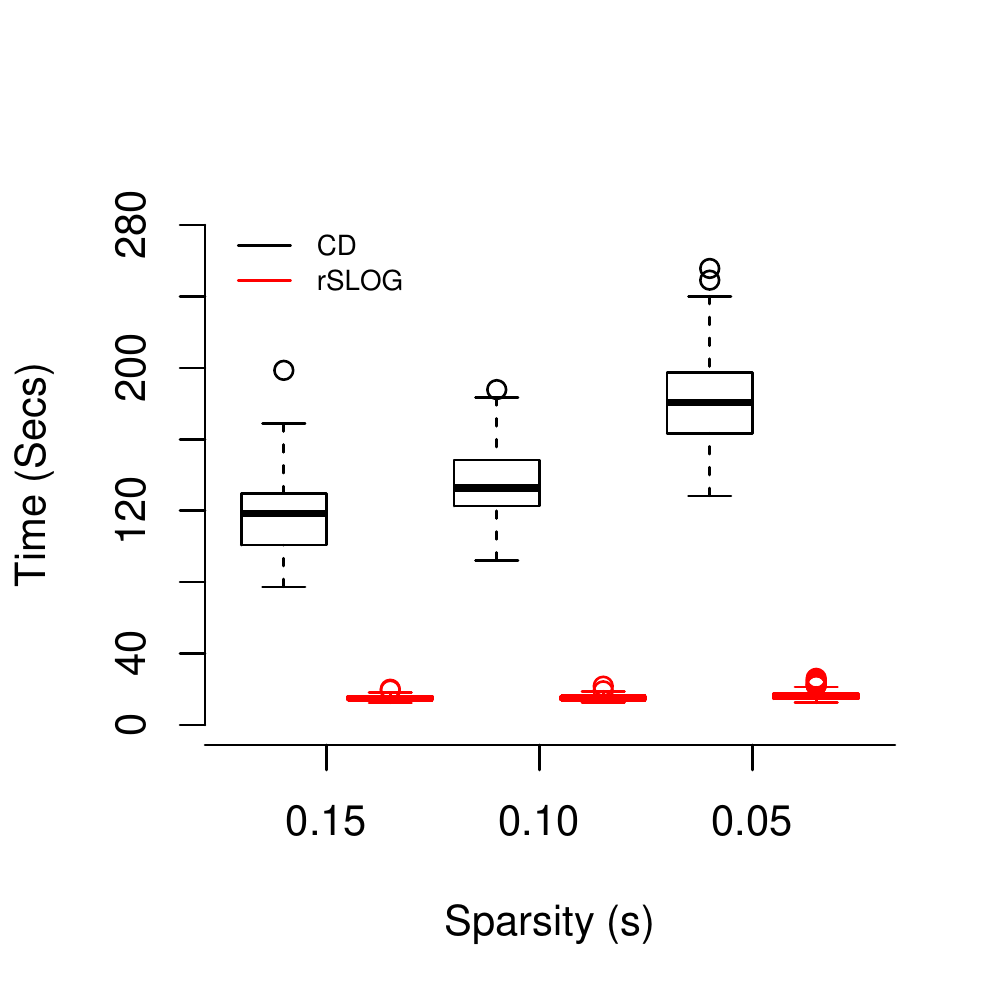} \label{F:box2b}}
\hfill
\subfigure[Correlation $\rho=0.95$]{\includegraphics[scale=0.8]{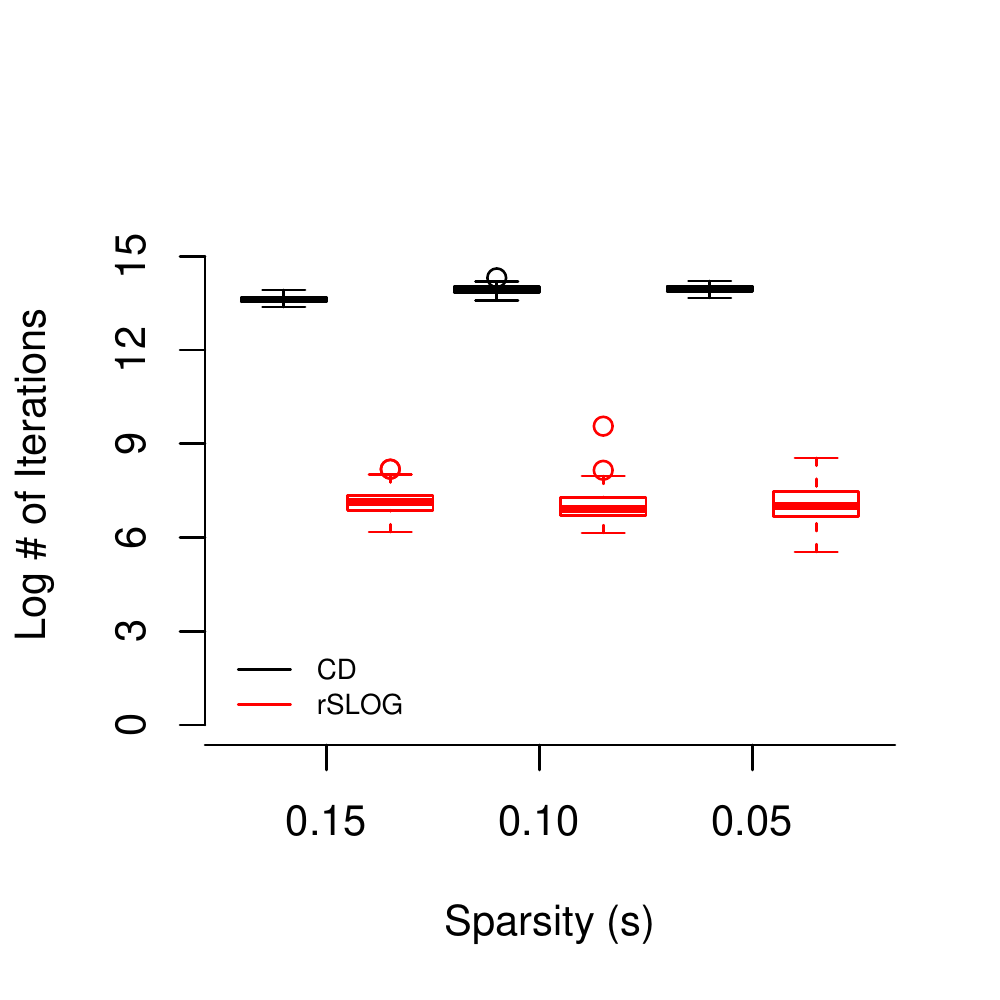} \label{F:box2c}}
\hfill
\subfigure[Correlation $\rho=0.99$]{\includegraphics[scale=0.8]{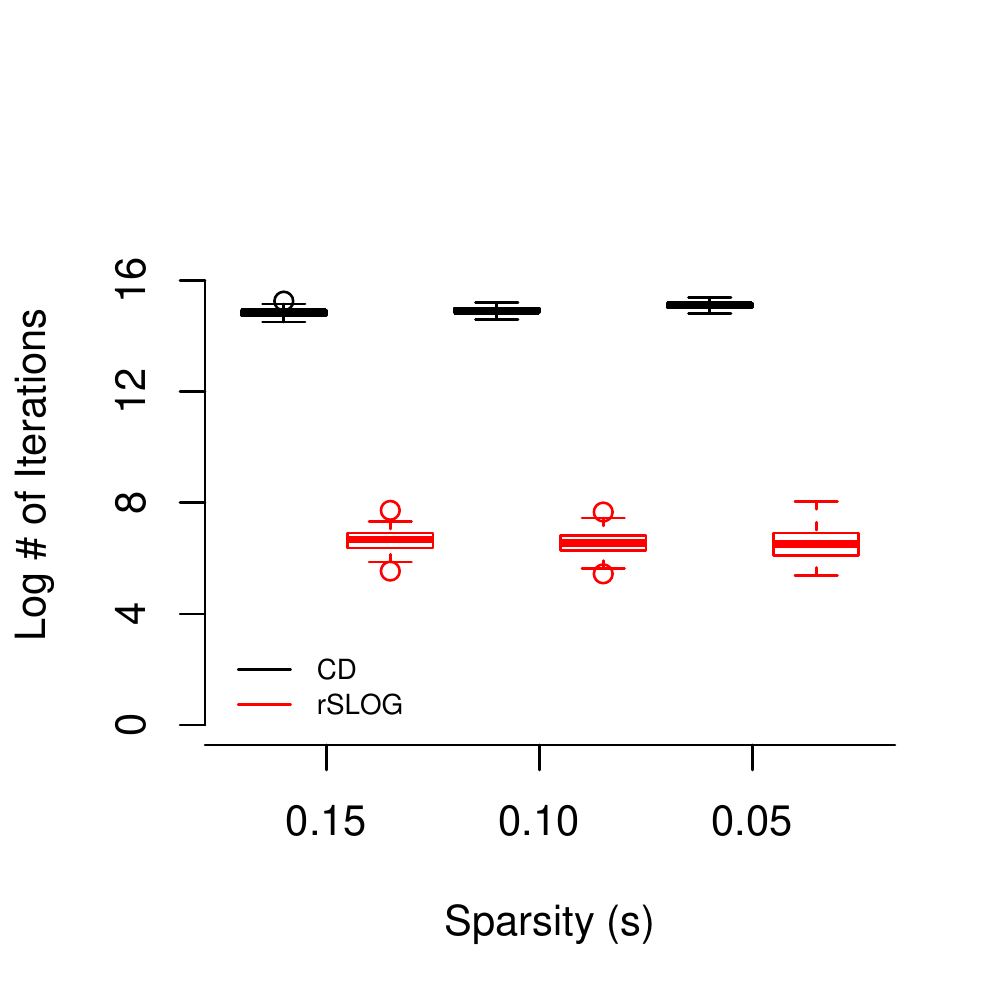} \label{F:box2d}}
\caption{\label{F:box2} Time in seconds and number of iterations until convergence of the CD and rSLOG algorithms applied to data generated using relationship \eqref{Esim1} with $n=200$, $p=1000$. Each box plot is based on 100 simulated datasets.}
\end{figure}

\textcolor{black} {A comparison of the timings and \textcolor{black} {number of iterations of} the rSLOG and CD algorithms applied to the setting where $p=1000$ and $n=200$ are given in Figure~\ref{F:box2}. In high multicollinearity and/or low sparsity settings Figure~\ref{F:box2a} and Figure~\ref{F:box2b} illustrate that rSLOG can offer substantial increases in computational speed compared to CD. Moreover, the improvements in computation  offered by rSLOG in this higher-dimensional setting are more substantial than in the lower-dimensional setting of Figure~\ref{F:box1}. The reasons for this are twofold: (1) as $p$ increases the adverse effects of multicollinearity on the performance of CD are compounded; and (2) as both $n$ and $p$ increase a given level of sparsity corresponds to a larger number of non-zero coefficients. As previously discussed, both of these factors hinder the convergence of CD more so than for rSLOG. Once more the variability of the rSLOG computation times is substantially reduced compared to the CD computation times. The massive reduction in the number of iterations required by rSLOG compared to CD is clearly illustrated in Figure~\ref{F:box2c} and Figure~\ref{F:box2d}. In fact, this reduction in the number of iterations required by rSLOG compared to CD is more than sufficient to compensate for the additional computational cost of each of its iterations. This trade-off is the reason for the faster overall compute times observed for rSLOG compared to CD.}
\\ \\
\noindent {\it Remarks:}
\begin{enumerate}
\item \textcolor{black}{A detailed and thorough investigation of the robustness of the results in Figure~\ref{F:box2a} and Figure~\ref{F:box2b} under various regimes was undertaken. In particular, qualitatively similar findings to those in Figure~\ref{F:box2a} and Figure~\ref{F:box2b} were found under the following alternative specifications for the $\beta_j$ in relationship \eqref{Esim1}: (1) all the $\beta_j$ were set to 0.1 (or 0.5); (2) one-fifth of the $\beta_j$ were set to 0.1 (or 0.5) and the rest to zero; and (3) same as (1) but each $\beta_j$ was selected from a Uniform on [$-$0.1,0.1] distribution.}



\item A \textbf{glmnet} convergence threshold of 1e-13 is used to ensure close convergence of $\bm{\hat{\beta}}^{\SLOG}$, $\bm{\hat{\beta}}^{\rSLOG}$ and $\bm{\hat{\beta}}^{\CD}$ to $\bm{\beta}^{\lars}$. \textcolor{black}{In high multicollinearity and low sparsity settings smaller values of the \textbf{glmnet} convergence threshold tend to yield values of $\bm{\hat{\beta}}^{\CD}$ that can be undesirably far from $\bm{\beta}^{\lars}$.}


\item \textcolor{black} {The findings from our simulations suggest that in situations of high multicollinearity and low sparsity one may prefer to use rSLOG over CD. The proliferation of high dimensional data means that settings of both high multicollinearity and low sparsity are now a common occurrence. In particular, Section~\ref{sec:ISdata} provides an example of the application of rSLOG to a dataset with these two attributes. Of course, when the desired level of sparsity is unknown it would be beneficial to use both SLOG and CD depending on the value of $s$.}

\item \textcolor{black} {The primary focus of this paper has been on understanding the use of SLOG in settings of high multicollinearity and low sparsity.  Supplemental Section~\ref{sec:additional} demonstrates that rSLOG can provide improved computational speed compared to CD  even in settings of high sparsity or low multicollinearity. Supplemental Section~\ref{sec:additional} also  explores further the role played by the three factors:  sample size ($n$), sparsity ($s$), and multicollinearity ($\rho$), in the relative compute times of CD and rSLOG when $p = 1000$.}

\item Alternative approaches for thresholding coefficients to zero in the SLOG algorithm are available. One such approach is to run the SLOG algorithm for a few iterations and then threshold the coefficients with the smallest magnitude to zero. The number of coefficients thresholded to zero would commensurate with the desired sparsity of the solution. The advantage of this approach compared to rSLOG is that it does not require an a priori choice of the threshold value. The disadvantage compared to rSLOG \textcolor{black}{however} is that the initial iterations require the inversion of the full $p \times p$ covariate matrix.

\end{enumerate}
\subsubsection{Infrared Spectroscopy Data}
\label{sec:ISdata}

As a final comparison of SLOG, rSLOG and CD, the three algorithms are applied to an infrared spectroscopy dataset. The infrared spectroscopy data were collected during a study to determine whether near infrared (NIR) spectroscopy could be used to predict the composition of cookie dough \citep{osborne1984}. \textcolor{black}{The data used has $n=40$ cookie dough samples, and is available from the R Package \textbf{ppls} \citep{ppls}}. The response vector of length 40 is a measure of the fat content of each dough sample. The covariate data of dimension $(n=40) \times (p=700)$ contains the NIR reflectance spectrum of each dough sample, measured at 700 points. \textcolor{black}{The covariate data exhibits a high degree of multicollinearity with over 70\% of the pairwise correlations exceeding 0.90 and a median pairwise correlation of 0.96. The multicollinearity of the infrared spectroscopy data is therefore well within the range in which rSLOG performs well compared to CD.} In the analysis that follows we investigate timing comparisons for the infrared spectroscopy data over varies levels of sparsity $s$. The level of sparsity is set by selecting a value of the lasso regularization parameter ($\lambda$) that gives the required number of exact zero coefficient estimates.

\begin{table}
\begin{small}
\begin{center}
\begin{threeparttable}
\caption{Time in seconds until convergence of the CD, SLOG and rSLOG algorithms applied to the infrared spectroscopy data over a range of sparsity values ($s$) for $\bm{\beta}^{\lars}$.}
\label{T:sim3}
\begin{footnotesize}
\begin{tabular}{rrrrrrrr}
\hline
& \multicolumn{5}{c}{Time (secs)} & &  \\
\cmidrule(lr){2-6}
 &  & \multicolumn{3}{c}{rSLOG} &  &  &  \\
\cmidrule(lr){3-5}
\multicolumn{1}{c}{$s$}& \multicolumn{1}{c}{CD} & $\theta=$1e-10 & $\theta=$1e-13 & $\theta=$1e-16& \multicolumn{1}{c}{SLOG} & $\log(K^{\CD})$ & $\log(K^{\SLOG})$\\
\hline
0.95 & 0.03 & 86.18 & 99.94 & 85.48 & 296.90 & 9.67 & 12.61 \\
  0.90 & 0.03 & 92.43 & 91.76 & 89.62 & 352.33 & 10.00 & 12.59 \\
  0.75 & 11.74 & 5.24 & 5.76 & 6.91 & 191.08 & 14.81 & 8.00 \\
  0.50 & 15.90 & 8.13 & 9.41 & 10.23 & 202.10 & 14.94 & 9.64 \\
  0.25 & 62.57 & 3.80 & 5.91 & 6.16 & 123.49 & 15.98 & 6.00 \\
  0.15 & 81.32 & 4.35 & 5.80 & 6.38 & 83.74 & 16.17 & 5.56 \\
  0.10 & 99.48 & 4.60 & 5.58 & 6.77 & 91.34 & 16.32 & 5.71 \\
  0.05 & 105.05 & 4.77 & 5.21 & 7.49 & 186.12 & 16.34 & 6.89 \\
\hline
\hline
\end{tabular}
\end{footnotesize}
\begin{scriptsize}
\begin{tablenotes}[flushleft]
\item Note: $K^{\CD}$ and $K^{\SLOG}$ are the number of passes over the data/iterations until convergence for CD/SLOG, respectively.
\end{tablenotes}
\end{scriptsize}
\end{threeparttable}
\end{center}
\end{small}
\end{table}

Table~\ref{T:sim3} contains the timings for the SLOG, rSLOG and CD algorithms applied to the infrared spectroscopy data. \textcolor{black}{We first note the substantial reduction} in computational time for rSLOG compared to SLOG. The reason for this reduction is that at each iteration SLOG is inverting a $700 \times 700$ matrix. In contrast, rSLOG is inverting successively smaller matrices that eventually decrease to an approximate dimension of $(1-s)40 \times (1-s)40$. \textcolor{black}{Second we note} the robustness of the timings of the rSLOG algorithm with alternative thresholds \textcolor{black}{of $\theta=$1e-10 or $\theta=$1e-16.} However, the most important observation is the large reduction in the computational time of rSLOG compared to CD, over most of the range of sparsity values considered. The primary reason for the increased speed of rSLOG is the relatively small number of iterations the algorithm requires to converge. For example, when $s=0.05$ CD requires approximately 12 million iterations to converge as compared to approximately 1000 for rSLOG. \textcolor{black}{The use of cross-validation to find the ``optimal" level of sparsity for the infrared spectroscopy data suggests that $s$ close to 0 is optimal. This finding illustrates that for this data the sparsity of the lasso solution is well within the range of values where rSLOG performs better compared to CD. A plot of the cross-validation error versus $s$ is given in Supplemental Section~\ref{sec:AppC}.}

In addition to the infrared spectroscopy data, there are many other examples of real data with high multicollinearity. Examples occur naturally in practice, including: (1) data that exhibit high spatial correlation including measures of water contamination or temperature over a given region; (2) data that exhibit strong temporal correlation such as the daily share prices of stocks within the same asset class; and (3) in gene expression data it is common for the pairwise correlation between expression levels to be large and positive. The application of the SLOG or rSLOG algorithms to data such as these may offer similar benefits to those observed here for the infrared spectroscopy data. Further, even for data that do not exhibit high multicollinearity the SLOG/rSLOG algorithms can provide increased computational speed if a non-sparse solution is of interest. An example of such a dataset is the well known Diabetes data of dimension $n=442$ and $p=64$ analyzed in \citet{efron2004}.

\section{Conclusion}
\label{sec:conclusion}
In this paper we have proposed a novel algorithm (the
Deterministic Bayesian Lasso
algorithm) for computing the lasso solution. The algorithm is based on exploiting the structure of the Bayesian lasso and its corresponding Gibbs sampler. Our study of the
Deterministic Bayesian Lasso
algorithm yields important new theoretical and computational insights into the efficient computation of the lasso solution. Importantly, from a practical perspective the algorithm is shown to offer substantial increases in computational speed compared to coordinatewise algorithms, in settings of low sparsity and high multicollinearity.


\newpage

\bibliographystyle{ims}
\bibliography{references} 

\clearpage
\appendix
\begin{center} \section*{Supplemental Section} \end{center}
\section{Proofs}
\setcounter{page}{1}
\begin{proof}[%
Proof of Lemma~\ref{lem:separable}%
]
Suppose that $\bm{X}_{(1)}^T\bm{X}_{(2)}^{}$ equals the zero matrix, and partition $\bbeta=(\bbeta_{(1)},\bbeta_{(2)})$ similarly to $\bm{X}$.  Now observe that minimizing $\normof{\bm{y}-\bm{X}\bbeta}_2^2+2\lambda\normof{\bbeta}_1$ is equivalent to minimizing $\normof{\bm{y}-\bm{X}\bbeta}_2^2+2\lambda\normof{\bbeta}_1+\normof{\bm{y}}_2^2$, and write
\begin{align*}
\normof{\bm{y}-\bm{X}\bbeta}_2^2+2\lambda\normof{\bbeta}_1+\normof{\bm{y}}_2^2
&=2\bm{y}^T\bm{y}-2\bbeta^T\bm{X}^T\bm{y}+\bbeta^T\bm{X}^T\bm{X}\bbeta+2\lambda\normof{\bbeta}_1\\
&=2\bm{y}^T\bm{y}-2
\bbeta_{(1)}^T\bm{X}_{(1)}^T
\bm{y}-2
\bbeta_{(2)}^T\bm{X}_{(2)}^T
\bm{y}+\bbeta_{(1)}^T\bm{X}_{(1)}^T\bm{X}_{(1)}\bbeta_{(1)}\\
&\qquad+\bbeta_{(2)}^T\bm{X}_{(2)}^T\bm{X}_{(2)}\bbeta_{(2)}+2\lambda\normof{\bbeta_{(1)}}_1+2\lambda
\normof{\bbeta_{(2)}}_1\\
&=\normof{\bm{y}-\bm{X}_{(1)}\bbeta_{(1)}}_2^2+2\lambda\normof{\bbeta_{(1)}}_1\\
&\qquad+\normof{\bm{y}-\bm{X}_{(2)}\bbeta_{(2)}}_2^2+2\lambda\normof{\bbeta_{(2)}}_1,
\end{align*}
hence this may be minimized by minimizing over $\bbeta_{(1)}$ and $\bbeta_{(2)}$ separately.
\end{proof}

\begin{proof}[Proof of Lemma~\ref{lem:no-zeros}]
For each $j\ge1$, let $(\reals^+)^j=\{\bm b\in\reals^j:b_\ell>0\text{ for all }\ell\le j\}$, and let $\mu_j$ denote Lebesgue measure on $\reals^j$.  Now define the sets $\Omega^{(0)}=\{\bm b:\reals^p:b_j=0\text{ for some }j\}$ and
$\Omega^{(1)}=\{\bm b\in(\reals^+)^p:A(\bm b)\in\Omega^{(0)}\}$.  Then let $\Omega=\Omega^{(0)}\cup\Omega^{(1)}$.
The proof consists of two parts:
\begin{enumerate}
\item[(i)] First, we show that the function $A$ cannot map a subset of $(\reals^+)^p\setminus\Omega^{(1)}$ with positive $\mu_p$-measure to a set of $\mu_p$-measure zero.
(Since $A$ depends on the components of its argument only through their absolute values, it follows that $A$ cannot map a subset of $\reals^p\setminus\Omega$ with positive $\mu_p$-measure to a set of $\mu_p$-measure zero.)
\item[(ii)] Second, we show that $\mu_p(\Omega)=0$, which establishes that $P_0(\bm b^{(0)}\in\Omega)=0$.
\end{enumerate}
By induction, (i) and (ii) together imply that $P_0(\bm b^{(k)}\in\Omega)=0$ for all $k\ge1$.  This result establishes the lemma since $\Omega^{(0)}\subseteq\Omega$.

We first show~(i).
Let $\tilde A$ denote the restriction of $A$ to the domain $(\reals^+)^p\setminus\Omega^{(1)}$.  Now let $\bm b\in(\reals^+)^p\setminus\Omega^{(1)}$, and let $\bm a=\tilde A(\bm b)$.  Then $a_j\ne0$ for each $j$, so we may write
\[
b_j=\frac{\lambda a_j}{\bm e_j^T\bm X^T
(\bm y-\bm X\bm a)
}\quad\text{ for all }j,
\]
where $\bm e_j$ denotes the $j$th unit vector of length $p$.  Now observe that
\begin{align*}
\frac{\partial b_j}{\partial a_\ell}
=\frac{\lambda a_j\bm e_j^T\bm X^T\bm X\bm e_\ell}{\left[\bm e_j^T\bm X^T(\bm y-\bm X\bm a)\right]^2}+\frac{\lambda\,I(j=\ell)}{\bm e_j^T\bm X^T(\bm y-\bm X\bm a)}
=
\frac{b_j^2}{\lambda a_j}
\left[\bm e_j^T\bm X^T\bm X\bm e_\ell+\frac{\lambda}{b_j}\,I(j=\ell)\right],
\end{align*}
and thus
\[
\frac{\partial\bm b}{\partial \bm a}=\frac{1}{\lambda}\Diag\!\left(\frac{b_1^2}{a_1},\ldots,\frac{b_p^2}{a_p}\right)\left[\bm X^T\bm X+\lambda\Diag\!\left(\frac{1}{b_1},\ldots,\frac{1}{b_p}\right)\right].
\]
Then
\[
\left|\frac{\partial\bm b}{\partial \bm a}\right|=\frac{1}{\lambda^p}\left(\prod_{j=1}^p\frac{b_j^2}{a_j}\right)\left|\bm X^T\bm X+\lambda\Diag\!\left(\frac{1}{b_1},\ldots,\frac{1}{b_p}\right)\right|\ne0,
\]
where $|\cdot|$ denotes the determinant.  It follows that $\tilde A$ cannot map any set with positive $\mu_p$-measure to a set of $\mu_p$-measure zero.  This establishes the first part of the proof since $\tilde A$ and $A$ coincide on $(\reals^+)^p\setminus\Omega^{(1)}$ by the definition of $\tilde A$.

Second, we show~(ii).
Clearly $\mu_p(\Omega^{(0)})=0$, so it suffices to show that $\mu_p(\Omega^{(1)})=0$.  For any nonempty set $M\subseteq\{1,\ldots,p\}$, define $\Omega^{(0)}_M=\{\bm b\in\reals^p:b_j=0\text{ if and only if }j\in M\}$ and $\Omega^{(1)}_M=\{\bm b\in(\reals^+)^p:A(\bm b)\in\Omega^{(0)}_M\}$.  Clearly $\Omega^{(1)}=\bigcup_{M\in\mathcal M}\Omega^{(1)}_M$, where $\mathcal M$ is the class of all nonempty subsets of $\{1,\ldots,p\}$.  Since $\mathcal M$ is finite, it suffices to show that $\mu_p(\Omega^{(1)}_M)=0$ for each $M\in\mathcal M$.  Note that if $M=\{1,\ldots,p\}$, then
$\bm b\in\Omega^{(1)}_M$ implies $A(\bm b)=\bm 0$, which implies by the form of $A$ that $\bm b=\bm0\notin(\reals^+)^p$, a contradiction.  Hence, $\Omega^{(1)}_M=\varnothing$, so $\mu_p(\Omega^{(1)}_M)=0$.  Thus, instead let $M\in\mathcal M\setminus\{1,\ldots,p\}$, and let $\bm b\in\Omega^{(1)}_M$ and $\bm a=A(\bm b)$.  For convenience of notation, we may assume without loss of generality that $M=\{1,\ldots,m\}$, where $1\le m\le p-1$.  Then partition $\bm a$, $\bm b$, and $\bm X$ as
\[
\bm a=\begin{bmatrix}\bm0\\\bm a_\star\end{bmatrix},\qquad
\bm b=\begin{bmatrix}\bm b_0\\\bm b_\star\end{bmatrix},\qquad
\bm X=\begin{bmatrix}\bm X_0&\bm X_\star\end{bmatrix},
\]
where $\bm a_\star$ and $\bm b_\star$ have length $p-m$ and $\bm X_\star$ has $p-m$ columns.  Then since $a_j=0$ for all $1\le j\le m$, we have
$\bm X_0^T(\bm y-\bm X\bm a)=\bm X_0^T(\bm y-\bm X_\star\bm a_\star)=\bm0$, which implies that
\begin{align}
\bm X_0^T\bm X_\star\bm a_\star=\bm X_0^T\bm y.\label{zero-constraint}
\end{align}
If $\bm{X}_0^T\bm{X}_\star=\bm{0}$, then $\bm{X}_0^T\bm{y}\ne0$ by
Assumption~2,
which creates a contradiction.  So instead $\bm{X}_0^T\bm{X}_\star\ne\bm{0}$, and thus (\ref{zero-constraint}) imposes at least one linear constraint on $\bm a_\star$.  Then $\bm a_\star\in H_\star$ for some hyperplane $H_\star\subset\reals^{p-m}$.  Next, since $a_j\ne0$ for all $m+1\le j\le p$, we have
\[
b_j=\frac{\lambda a_j}{\bm e_j^T\bm X^T(\bm y-\bm X\bm a)}=\frac{\lambda a_j}{\bm e_j^T\bm X^T(\bm y-\bm X_\star\bm a_\star)}\quad\text{ for all }m+1\le j\le p.
\]
Thus, the components of $\bm b_\star$ depend only on $\bm a_\star$, while the components of $\bm b_0$ are unrestricted.  Then we have
\begin{align}
\Omega^{(1)}_M=(\reals^+)^m\times\{\bm b_\star\in(\reals^+)^{p-m}:A_\star(b_\star)\in H_\star\setminus\Omega^{(0)}_\star\},
\label{unconstrained-constrained}
\end{align}
where $\Omega^{(0)}_\star=\{\bm b\in\reals^{p-m}:b_j=0\text{ for some }j\}$ and the function $A_\star:\reals^{p-m}\to\reals^{p-m}$ is defined analogously to $A$, with $\bm X$ replaced by $\bm X_\star$.  Now define
$\Omega^{(1)}_\star=\{\bm b_\star\in(\reals^+)^{p-m}:A_\star(\bm b_\star)\in\Omega^{(0)}_\star\}$ and $\Omega_\star=\Omega^{(0)}_\star\cup\Omega^{(1)}_\star$.
Then in (\ref{unconstrained-constrained}), we have
\[
\{\bm b_\star\in(\reals^+)^{p-m}:A_\star(b_\star)\in H_\star\setminus\Omega^{(0)}_\star\}\subset(\reals^+)^{p-m}\setminus\Omega_\star.
\]
Then we may use the same argument as in the first part of the proof to show that $A_\star$ cannot map a subset of $(\reals^+)^{p-m}\setminus\Omega_\star$ with positive $\mu_{p-m}$-measure to a set of $\mu_{p-m}$-measure zero.  Then since $\mu_{p-m}(H_\star)=0$, we have
\[
\mu_{p-m}\!\left(\{\bm b_\star\in(\reals^+)^{p-m}:A_\star(b_\star)\in H_\star\setminus\Omega^{(0)}_\star\}\right)=0,
\]
and hence $\mu_p(\Omega^{(1)}_M)=0$.
\end{proof}

\begin{proof}[%
Proof of Theorem~\ref{thm:convergence}%
]
Let $\Psi=\left\{\bm b\in\reals^p:
A(\bm b)=\bm b
\right\}$ denote the set of all fixed points of~$A$,
and recall from Lemma~\ref{lem:other-fixed} that $|\Psi|\le2^p$.
Then let
$\delta=\min_{\bm b_1,\bm b_2\in\Psi}\|\bm b_1-\bm b_2\|_2>0$ denote the smallest distance between fixed points.
Also, for each $k\ge1$, let $\bm b^{(k)}_\Psi=\argmin_{\bm b\in\Psi}\|\bm b^{(k)}-\bm b\|_2$ denote the fixed point closest to $\bm b^{(k)}$.  (If such a point is not unique, the choice may be made arbitrarily among all such closest fixed points.)
The proof now proceeds in several steps.
First, we establish that $\|\bm b^{(k)}-\bm b^{(k)}_\Psi\|_2\to0$ as $k\to\infty$, i.e., the distance between $\bm b^{(k)}$ and the nearest fixed point~$\bm b^{(k)}_\Psi$ tends to zero.  Second, we demonstrate that
$\bm b^{(k)}_\Psi=\bm b_\Psi$ for all sufficiently large~$k$, and thus that $\bm b^{(k)}\to\bm b_\Psi$.  Third, we show that with $P_0$-probability~$1$, $\bm b_\Psi\notin\Psi\setminus\{\bbhl\}$, which implies that $\bm b_\Psi=\bbhl$.

We begin by establishing that $\|\bm b^{(k)}-\bm b^{(k)}_\Psi\|_2\to0$.
Let $q:\reals^p\to\reals$ be the function such that
$q(\bm b)=Q[A(\bm b)]-Q(\bm b)$ for all $\bm b\in\reals^p$.
By Lemma~\ref{lem:monotone}, $q(\bm b)=0$ if and only if $\bm b\in\Psi$.
Also, since $Q\!\left[\bm b^{(k)}\right]$ converges as $k\to\infty$ by Lemma~\ref{lem:monotone}, it follows that
$q\!\left[\bm b^{(k)}\right]\to0$
as $k\to\infty$.
Now observe that $q$ is continuous since it is a composition of matrix multiplication, vector norm--taking, and addition.
Also observe that $q(\bm b)\to\infty$ as $\|\bm b\|_2\to\infty$.
Then we have
$\|\bm b^{(k)}-\bm b^{(k)}_\Psi\|_2\to0$
as $k\to\infty$.

We now demonstrate that $\bm b^{(k)}_\Psi=\bm b_\Psi$ for all sufficiently large~$k$.
Note that $A:\reals^p\to\reals^p$ is continuous, which follows immediately from the continuity of matrix addition, multiplication, inversion, and square root--taking, of which the function $A$ is a composition.
Now let $\varepsilon>0$.  Then there exists $\delta>0$ such that for each $\bm b\in\Psi$,
$\|A(\bm\beta)-A(\bm b)\|_2\le\varepsilon/4$
for all $\bm\beta\in\reals^p$ such that $\|\bm\beta-\bm b\|_2\le\delta$.
Thus, if $\bm b\in\Psi$ and $\|\bm\beta-\bm b\|_2\le\min\{\delta,\,\varepsilon/4\}$, then
\[
\|A(\bm\beta)-\bm\beta\|_2
\le
\|A(\bm\beta)-\bm b\|_2+\|\bm b-\bm\beta\|_2
=
\|A(\bm\beta)-A(\bm b)\|_2+\|\bm b-\bm\beta\|_2\le\frac{\varepsilon}{4}+\frac{\varepsilon}{4}=\frac{\varepsilon}{2}.
\]
Next, there exists $K\ge1$ such that
$\|\bm b^{(k)}-\bm b^{(k)}_\Psi\|_2\le\min\{\delta,\,\varepsilon/4\}$
for all $k\ge K$.
Then
\begin{align*}
\left\|\bm b^{(k+1)}_\Psi-\bm b^{(k)}_\Psi\right\|_2
&\le
\left\|\bm b^{(k+1)}_\Psi-\bm b^{(k+1)}_{}\right\|_2
+\left\|\bm b^{(k+1)}_{\phantom{\$}}-\bm b^{(k)}_{\phantom{\$}}\right\|_2
+\left\|\bm b^{(k)}_{}-\bm b^{(k)}_\Psi\right\|_2\\
&\le
\frac{\varepsilon}{4}+\left\|A\!\left[\bm b^{(k)}\right]-\bm b^{(k)}\right\|_2+\frac{\varepsilon}{4}\le
\varepsilon.
\end{align*}
Hence $\|\bm b^{(k+1)}_\Psi-\bm b^{(k)}_\Psi\|\to0$, which implies that $\bm b^{(k+1)}_\Psi=\bm b^{(k)}_\Psi=\bm b_\Psi$ for all sufficiently large~$k$ since $\Psi$ contains only finitely many points.  Thus, $\bm b^{(k)}\to\bm b_\Psi\in\Psi$ as $k\to\infty$.  For notational convenience, we will henceforth simply write $\bm b^{(k)}\to\bm b\in\Psi$.

Third and finally, we now show that with $P_0$-probability~1,
$\bm b\notin\Psi\setminus\{\bbhl\}$, noting that $\bbhl\in\Psi$ by Lemma~\ref{lem:lasso-fixed}.  This step constitutes the remainder of the proof.
First,
Lemmas~\ref{lem:other-fixed}~and~\ref{lem:no-zeros} imply
that
\[
P_0\!\left(\bm{b}^{(k)}\in\Psi\setminus\{\bbhl\}\text{ for some }k\ge1\right)=0,
\]
noting that $\bm{b}\in\Psi\setminus\{\bbhl\}$ only if at least one component of $\bm{b}$ is zero.
Now suppose $\bm{b}^{(k)}\notin\Psi$ for every $k\ge1$, and also suppose $\bm{b}^{(k)}\to\bm{b}\in\Psi\setminus\{\bbhl\}$ as $k\to\infty$.  We now demonstrate that this leads to a contradiction by establishing that there exists $J\in\oneto{m}$ such that
$|b^{(k+1)}_J-b_J^{}|\ge|b^{(k)}_J-b_J^{}|$ for all sufficiently large~$k$.
As before, we may assume without loss of generality that $b_j=0$ for each $j\in\oneto{m}$ and $b_j\ne0$ for each $j\in\{m+1,\ldots,p\}$, where $0\le m\le p$.  Then partition the vector~$\bm{b}$ as $\bm{b}=(\bm{0}_m,\bm{b}_\star)$.
Now note that $A(\bm{b})=\bm{b}$ since $\bm{b}\in\Psi$, and hence
\begin{align}
\bm{b}_\star
=\left(\bm{X}_\star^T\bm{X}_\star+\lambda\bm{B}_\star^{-1}\right)^{-1}\bm{X}_\star^T\bm{y},
\label{accidental-fixed}
\end{align}
by Lemma~\ref{lem:partition}, where we have partitioned $\bm{X}$ into $\bm{X}_0$ and $\bm{X}_\star$ as before and defined the diagonal matrix
$
\bm{B}_\star
=\Diag(|b_{m+1}|,\ldots,|b_p|)$.  Now note that since $\bm{b}\ne\bbhl$ and 
Assumption~1
holds, there exists $\bbeta\in\reals^p$ such that $Q(\bbeta)>Q(\bm{b})$.  Now partition $\bbeta=(\bbeta_0,\bbeta_\star)$, where $\bbeta_0$ has length~$m$, and note that
\begin{align*}
Q(\bbeta)-Q(\bm{b})
&=
-\bbeta^T\bm{X}^T\bm{X}\bbeta+2\bbeta^T\bm{X}^T\bm{y}-2\lambda\normof{\bbeta}_1
+\bm{b}^T\bm{X}^T\bm{X}\bm{b}-2\bm{b}^T\bm{X}^T\bm{y}+2\lambda\normof{\bm{b}}_1\\
&=
-\bbeta_0^T\bm{X}_0^T\bm{X}_0\bbeta_0-\bbeta_\star^T\bm{X}_\star^T\bm{X}_\star\bbeta_\star-2\bbeta_0^T\bm{X}_0^T\bm{X}_\star\bbeta_\star+2\bbeta_0^T\bm{X}_0^T\bm{y}+2\bbeta_\star^T\bm{X}_\star^T\bm{y}\\
&\qquad-2\lambda\bbeta_0^T\sign\bbeta_0-2\lambda\bbeta_\star^T\sign\bbeta_\star+\bm{b}_\star^T\bm{X}_\star^T\bm{X}_\star\bm{b}_\star-2\bm{b}_\star^T\bm{X}_\star^T\bm{y}+2\lambda\bm{b}_\star^T\sign\bm{b}_\star\\
&=
-\bbeta_0^T\bm{X}_0^T\bm{X}_0\bbeta_0
-2\bbeta_0^T\left[\lambda\sign\bbeta_0-\bm{X}_0^T(\bm{y}-\bm{X}_\star\bbeta_\star)\right]
-\bbeta_\star^T\bm{X}_\star^T\bm{X}_\star\bbeta_\star\\
&\qquad-2\bbeta_\star^T(\lambda\sign\bbeta_\star-\bm{X}_\star^T\bm{y})
+\bm{b}_\star^T\bm{X}_\star^T\bm{X}_\star\bm{b}_\star+2\bm{b}_\star^T(\lambda\sign\bm{b}_\star-\bm{X}_\star^T\bm{y}),
\end{align*}
where for any vector~$\bm{a}$, we write $\sign\bm{a}$ to denote the vector with components equal to the signs of the components of~$\bm{a}$.
Adding and subtracting $2\bbeta_0^T\bm{X}_0^T\bm{X}_\star\bm{b}_\star$ yields
\begin{align*}
Q(\bbeta)-Q(\bm{b})
&=
-\bbeta_0^T\bm{X}_0^T\bm{X}_0\bbeta_0
-2\bbeta_0^T\left[\lambda\sign\bbeta_0-\bm{X}_0^T(\bm{y}-\bm{X}_\star\bm{b}_\star)\right]
-\bbeta_\star^T\bm{X}_\star^T\bm{X}_\star\bbeta_\star\\
&\qquad-2\bbeta_\star^T(\lambda\sign\bbeta_\star-\bm{X}_\star^T\bm{y})
+\bm{b}_\star^T\bm{X}_\star^T\bm{X}_\star\bm{b}_\star+2\bm{b}_\star^T(\lambda\sign\bm{b}_\star-\bm{X}_\star^T\bm{y})\\
&\qquad+2\bbeta_0^T\bm{X}_0^T\bm{X}_\star(\bm{b}_\star-\bbeta_\star).
\end{align*}
Now observe that
\begin{align*}
\left|2\bbeta_0^T\bm{X}_0^T\bm{X}_\star(\bm{b}_\star-\bbeta_\star)\right|
&\le
2\normof{\bm{X}_0\bbeta_0}_2\normof{\bm{X}_\star(\bm{b}_\star-\bbeta_\star)}_2\\
&\le
\normof{\bm{X}_0\bbeta_0}_2^2+\normof{\bm{X}_\star(\bm{b}_\star-\bbeta_\star)}_2^2\\
&=\bbeta_0^T\bm{X}_0^T\bm{X}_0\bbeta_0+\bm{b}_\star^T\bm{X}_\star^T\bm{X}_\star\bm{b}_\star+\bbeta_\star^T\bm{X}_\star^T\bm{X}_\star\bbeta_\star-2\bbeta_\star^T\bm{X}_\star^T\bm{X}_\star\bm{b}_\star
\end{align*}
by the Cauchy-Schwarz inequality and the fact that the geometric mean of two nonnegative real numbers is bounded above by the arithmetic mean.  Then
\begin{align*}
Q(\bbeta)-Q(\bm{b})
&\le
-\bbeta_0^T\bm{X}_0^T\bm{X}_0\bbeta_0
-2\bbeta_0^T\left[\lambda\sign\bbeta_0-\bm{X}_0^T(\bm{y}-\bm{X}_\star\bm{b}_\star)\right]
-\bbeta_\star^T\bm{X}_\star^T\bm{X}_\star\bbeta_\star\\
&\qquad-2\bbeta_\star^T(\lambda\sign\bbeta_\star-\bm{X}_\star^T\bm{y})
+\bm{b}_\star^T\bm{X}_\star^T\bm{X}_\star\bm{b}_\star+2\bm{b}_\star^T(\lambda\sign\bm{b}_\star-\bm{X}_\star^T\bm{y})\\
&\qquad+\bbeta_0^T\bm{X}_0^T\bm{X}_0\bbeta_0+\bm{b}_\star^T\bm{X}_\star^T\bm{X}_\star\bm{b}_\star+\bbeta_\star^T\bm{X}_\star^T\bm{X}_\star\bbeta_\star-2\bbeta_\star^T\bm{X}_\star^T\bm{X}_\star\bm{b}_\star\\
&=
-2\bbeta_0^T\left[\lambda\sign\bbeta_0-\bm{X}_0^T(\bm{y}-\bm{X}_\star\bm{b}_\star)\right]-2\bbeta_\star^T(\lambda\sign\bbeta_\star+\bm{X}_\star^T\bm{X}_\star\bm{b}_\star-\bm{X}_\star^T\bm{y})\\
&\qquad+2\bm{b}_\star^T(\lambda\sign\bm{b}_\star+\bm{X}_\star^T\bm{X}_\star\bm{b}_\star-\bm{X}_\star^T\bm{y}).
\end{align*}
It follows from~(\ref{accidental-fixed}) that
$
\bm{X}_\star^T\bm{X}_\star\bm{b}_\star-\bm{X}_\star^T\bm{y}=-\lambda\sign\bm{b}_\star
$,
and thus
\begin{align*}
Q(\bbeta)-Q(\bm{b})
&\le
-2\bbeta_0^T\left[\lambda\sign\bbeta_0-\bm{X}_0^T(\bm{y}-\bm{X}_\star\bm{b}_\star)\right]-2\lambda\bbeta_\star^T(\sign\bbeta_\star-\sign\bm{b}_\star)\\
&\le
-2\bbeta_0^T\left[\lambda\sign\bbeta_0-\bm{X}_0^T(\bm{y}-\bm{X}_\star\bm{b}_\star)\right]\\
&=-2\sum_{j=1}^m\left\{\lambda\left|\left[\beta_0\right]_j\right|-\left[\beta_0\right]_j\left[\bm{X}_0^T(\bm{y}-\bm{X}_\star\bm{b}_\star)\right]_j\right\}\\
&\le-2\sum_{j=1}^m\left\{\lambda\left|\left[\beta_0\right]_j\right|-\left|\left[\beta_0\right]_j\right|\left|\left[\bm{X}_0^T(\bm{y}-\bm{X}_\star\bm{b}_\star)\right]_j\right|\right\}\\
&=-2\sum_{j=1}^m\left|\left[\beta_0\right]_j\right|\left\{\lambda-\left|\left[\bm{X}_0^T(\bm{y}-\bm{X}_\star\bm{b}_\star)\right]_j\right|\right\}.
\end{align*}
However, since $Q(\bbeta)>Q(\bm{b})$, this last expression must be strictly positive, and hence we must have
\begin{align}
\left|\left[\bm{X}_0^T(\bm{y}-\bm{X}_\star\bm{b}_\star)\right]_{J}\right|>\lambda
\label{J-contradiction}
\end{align}
for some $J\in\oneto{m}$.

Now let $\varepsilon>0$ be small, and note that the suppositions imply that for any sufficiently large~$k$, we may write $\bm{b}^{(k)}=\bm{b}+\varepsilon
\bm{r}
$, where $|
r_j
|\le1$ for each $j\in\oneto{p}$ with at least one $
r_j
\ne0$.
(Note that $\bm{r}$ depends on~$k$, as do the objects defined below that depend on~$\bm{r}$, but we suppress the $k$ for brevity of notation.)
Now partition the vector~$\bm{r}$ as $\bm{r}=(\bm{u},\bm{v})$, where the vector~$\bm{u}$ has length~$m$, so that we may write $\bm{b}^{(k)}=(\varepsilon\bm{u},\;\bm{b}_\star+\varepsilon\bm{v})$.
Since
Assumption~2
holds, we may apply Lemma~\ref{lem:no-zeros}.  Then each component of~$\bm{b}^{(k)}$ is nonzero with $P_0$-probability~1, so from this point we will assume that this is the case, with the understanding that all subsequent statements are to be taken to hold with $P_0$-probability~1.
Then by~(\ref{sequence}) we may write
\begin{align*}
\bm{b}^{(k+1)}
&=
\left[\bm{X}^T\bm{X}+\lambda\left(\bm{B}^{(k)}\right)^{-1}\right]^{-1}\bm{X}^T\bm{y}
\end{align*}
where the diagonal matrix~$\bm{B}^{(k)}$ is defined as before.  Now partition $\bm{X}$ into $\bm{X}_0$ and $\bm{X}_\star$ as before and define diagonal matrices $\bm{U}$, $\bm{V}$, and $\bm{B}_\star$ whose diagonal elements are the absolute values of the components of the vectors $\bm{u}$, $\bm{v}$, and $\bm{b}_\star$, respectively.  Then we have
\begin{align*}
\bm{b}^{(k+1)}
&=
\begin{bmatrix}
\bm{X}_0^T\bm{X}_0+\lambda\varepsilon^{-1}\bm{U}^{-1} & \bm{X}_0^T\bm{X}_\star\\
\bm{X}_\star^T\bm{X}_0 & \bm{X}_\star^T\bm{X}_\star+\lambda\left(\bm{B}_\star+\varepsilon\bm{V}\right)^{-1}
\end{bmatrix}^{-1}
\begin{bmatrix}
\bm{X}_0^T\bm{y}\\ \bm{X}_\star^T\bm{y}
\end{bmatrix}\\
&=
\begin{bmatrix}
\bm{W}_{00}&\bm{W}_{0\star}\\ \bm{W}_{0\star}^T&\bm{W}_{\star\star}
\end{bmatrix}
\begin{bmatrix}
\bm{X}_0^T\bm{y}\\ \bm{X}_\star^T\bm{y}
\end{bmatrix}
=\begin{bmatrix}\bm{W}_{00}\bm{X}_0^T\bm{y}+\bm{W}_{0\star}\bm{X}_\star^T\bm{y}\\
\bm{W}_{0\star}^T\bm{X}_\star^T\bm{y}+\bm{W}_{\star\star}\bm{X}_\star^T\bm{y}\end{bmatrix},
\end{align*}
where
\begin{align*}
\bm{W}_{00}&=\left(\bm{X}_0^T\bm{X}_0+\lambda\varepsilon^{-1}\bm{U}^{-1}-\bm{X}_0^T\bm{X}_\star\bm{Z}\bm{X}_\star^T\bm{X}_0\right)^{-1},\\
\bm{W}_{0\star}&=-\bm{W}_{00}\bm{X}_0^T\bm{X}_\star\bm{Z},\\
\bm{W}_{\star\star}&=\bm{Z}+\bm{Z}\bm{X}_\star^T\bm{X}_0\bm{W}_{00}\bm{X}_0^T\bm{X}_\star\bm{Z},\\
\bm{Z}&=\left[\bm{X}_\star^T\bm{X}_\star+\lambda\left(\bm{B}_\star+\varepsilon\bm{V}\right)^{-1}\right]^{-1}.
\end{align*}
Now observe that
\begin{align*}
\bm{Z}&=\left[\bm{X}_\star^T\bm{X}_\star+\lambda\bm{B}_\star^{-1}-\lambda\varepsilon\bm{B}_\star^{-1}\bm{V}\bm{B}_\star^{-1}\right]^{-1}+O(\varepsilon^2)\\
&=\left(\bm{X}_\star^T\bm{X}_\star+\lambda\bm{B}_\star^{-1}\right)^{-1}
+\lambda\varepsilon\left(\bm{X}_\star^T\bm{X}_\star+\lambda\bm{B}_\star^{-1}\right)^{-1}\bm{B}_\star^{-1}\bm{V}\bm{B}_\star^{-1}
\left(\bm{X}_\star^T\bm{X}_\star+\lambda\bm{B}_\star^{-1}\right)^{-1}
+O(\varepsilon^2),\\
&=\tilde{\bm{Z}}+\lambda\varepsilon\tilde{\bm{Z}}\bm{B}_\star^{-1}\bm{V}\bm{B}_\star^{-1}\tilde{\bm{Z}}+O(\varepsilon^2),
\end{align*}
where $\tilde{\bm{Z}}=\left(\bm{X}_\star^T\bm{X}_\star+\lambda\bm{B}_\star^{-1}\right)^{-1}$.  Thus,
\begin{align*}
\bm{W}_{00}&=\varepsilon\left(\lambda\bm{U}^{-1}+\varepsilon\bm{X}_0^T\bm{X}_0-\varepsilon\bm{X}_0^T\bm{X}_\star\bm{Z}\bm{X}_\star^T\bm{X}_0\right)^{-1}\\
&=\varepsilon\left[\lambda\bm{U}^{-1}+\varepsilon\bm{X}_0^T\bm{X}_0-\varepsilon\bm{X}_0^T\bm{X}_\star\tilde{\bm{Z}}\bm{X}_\star^T\bm{X}_0\right]^{-1}+
O(\varepsilon^2)
\\
&=\lambda^{-1}\varepsilon\bm{U}+O(\varepsilon^2),\\
\bm{W}_{0\star}&=-\lambda^{-1}\varepsilon\bm{U}\bm{X}_0^T\bm{X}_\star
\tilde{\bm{Z}}+O(\varepsilon^2),\\
\bm{W}_{\star\star}&=\tilde{\bm{Z}}
+\lambda\varepsilon\tilde{\bm{Z}}\bm{B}_\star^{-1}\bm{V}\bm{B}_\star^{-1}
\tilde{\bm{Z}}+\lambda^{-1}\varepsilon\tilde{\bm{Z}}\bm{X}_\star^T\bm{X}_0\bm{U}\bm{X}_0^T\bm{X}_\star\tilde{\bm{Z}}+O(\varepsilon^2)\\
&=\tilde{\bm{Z}}+\varepsilon\tilde{\bm{Z}}
\left(\lambda\bm{B}_\star^{-1}\bm{V}\bm{B}_\star^{-1}+\lambda^{-1}\bm{X}_\star^T\bm{X}_0\bm{U}\bm{X}_0^T\bm{X}_\star\right)
\tilde{\bm{Z}}+O(\varepsilon^2),
\end{align*}
and therefore the first $m$ components of $\bm{b}^{(k+1)}$ are
\begin{align*}
(b^{(k+1)}_1,\ldots,b^{(k+1)}_m)=
\bm{W}_{00}\bm{X}_0^T\bm{y}+\bm{W}_{0\star}\bm{X}_\star^T\bm{y}
&=\lambda^{-1}\varepsilon\bm{U}\bm{X}_0^T\bm{y}-\lambda^{-1}\varepsilon\bm{U}\bm{X}_0^T\bm{X}_\star
\tilde{\bm{Z}}\bm{X}_\star^T\bm{y}+O(\varepsilon^2)\\
&=\lambda^{-1}\varepsilon\bm{U}\bm{X}_0^T(\bm{y}-\bm{X}_\star\bm{b}_\star)+O(\varepsilon^2),
\end{align*}
noting from~(\ref{accidental-fixed}) that $\tilde{\bm{Z}}\bm{X}_\star^T\bm{y}=\bm{b}_\star$.
Then for $j\in\oneto{m}$,
\begin{align*}
b_j^{(k+1)}=\lambda^{-1}\varepsilon|u_j|\left[\bm{X}_0^T(\bm{y}-\bm{X}_\star\bm{b}_\star)\right]_j+O(\varepsilon^2),
\end{align*}
where we write $\left[\bm{X}_0^T(\bm{y}-\bm{X}_\star\bm{b}_\star)\right]_j$ to denote the $j$th component of the vector in square brackets.
Then we may combine this with~(\ref{J-contradiction}) to yield that for all sufficiently small~$\varepsilon$,
\begin{align*}
\left|b_J^{(k+1)}-b_J^{}\right|=\left|b_J^{(k+1)}\right|=\lambda^{-1}\varepsilon|u_J|\left|\left[\bm{X}_0^T(\bm{y}-\bm{X}_\star\bm{b}_\star)\right]_J\right|>\varepsilon|u_J|=\left|b_J^{(k)}\right|=\left|b_J^{(k)}-b_J^{}\right|.
\end{align*}
This establishes the contradiction and completes the proof.
\end{proof}

\begin{proof}[%
Proof of Lemma~\ref{lem:one-d-form}%
]
Begin by noting from~(\ref{sequence1}) that if $b^{(0)}=0$, then $b^{(k)}=0$ for all $k\ge0$, and it can be seen that this agrees with the equation in the lemma.  Hence, we may assume that $b^{(0)}\ne0$ and rewrite equation in the lemma as
\begin{align}
b^{(k)}=\frac{c^k\lambda\sign\bh}{\lambda|b^{(0)}|^{-1}+n\sum_{m=0}^{k-1}c^m}.
\label{sequence1-simpler}
\end{align}
We now proceed by induction.
To establish the $k=1$ base case, note that (\ref{sequence1-simpler}) reduces to
\begin{align*}
b^{(1)}=\frac{n\bh}{\lambda|b^{(0)}|^{-1}+n}=\frac{|b^{(0)}|\;\bh}{\lambda/n+|b^{(0)}|},
\end{align*}
which is precisely~(\ref{sequence1}).
Now assume that~(\ref{sequence1-simpler}) holds for some $k\ge1$, and combine this with~(\ref{sequence1}) to obtain
\begin{align*}
b^{(k+1)}=\frac{|b^{(k)}|\;\bh}{\lambda/n+|b^{(k)}|}
&=\left(\frac{\lambda}{n}+\frac{c^k\lambda}{\lambda|b^{(0)}|^{-1}+n\sum_{m=0}^{k-1}c^m}\right)^{-1}\left(\frac{c^k\lambda}{\lambda|b^{(0)}|^{-1}+n\sum_{m=0}^{k-1}c^m}\right)\bh\\
&=\frac{nc^k\bh}{\lambda|b^{(0)}|^{-1}+n\sum_{m=0}^{k-1}c^m+nc^k}\\
&=\frac{c^{k+1}(n\bh/c)}{\lambda|b^{(0)}|^{-1}+n\sum_{m=0}^{k}c^m}=\frac{c^{k+1}\lambda\sign\bh}{\lambda|b^{(0)}|^{-1}+n\sum_{m=0}^{k}c^m},
\end{align*}
which is precisely~(\ref{sequence1-simpler}) as applied to $b^{(k+1)}$.
\end{proof}

\begin{proof}[%
Proof of Theorem~\ref{thm:one-d-properties}%
]
Statement~\ref{one-d-all-nonzero} is clear from inspection of the statement of Lemma~\ref{lem:one-d-form}.  To prove statement~\ref{one-d-all-same-side}, begin by noting that from statement~\ref{one-d-all-nonzero}, statement~\ref{one-d-all-same-side} holds trivially if either $\bh=0$ or $b^{(0)}=0$.  Hence, we may assume that $\bh$ and $b^{(0)}$ are both nonzero, in which case we have by~\ref{one-d-all-nonzero} that $\sign b^{(k)}=\sign\bh\ne0$ for all $k\ge1$. In light of this result, statement~\ref{one-d-all-same-side} is trivial if $\bhl{}=0$, so we may assume that $\bhl{}\ne0$.  Then $\sign b^{(k)}=\sign\bh=\sign\bhl{}\ne0$, and $|\bhl{}|=|\bh|-\lambda/n$.  Now note that if $|b^{(k)}|>|\bhl{}|$, then $|b^{(k)}|>|\bh|-\lambda/n$, and we have from the recursion relation~(\ref{sequence1}) that
\begin{align*}
|b^{(k+1)}|=\frac{|b^{(k)}|\;|\bh|}{|b^{(k)}|+\lambda/n}>|\bh|-\lambda/n=|\bhl{}|.
\end{align*}
If instead $|b^{(k)}|<|\bhl{}|$, then $|b^{(k)}|<|\bh|-\lambda/n$, and
\begin{align*}
|b^{(k+1)}|=\frac{|b^{(k)}|\;|\bh|}{|b^{(k)}|+\lambda/n}<|\bh|-\lambda/n=|\bhl{}|.
\end{align*}
Finally, if $|b^{(k)}|=|\bhl{}|$, then $|b^{(k)}|=|\bh|-\lambda/n$, and
\begin{align*}
|b^{(k+1)}|=\frac{|b^{(k)}|\;|\bh|}{|b^{(k)}|+\lambda/n}=|\bh|-\lambda/n=|\bhl{}|,
\end{align*}
which establishes~\ref{one-d-all-same-side}.  To conclude, note that~\ref{one-d-fixed-points} follows immediately from~\ref{one-d-all-nonzero}~and~\ref{one-d-all-same-side}.
\end{proof}

\begin{proof}[%
Proof of Theorem~\ref{thm:one-d-convergence}%
]
Define $c=n|\bh|/\lambda$, and note that~(\ref{lasso-sth}) may be written as
\begin{align*}
\bhl{}=\left[\frac{(c-1)\bh}{c}\right]I(c>1)
=\left[\frac{(c-1)\lambda\sign\bh}{n}\right]I(c>1).
\end{align*}
Now consider three cases.

Case~1: Suppose that $n|\bh|<\lambda$.  Then $c<1$, and $\bhl{}=0$.  By Lemma~\ref{lem:one-d-form},
\begin{align*}
\left|b^{(k)}-\bhl{}\right|=\left|b^{(k)}\right|\le\frac{c^k\lambda|b^{(0)}|}{\lambda}=\left(\dfrac{n|\bh|}{\lambda}\right)^k\left|b^{(0)}-\bhl{}\right|.
\end{align*}

Case~2: Suppose that $n|\bh|=\lambda$.  Then $c=1$, and $\bhl{}=0$.  By Lemma~\ref{lem:one-d-form},
\begin{align*}
\left|b^{(k)}-\bhl{}\right|=\left|b^{(k)}\right|\le
\frac{\lambda}{\lambda|b^{(0)}|^{-1}+nk}\le\frac{\lambda}{nk}.
\end{align*}

Case~3: Suppose that $n|\bh|>\lambda$.  Then $c>1$, and $\bhl{}\ne0$.  Now write
\begin{align}
\left|b^{(k)}-\bhl{}\right|&=\left|\frac{c^k\lambda\sign\bh}{\lambda|b^{(0)}|^{-1}+n\sum_{m=0}^{k-1}c^m}-\frac{(c-1)\lambda\sign\bh}{n}\right|\notag\\
&=\left|\frac{c^k}{\lambda|b^{(0)}|^{-1}+n\sum_{m=0}^{k-1}c^m}-\frac{c-1}{n}\right|\lambda
\label{one-d-rate-step}
\end{align}
Now note that
\begin{align*}
\sum_{m=0}^{k-1}c^m=c^{k-1}\left(\sum_{m=0}^{\infty}c^{-m}-\sum_{m=k}^{\infty}c^{-m}\right)=c^{k-1}\left(\frac{1-c^{-k}}{1-c^{-1}}\right)=\frac{c^k-1}{c-1},
\end{align*}
and insert this into~(\ref{one-d-rate-step}) to obtain
\begin{align*}
\left|b^{(k)}-\bhl{}\right|
&=\left|\frac{c^k}{\lambda|b^{(0)}|^{-1}+n(c-1)^{-1}(c^k-1)}-\frac{c-1}{n}\right|\lambda\\
&=\left|\frac{nc^k-(c-1)\lambda|b^{(0)}|^{-1}-n(c^k-1)}{\lambda|b^{(0)}|^{-1}+n(c-1)^{-1}(c^k-1)}\right|\frac{\lambda}{n}\\
&=\left|\frac{|b^{(0)}|-(c-1)\lambda/n}{(c-1)\lambda/n+(c^k-1)|b^{(0)}|}\right|\left(\frac{(c-1)\lambda}{n}\right)\\
&=\left|\frac{|b^{(0)}|-|\bhl{}|}{|\bhl{}|+(c^k-1)|b^{(0)}|}\right|\left|\bhl{}\right|\\
&\le\frac{|\bhl{}|}{|\bhl{}|+(c^k-1)|b^{(0)}|}\left|b^{(0)}-\bhl{}\right|\\
&\le\frac{(c-1)|\bh|}{c(c^k-1)|b^{(0)}|}\left|b^{(0)}-\bhl{}\right|\\
&\le
\left(\frac{1}{c}\right)^k
\left|\dfrac{\bh}{b^{(0)}}\right|\left|b^{(0)}-\bhl{}\right|,
\end{align*}
and the result is immediately obtained by substituting $n|\bh|/\lambda$ for $c$.
(Also note that $1/c<1$ since $c>1$ in this case.)
\end{proof}

\clearpage
\newpage
\section{A Stochastic/Annealing Variant to SLOG}
\label{sec:AppE}

\textcolor{black}{
Rather than focusing on the recursion as $\sigma^2\to0$ given by (7), it is also of interest to consider the case of a strictly positive value of~$\sigma^2$. From a theoretical standpoint, a strictly positive value of~$\sigma^2$ alters the deterministic nature of our algorithm.  If $\sigma^2>0$, then the conditional distributions in each step of the Gibbs sampling cycle no longer collapse to degeneracy at a single point.  In essence, a strictly positive value of $\sigma^2$ corresponds to implementing the usual Gibbs sampling, which will give samples from the posterior.}

\textcolor{black}{
More specifically, with a strictly positive value of~$\sigma^2$, the SLOG recursion (7) becomes
\begin{equation}\label{eq:aSLOG}
\bm b^{(k+1)}=\left(\bm X^T\bm X+\lambda\,\bm U^{(k)}\right)^{-1}\bm X^T\bm y,
\end{equation}
where $\bm U^{(k)}=\text{diag}(U_1^{(k)},\ldots,U_p^{(k)})$ and $U_1^{(k)},\ldots,U_p^{(k)}$ are independent random variables with
\[
U_j^{(k)}\sim\text{InverseGaussian}\left(\frac1{\left|b_j^{(k)}\right|},\;\frac1{\sigma^2}\right)
\]
for each $j\in\{1,\ldots,p\}$.  (This assumes $b_j^{(k)}\ne0$ for all~$j$, but we already know this to be the case with probability~$1$.)  Note that if $\sigma^2$ is small, then
$U_j^{(k)}$ will be close to $|b_j^{(k)}|^{-1}$ (the corresponding value in the original SLOG recursion)
with high probability.}

\textcolor{black}{
Using recursion~\eqref{eq:aSLOG} an ``annealing" version of SLOG can be implemented by sampling the $\bm U^{(k)}$ with a value of $\sigma^2$ that is fixed for iteration $k$ but that decreases as iteration number $k$ increases. For simplicity we will refer to this annealing approach as aSLOG.}

\textcolor{black}{
To compare the performance of aSLOG and rSLOG we conducted a simulation study. In the simulations, aSLOG was implemented using~$\sigma^2$ sequences that started at either 1e-7 or 1e-10 and then decreased by $1\%$ at each iteration. For both aSLOG and rSLOG coefficients were thresholded to zero as soon as they were below 1e-13 in absolute value. Figure~\ref{aSLOG} contains the results of the simulations.}

 \begin{figure}
 \begin{center}
 \includegraphics[scale=0.8]{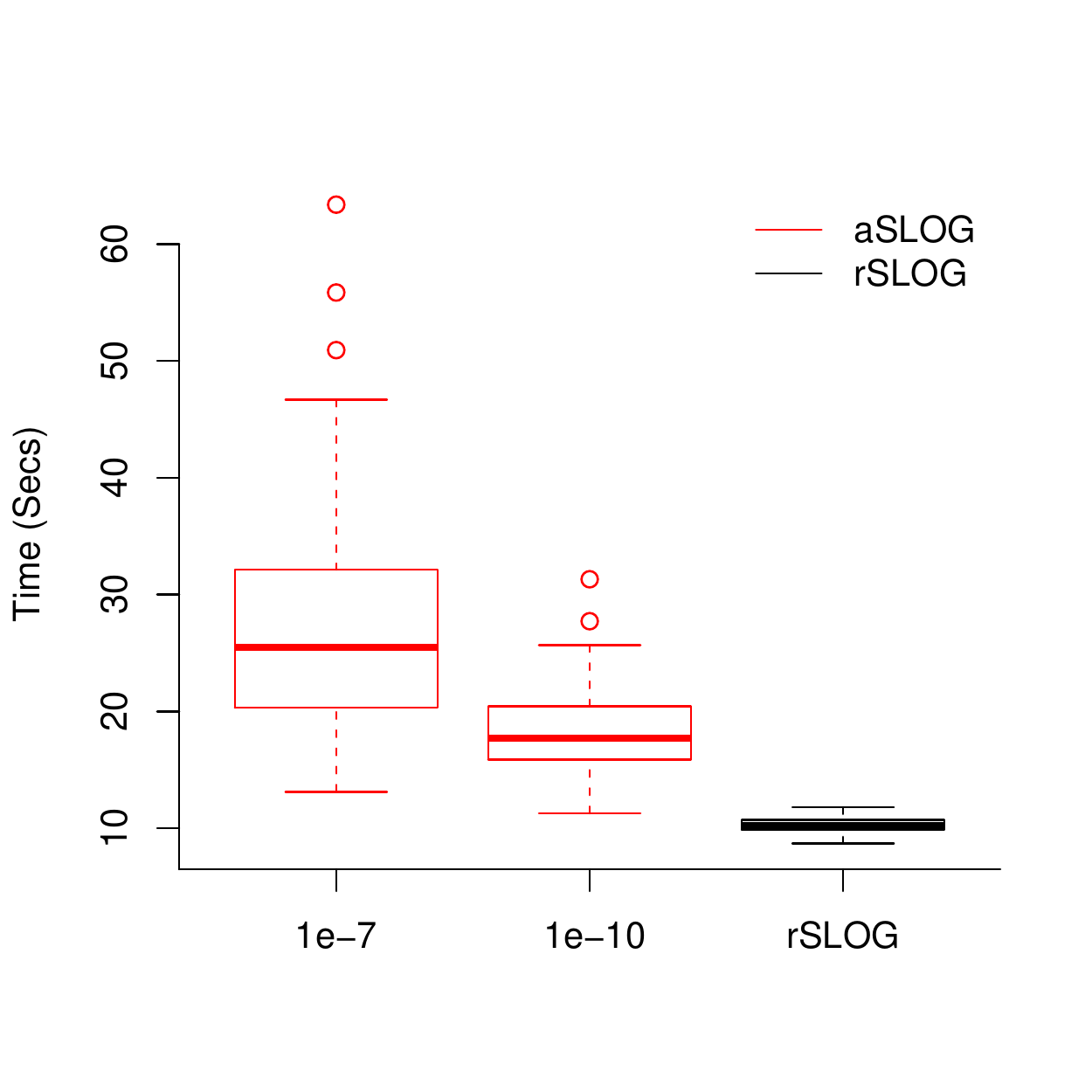}
 \caption{\label{F:aSLOGfig} \textcolor{black}{Timing comparisons between aSLOG and rSLOG for 100 datasets generated using relationship \eqref{Esim1} with $n=200$, $p=1000$, $\rho=0.95$. A sparsity ($s$) of 0.05 was used for $\bm{\beta}^{\lars}$. For aSLOG the starting values of the $\sigma^2$ sequences are as indicated in the x-axis of the plot and decrease by 1\% at each successive Gibbs iteration. \label{aSLOG}}}
 \end{center}
 \end{figure}

\textcolor{black}{
The simulations show that the compute time for rSLOG is less than the compute time for aSLOG. Additionally, the compute time for aSLOG increases as $\sigma^2$ increases. These findings can be explained by the fact that as~$\sigma^2$ increases the posterior is less tightly concentrated around its mode and, in a sense, further away from the lasso solution.
}

\clearpage
\newpage
\section{Comparison of
\color{black}
SLOG and the Algorithm of \citet{lai2013}%
\color{black}
}
\label{sec:AppB}

Here we provide a brief comparison of the SLOG algorithm and the closely related algorithm of \citet{lai2013} (denoted the Lai algorithm), described in Section~\ref{sec:Lai}. As a reminder to the reader, the Lai algorithm utilizes a sequence $\epsilon_k$ to avoid encountering infinite weights (due to division by zero) when performing IRLS. The introduction of the $\epsilon_k$ means that the Lai algorithm is solving an approximate problem at each IRLS iteration. A consequence of solving approximate problems is that if the solution is sparse the Lai algorithm will produce an approximate solution. The closeness of the approximation depends on the sequence $\epsilon_k$ which, in-turn, depends on $\alpha$ and the sparsity estimate $h$, via $\epsilon_k=\min\{\epsilon_{k-1},\,\alpha\, r(\bm\beta_k)_{h+1}\}$. In deriving the SLOG algorithm we show that a sequence $\epsilon_k$ is unnecessary. Importantly, this means that the SLOG algorithm can avoid the additional complexity and errors introduced by the $\epsilon$-approximations of the Lai algorithm.

The SLOG and Lai algorithms were implemented using code written in the statistical package R \citep{statR}. The R code is a slightly modified version of code provided by \citet{lai2013} for implementing the Lai algorithm. The SLOG algorithm was implemented (via the Lai algorithm) by setting $\epsilon_k=0,\;\forall k$ and the starting values for the coefficients to 1e-6. For the Lai algorithm the non-zero $\epsilon_k$ allow the starting values for the coefficients to be set to 0. The two algorithms were run until $d(\SLOG(k))<$1e-6 (or $d(\Lai(k))<$1e-6), where the scaled norm distance $d()$ is defined in Section~\ref{sec:numerical}. The Lai algorithm was implemented over a sequence of $\alpha$ values and with $h$ set to the number of non-zero elements in $\bm{\beta^{\lars}}$ plus ten.
By setting $h$ to this value we are explicitly using knowledge of the sparsity induced by the penalty parameter $\lambda$. It is important to note that in practice such information will not be available and $h$ may not be consistent with the provided $\lambda$. An advantage of the SLOG algorithm is that it does not require a measure of sparsity $h$. The results reported below for the Lai algorithm will change depending on the value of $h$.

\begin{figure}
\subfigure[$\rho=0.1, s=0.90$]{\includegraphics[scale=0.81]{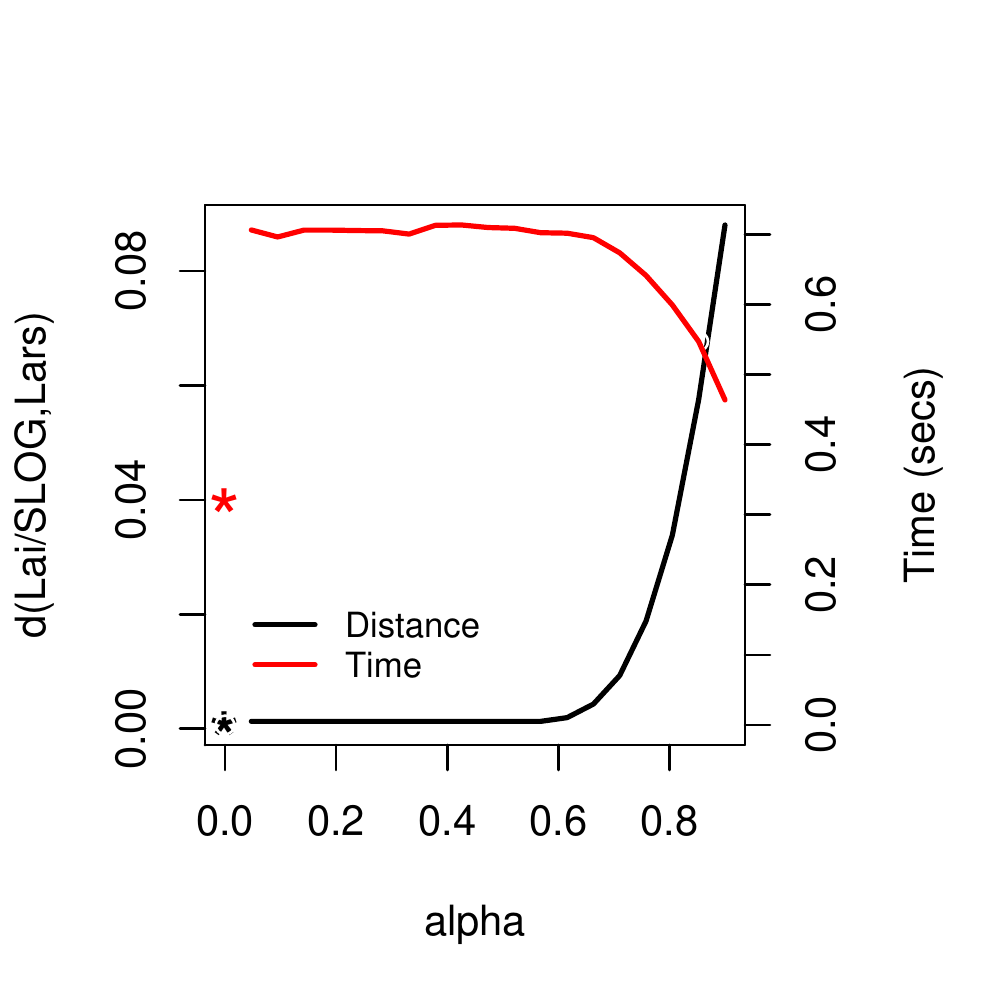} \label{F:lai1}}
\subfigure[$\rho=0.1, s=0.05$]{\includegraphics[scale=0.81]{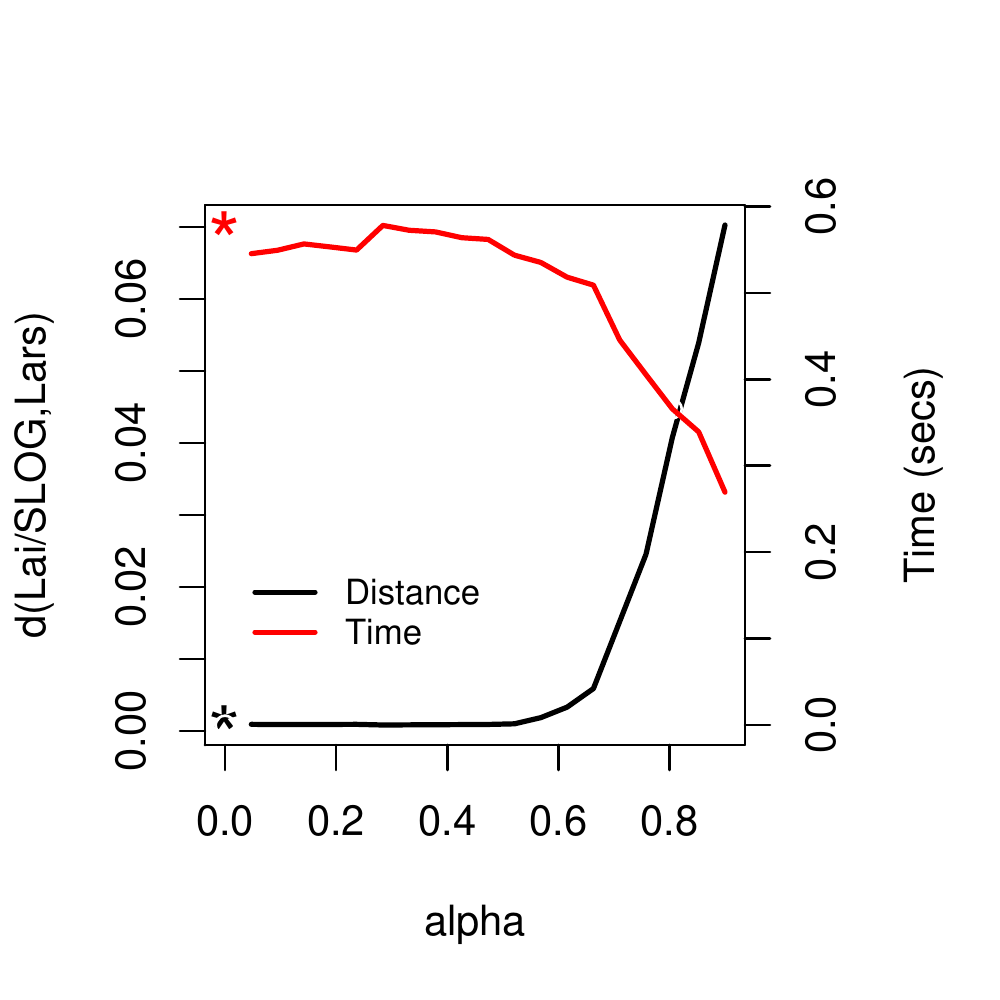} \label{F:lai2}}
\subfigure[$\rho=0.95, s=0.90$]{\includegraphics[scale=0.81]{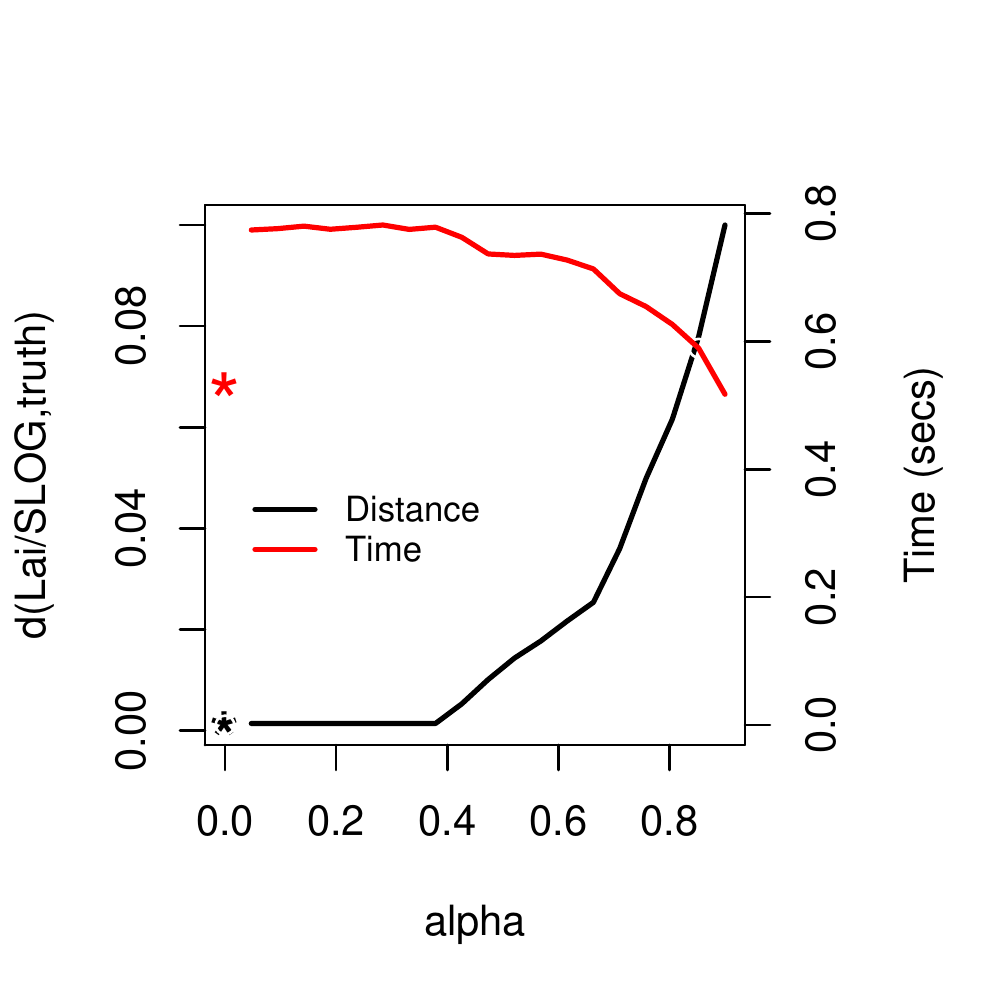} \label{F:lai3}}
\subfigure[$\rho=0.95, s=0.05$]{\includegraphics[scale=0.81]{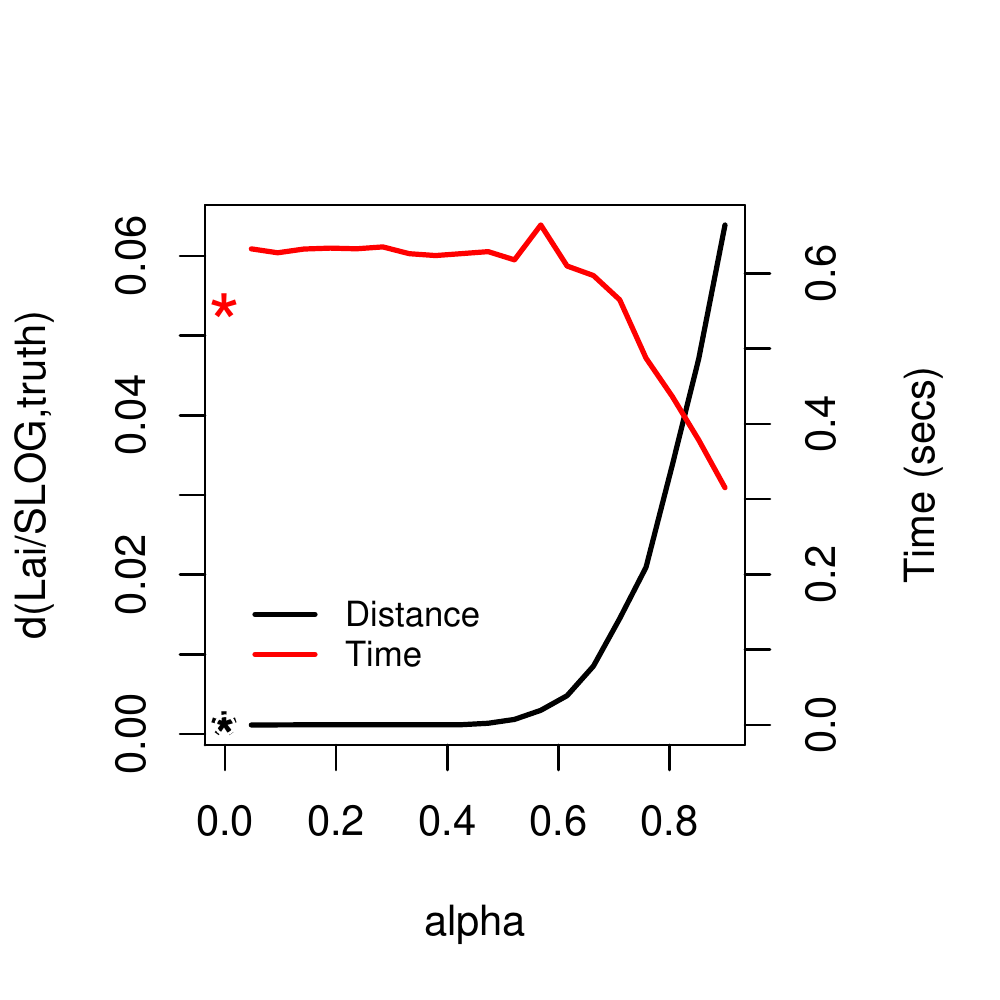} \label{F:lai4}}
\caption{\label{F:lai} Plots of the distance of $\bm{\hat{\beta}}^{\SLOG}$ ($d(\SLOG,\lars)$) and $\bm{\hat{\beta}}^{\Lai}$ ($d(\Lai,\lars)$) from $\bm{\beta^{\lars}}$ and the computation time (seconds) of $\bm{\hat{\beta}}^{\SLOG}$ and $\bm{\hat{\beta}}^{\Lai}$ versus $\alpha$, for data simulated using relationship \eqref{Esim1} with $n=50$, $p=100$ and varying levels of correlation ($\rho$). Varying values for the sparsity ($s$) of $\bm{\beta}^{\lars}$ are considered. In the plots black is used for distance and red for time. The $\ast$ give the distance and timings for the SLOG algorithm. The --- give the distance and timings for the Lai algorithm. The values in the figures are averages over 100 simulations.}
\end{figure}

Figure~$\ref{F:lai}$ contains the results of applying the SLOG and Lai algorithms to data generated using relationship \eqref{Esim1} with $n=50$ and $p=100$. The plots depict the speed of convergence of the two algorithms and the distance of the converged solutions from the exact solution. It is observed that the SLOG algorithm offers the potential for improved performance in settings of high sparsity ($s=0.90$) for both high as well as low multicollinearity. In situations of low sparsity ($s=0.05$) the SLOG and Lai algorithms offer similar performance. Further, similar to rSLOG, it is observed that increasing $\alpha$ in the Lai algorithm offers the potential for computational speed-ups, if the user is willing to sacrifice accuracy. The relative performance of the two algorithms does not appear to be impacted significantly by the multicollinearity of the covariates.

\clearpage
\newpage
\section{A Hybrid Coordinatewise-SLOG Algorithm}
\label{sec:AppD}

\textcolor{black}{
To investigate the hyrbrid approach discussed in Section~\ref{sec:rslog}, we consider an overall covariate space of dimension $n=200$ and $p=1000$. The overall covariate space is constructed of two known blocks of covariates of size $p=500$. We consider two different settings for the pairwise correlation between the covariates within blocks:}
\textcolor{black}{
\begin{enumerate}
\item The covariates in one block are highly multicollinear and in the other block uncorrelated (pairwise correlation of approximately $\rho=0.95$ in one block and $\rho=0$ in the other block).
\item Within each block the covariates are highly multicollinear (pairwise correlation of approximately $\rho=0.95$ in each block).
\end{enumerate}}

\textcolor{black}{
To implement the hybrid approach in setting (1) it would be appropriate to apply rSLOG to the correlated block and CD to the uncorrelated block. Similarly, in setting (2) it would be appropriate to apply rSLOG to both blocks. Additionally, we consider two designs for the correlation between the two blocks of covariates:}
\textcolor{black}{
\begin{enumerate}[label=(\alph*)]
\item An ``idealistic" orthogonal design where the covariates between the two blocks are uncorrelated.
\item A non-orthogonal design where the covariates between the two blocks are mildly correlated.
\end{enumerate}}

\textcolor{black}{
In the case of the orthogonal design, because the coefficient estimates between blocks are independent, fitting rSLOG or CD within blocks is all that is required to implement the hyrbid approach. However, in the non-orthogonal design, due to the dependence of the coefficient estimates between blocks, the estimates obtained from the individual blocks are used as starting values for rSLOG applied to the entire data.  In this last step, estimates that are set to zero within blocks need to be replaced by small non-zero values before being used as starting values. This is necessary to allow for the fact that in the non-orthogonal design coefficients can be incorrectly set to zero within individual blocks.}

\begin{figure}
\hfill
\subfigure[Setting (1)]{\includegraphics[scale=0.8]{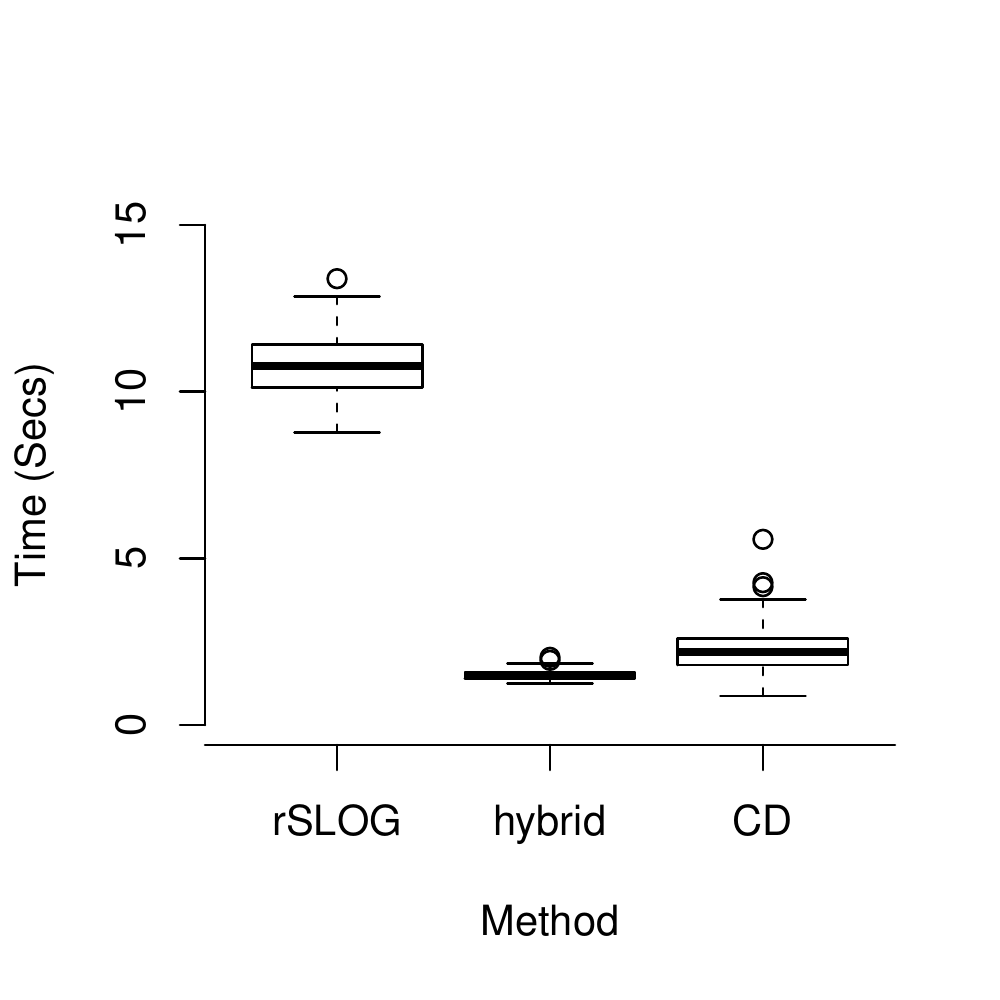} \label{F:start1nc1}}
\hfill
\subfigure[Setting (2)]{\includegraphics[scale=0.8]{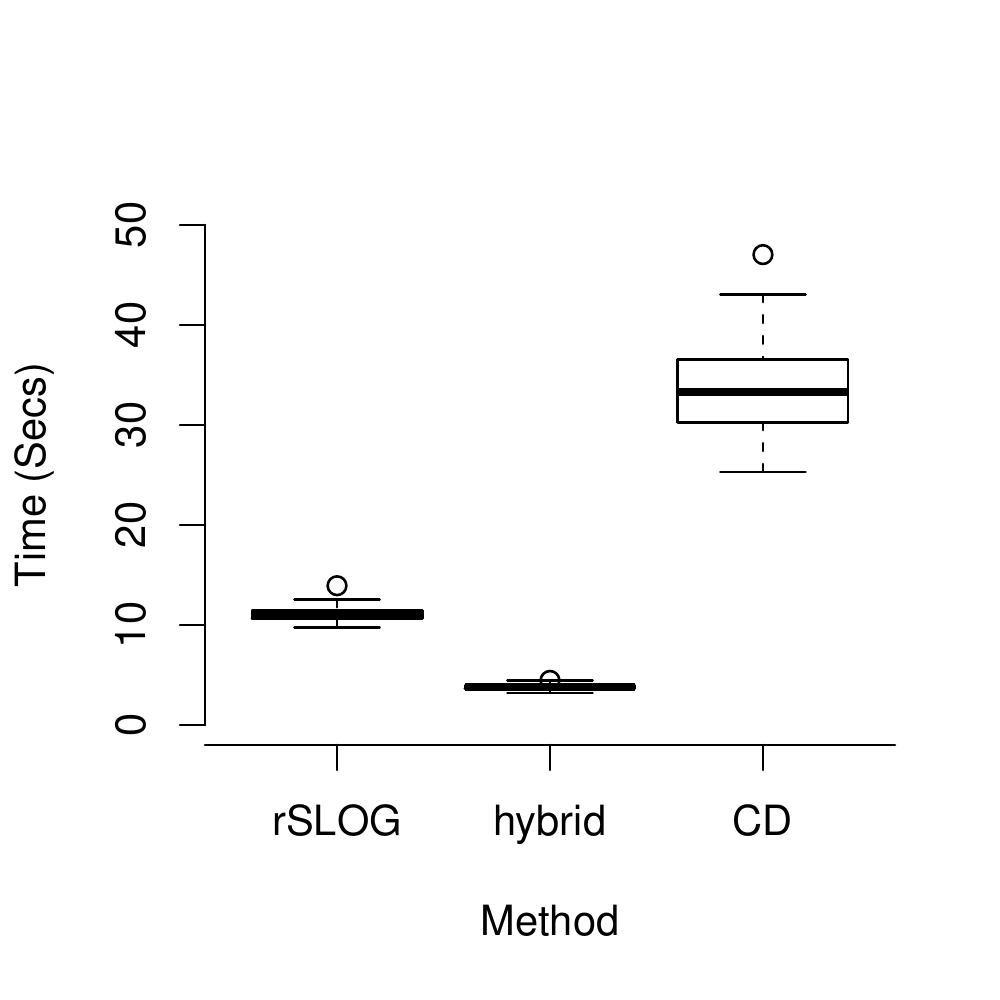} \label{F:start2nc2}}
\hfill
\caption{\label{F:start1} \textcolor{black}{
Comparison of timings for the hybrid approach, rSLOG and CD applied in an orthogonal design setting for simulated datasets of dimension $n=200$ and $p=1000$. The values in the plots are based on 100 datasets simulated using relationship \eqref{Esim1}. A sparsity ($s$) of 0.05 was used for $\bm{\beta}^{\lars}$.}}
\end{figure}

\begin{figure}
\hfill
\subfigure[Setting (1)]{\includegraphics[scale=0.8]{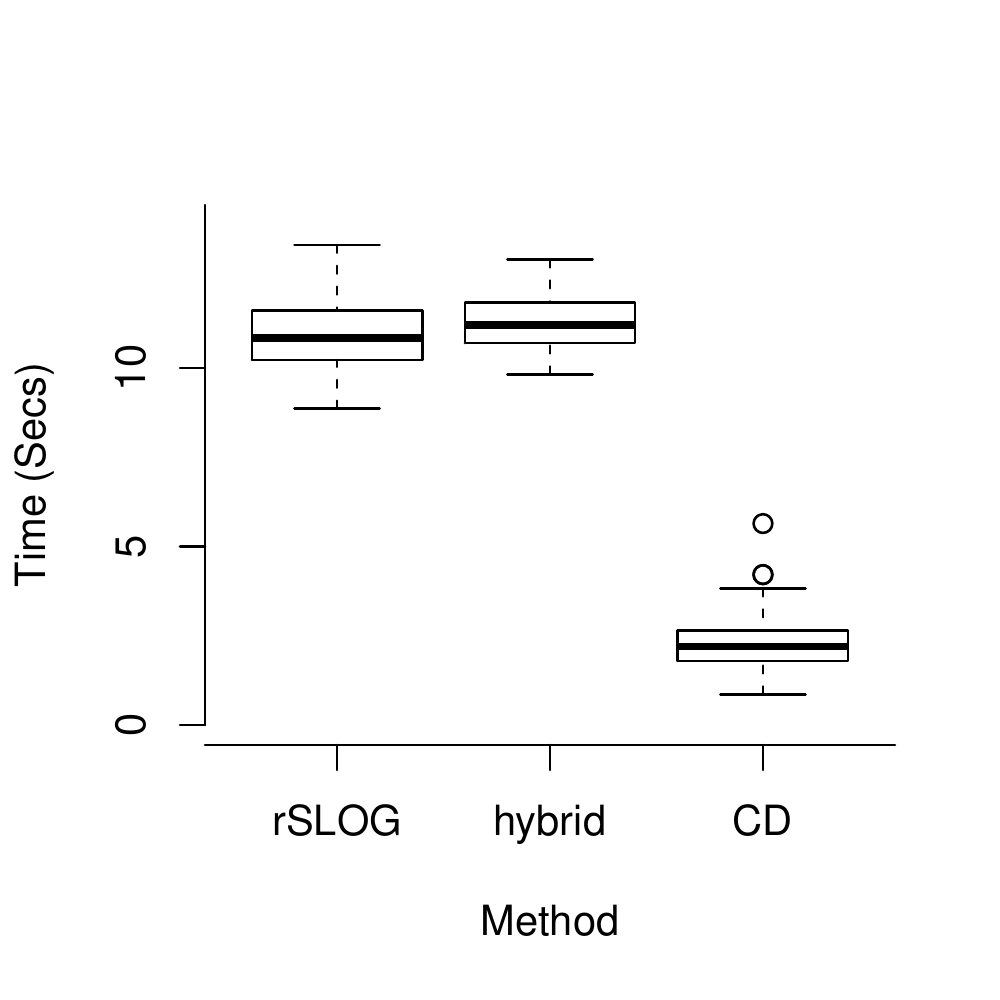} \label{F:start11}}
\hfill
\subfigure[Setting (2)]{\includegraphics[scale=0.8]{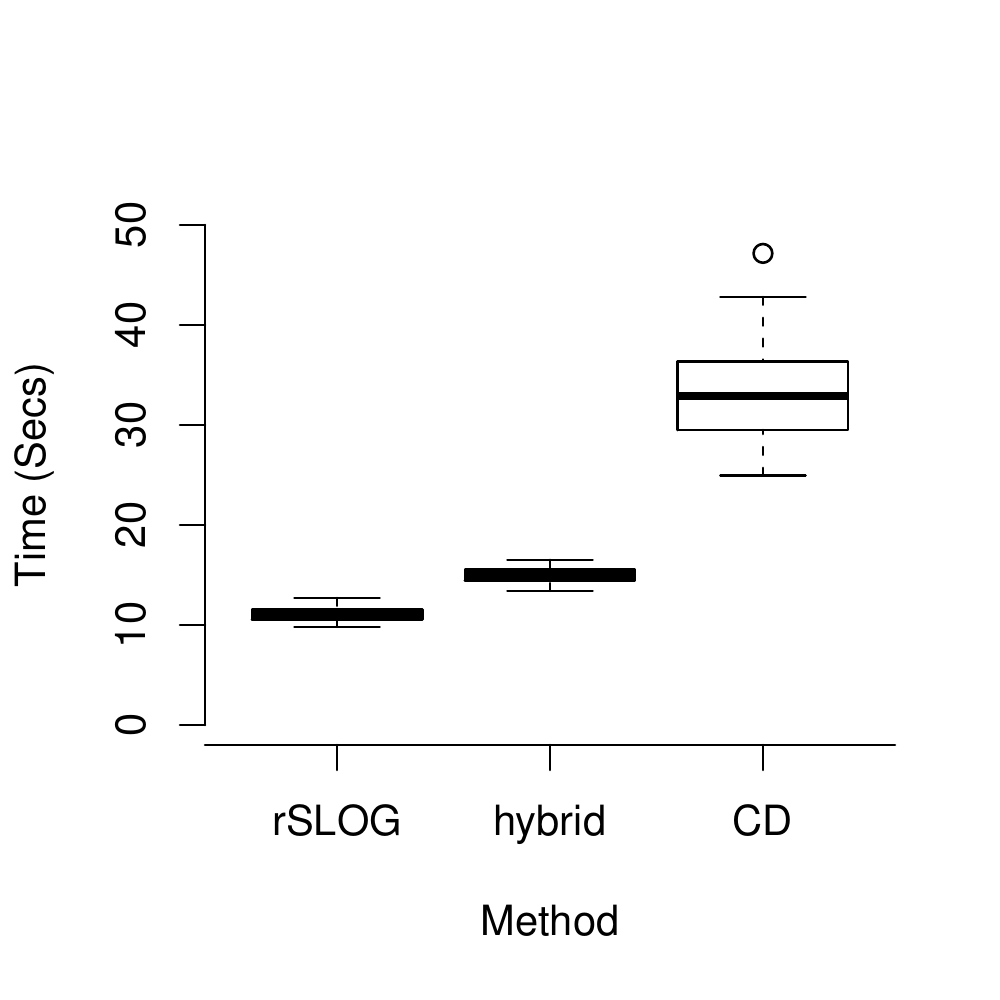} \label{F:start22}}
\hfill
\caption{\label{F:start2} \textcolor{black}{
Comparison of timings for the hybrid approach, rSLOG and CD in a non-orthogonal design setting for simulated datasets of dimension $n=200$ and $p=1000$. The values in the plots are based on 100 datasets simulated using relationship \eqref{Esim1}. A sparsity ($s$) of 0.05 was used for $\bm{\beta}^{\lars}$.}}
\end{figure}

\textcolor{black}{
Figure~\ref{F:start1} and Figure~\ref{F:start2} illustrate the results of applying the hybrid approach, rSLOG and CD in the context of both an orthogonal and non-orthogonal design, respectively.
From Figure~\ref{F:start1} it is evident that the hybrid approach is computationally faster than rSLOG and CD for both of the orthogonal design settings. In the non-orthogonal design settings of Figure~\ref{F:start2} the hyrbid approach is outperformed by CD in setting (1) and rSLOG in setting (2). The primary reason for the slower compute time for the hybrid approach in the non-orthogonal design setting is the need to combine the estimates from the individual blocks.}

\textcolor{black}{
The above simulations illustrate the relative merits of the hybrid approach, rSLOG and CD. In particular, our simulations illustrate: (a) the potential benefits of the hybrid approach in the orthogonal design setting; and (b) the fact that in the non-orthogonal design setting rSLOG is an effective means of overcoming the repeated inversion of a $p \times p$ matrix as required by SLOG. In practice, there are a number of factors that could potentially contribute to the slower compute time for the hybrid approach, compared to rSLOG or CD. These factors include:}
\textcolor{black}{
\begin{enumerate}
\item The need to combine the coefficient estimates obtained from individual blocks within the context of the entire data. The reason for this is that, outside of the orthogonal design setting, the final coefficient estimates obtained within individual blocks will be different to those in the context of the entire data.
\item The fact that, outside of the orthogonal design setting, a given value of the regularization parameter applied to the entire data will yield different coefficient estimates when applied to individual blocks. For example, this fact means that a coefficient estimate of zero within an individual block could be non-zero in the context of the entire data. Importantly, this means that when the coefficient estimates obtained from different blocks are combined, that all $p$ covariates need to be considered, \emph{not} just those with non-zero coefficient estimates.
\item Determining the blocks of covariates: both the number of blocks and how to assign the covariates to blocks.
\end{enumerate}}

\textcolor{black}{
In the non-orthogonal design, the first two of these factors contributed to the observed increase in the computational time of the hybrid approach compared to rSLOG and CD. In practice, the third factor would serve as a pre-screening step before the hybrid approach could be implemented, and hence would further increase the relative compute time of the hybrid approach.}

\clearpage
\newpage
\section{Timing Comparisons in High Sparsity or Low Multicollinearity Settings}
\label{sec:additional}
\textcolor{black}{
In this section we provide some further timing comparisons which illustrate: (1) the relationship between the sample size $n$, sparsity $s$, multicollinearity $\rho$ and the relative compute times of CD and rSLOG; and (2) the fact that rSLOG can offer improved computational speed compared to CD even in high sparsity or low multicollinearity settings.}

\begin{table}[h!]
\begin{small}
\begin{center}
\begin{threeparttable}[!h]
\caption{Ratio of the CD compute time to that of the rSLOG compute time (CD / rSLOG). The reported ratios are averages over 10 simulated datasets. Data are generated using relationship \eqref{Esim1} with $p=1000$ and various values of $s$, $\rho$, and $n$.}
\label{T:furthersims}
\begin{footnotesize}
\begin{tabular}{rrrrrrrrrr}
\hline
\multicolumn{1}{c}{$s$} & \multicolumn{1}{c}{$\rho$} & \multicolumn{1}{c}{$n=300$} & \multicolumn{1}{c}{$n=400$} & \multicolumn{1}{c}{$n=500$} & \multicolumn{1}{c}{$n=600$} & \multicolumn{1}{c}{$n=700$} & \multicolumn{1}{c}{$n=800$} & \multicolumn{1}{c}{$n=900$} & \multicolumn{1}{c}{$n=1000$}\\
\hline
0.70 & 0.30 & 0.00 & 0.00 & 0.01 & 0.01 & 0.02 & 0.02 & 0.03 & 0.05 \\
0.70 & 0.60 & 0.01 & 0.03 & 0.05 & 0.07 & 0.10 & 0.12 & 0.27 & 0.29 \\
0.70 & 0.90 & 0.12 & 0.27 & 0.51 & 0.83 & 0.97 & \textbf{\textcolor{red}{1.05}} & \textbf{\textcolor{red}{1.94}} & \textbf{\textcolor{red}{2.65}} \\
0.40 & 0.30 & 0.01 & 0.03 & 0.08 & 0.09 & 0.13 & 0.26 & 0.24 & 0.32 \\
0.40 & 0.60 & 0.09 & 0.24 & 0.33 & 0.50 & 0.77 & \textbf{\textcolor{red}{1.24}} & \textbf{\textcolor{red}{1.26}} & \textbf{\textcolor{red}{1.86}} \\
0.40 & 0.90 & 0.70 & \textbf{\textcolor{red}{1.24}} & \textbf{\textcolor{red}{2.44}} & \textbf{\textcolor{red}{3.22}} & \textbf{\textcolor{red}{4.52}} & \textbf{\textcolor{red}{7.58}} & \textbf{\textcolor{red}{7.84}} & \textbf{\textcolor{red}{11.48}} \\
0.10 & 0.30 & 0.09 & 0.17 & 0.31 & 0.52 & 0.80 & \textbf{\textcolor{red}{1.07}} & \textbf{\textcolor{red}{1.23}} & \textbf{\textcolor{red}{4.66}} \\
0.10 & 0.60 & 0.48 & 0.76 & \textbf{\textcolor{red}{1.28}} & \textbf{\textcolor{red}{1.64}} & \textbf{\textcolor{red}{2.53}} & \textbf{\textcolor{red}{5.97}} & \textbf{\textcolor{red}{7.97}} & \textbf{\textcolor{red}{12.29}} \\
0.10 & 0.90 & \textbf{\textcolor{red}{1.78}} & \textbf{\textcolor{red}{4.80}} & \textbf{\textcolor{red}{7.01}} & \textbf{\textcolor{red}{9.81}} & \textbf{\textcolor{red}{14.71}} & \textbf{\textcolor{red}{22.73}} & \textbf{\textcolor{red}{33.27}} & \textbf{\textcolor{red}{53.34}} \\
  \hline
\hline
\end{tabular}
\end{footnotesize}
\begin{scriptsize}
\begin{tablenotes}[flushleft]
\item Note: glmnet convergence criterion is 1e-13 and $\theta=1e-18$.
\end{tablenotes}
\end{scriptsize}
\end{threeparttable}
\end{center}
\end{small}
\end{table}

\textcolor{black}{
Table~\ref{T:furthersims} contains timing comparisons between CD and rSLOG for $p=1000$ and varying values of $n$, $s$ and $\rho$. We now proceed to understand further the role of $n$, $s$ and $\rho$ in the relative compute times. The results suggest the relative compute time of rSLOG to that of CD improves}

\begin{itemize}
\item as $\rho$ \emph{increases} with $n$ and $s$ fixed (rSLOG performs better in the presence of higher multicollinearity);
\item as $s$ \emph{decreases} with $n$ and $\rho$ fixed (rSLOG performs better when the model is less sparse);
\item as $n$ \emph{increases} with $\rho$ and $s$ fixed (rSLOG performs better for higher sample sizes).
\end{itemize}

\textcolor{black}{
The results of this section demonstrate that rSLOG can offer improvements in compute times compared to CD even in high sparsity or low multicollinearity settings. For example, in Table~\ref{T:furthersims} there are settings with $s=0.70$ (high sparsity) or $\rho=0.30$ (low multicollinearity) where rSLOG is observed to offer improved computational speeds compared to CD. This is an important observation that highlights the utility of rSLOG outside of low sparsity and high multicollinearity settings.}
\\ \\
\textcolor{black}{
\noindent {\it Remarks:}
\begin{enumerate}
\item The values reported in Table~\ref{T:furthersims} are based on only 10 simulations because CD takes an extremely long time to converge in some of the settings considered. For example, in the setting $s=0.10$ and $\rho=0.90$ CD takes approximately one hour to converge.
\item To be consistent with the other simulations reported in the paper, the values reported in Table~\ref{T:furthersims} are based on a glmnet convergence criterion of 1e-13. Different values of this criterion will result in different relative compute times for CD versus rSLOG. In particular, reducing the convergence criteria for both CD and rSLOG results in larger decreases in compute times for CD compared to rSLOG. Stricter convergence criteria favour rSlOG over CD. For sufficiently ``lax" convergence criteria, such as 1e-7, the relative compute times of CD versus rSLOG reported in Table~\ref{T:furthersims} are all less than one.
\end{enumerate}
}

\clearpage
\newpage
\section{Optimal Level of Sparsity for the Infrared Spectroscopy Data}
\label{sec:AppC}

\textcolor{black}{
Figure~$\ref{F:cookieSpars}$ contains the 10-fold cross-validated mean squared prediction error versus the level of sparsity ($s$) for the infrared spectroscopy data of Section~\ref{sec:ISdata}. The figure illustrates that the optimal $s$ is close to zero and that the prediction error decreases as $s$ decreases from 1 to 0.}

\begin{figure}[h]
 \begin{center}
 \includegraphics[scale=0.8]{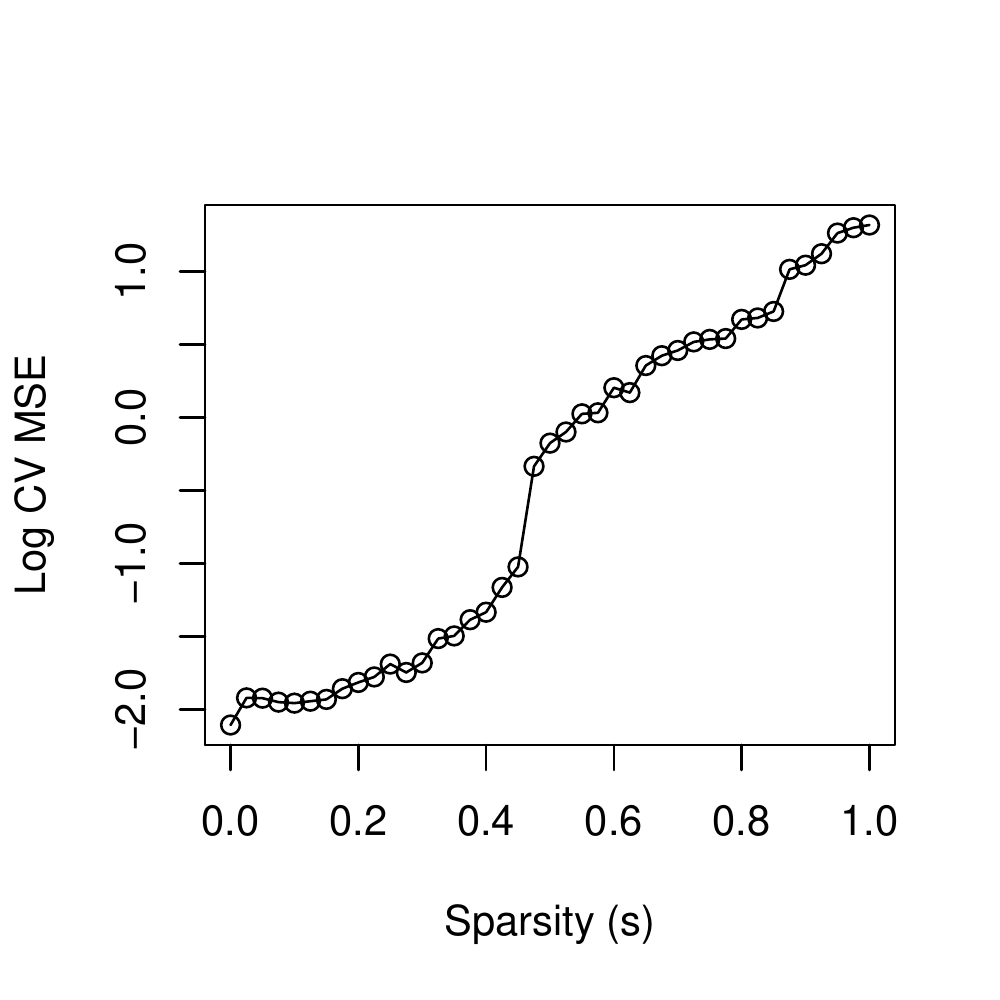}
 \caption{\label{F:cookieSpars} Cross-validated mean-squared-prediction-error versus level of sparsity ($s$) for the Infrared Spectroscopy Data of Section~\ref{sec:ISdata}.}
 \end{center}
 \end{figure}

\clearpage
\newpage
\section{R Code for Running SLOG/rSLOG}

\begin{verbatim}
SLOG <- function(x,y,l,times,thresh,start=NULL){
  #function for implementing SLOG/rSLOG
  #x: covariate data
  #y: response data
  #l: value of lasso regularizaton parameter
  #times: convergence criteria - difference between successive coefficient vectors
  #thresh: below this value estimates are set to 0 (runs rSLOG)
  #start: allows starting values other than sign(xty)*l/p to be specified.

  xtx<-crossprod(x)
  xty<-crossprod(x,y)
  p<-length(xty)
  n<-length(y)
  b.cur <- sign(xty)*l/p
  if(!is.null(start))   b.cur <- start
  b.old<-b.cur
  vin<-1:p
  temp<-rep(0,p)

  conv=FALSE
  k<-1

  while(conv==FALSE){

    b <- b.cur[vin]
    p <- length(b)
    B.inv  <- diag(l/abs(b),nrow=p)
    b.new<-tcrossprod(chol2inv(chol(B.inv+xtx[vin,vin])),t(xty[vin]))

    temp[vin]<-as.vector(b.new)
    temp[abs(temp)<=thresh]<-0
    b.new<-temp
    vin<-which(b.new!=0)

    b.cur <- as.vector(b.new)
    conv<-(sqrt(sum((b.cur-b.old)^2))/sqrt(sum(b.old^2)))<times
    b.old<-b.cur
    k<-k+1

  }

  return(c(k,b.cur))
}
\end{verbatim}

\end{document}